\documentclass[12pt]{article}%
\usepackage{amsfonts, amsmath, amsthm, amssymb, bbm}
\usepackage{multicol,enumitem,graphicx,stmaryrd,color}
\usepackage{url,fancyhdr}
\usepackage[a4paper, top=2.5cm, bottom=2.5cm, left=2.2cm, right=2.2cm]{geometry}
\usepackage{algorithm, float}

\usepackage[driverfallback=hypertex,pagebackref=true,colorlinks]{hyperref}
	\hypersetup{linkcolor=[rgb]{.7,0,0}}
	\hypersetup{citecolor=[rgb]{0,.7,0}}
	\hypersetup{urlcolor=[rgb]{.7,0,.7}}

\newtheorem{theorem}{Theorem}
\newtheorem{lemma}[theorem]{Lemma}
\newtheorem{definition}[theorem]{Definition}
\newtheorem{observation}[theorem]{Observation}
\newtheorem{conjecture}[theorem]{Conjecture}

\newtheorem*{remark}{Remarks}
\newtheorem{proposition}[theorem]{Proposition}

\renewcommand{\implies}{\Longrightarrow}
\newcommand{\st}{\,|\,} 
\newcommand{\N}{\mathbb{N}} 

\newcommand{\set}[1]{\left\{ #1 \right\}}   
\newcommand{\brac}[1]{\left( #1 \right)}    
\newcommand{\sqbrac}[1]{\left[ #1 \right]}    

\newcommand{\val}{\text{val}} 	

\newcommand{\norm}[1]{\left\vert #1 \right\vert}

\newcommand{\mc}{\mathcal}

\newcommand{\F}{\mathbb F}

\newcommand{\NormF}{\norm{\mathbb F}}
\newcommand{\probname}{\text{Tensor\textsubscript{k}}}

\newcommand{\DTIME}{\mathrm{DTIME}}
\newcommand{\NTIME}{\mathrm{NTIME}}

\sloppy
\begin{document}

\title{Block Rigidity: Strong Multiplayer Parallel Repetition implies Super-Linear Lower Bounds for Turing Machines}
\author{Kunal Mittal\thanks{Department of Computer Science, Princeton University. Research supported by the Simons Collaboration on Algorithms and Geometry, by a Simons Investigator Award and by the National Science Foundation grants No. CCF-1714779, CCF-2007462.} \and Ran Raz\footnotemark[1]}
\date{}
\maketitle

\begin{abstract}
We prove that a sufficiently strong parallel repetition theorem for a special case of multiplayer (multiprover) games implies super-linear lower bounds for multi-tape Turing machines with advice.
To the best of our knowledge, this is 
the first connection between parallel repetition and lower bounds for time complexity
and the first major potential implication of a parallel repetition theorem with more than two players.

Along the way to proving this result, we define and initiate a study of {\it block rigidity}, a weakening of Valiant's notion of {\it rigidity}~\cite{Val77}. While rigidity was originally defined for matrices, or, equivalently, for (multi-output) linear functions, we extend and study both rigidity and block rigidity for general (multi-output) functions.
Using techniques of Paul, Pippenger, Szemer{\'{e}}di and Trotter \cite{PPST83},
we show that a block-rigid function cannot be computed by multi-tape Turing machines that run in linear (or slightly super-linear) time, even in the non-uniform setting, where the machine gets an arbitrary advice tape.

We then describe a class of multiplayer games, such that, a sufficiently strong parallel repetition theorem for that class of games implies an explicit block-rigid function.
The games in that class have the following property that may be of independent interest: for every random string for the verifier (which, in particular, determines the vector of queries to the players), there is a unique correct answer for each of the players, and the verifier accepts if and only if all answers are correct. We refer to such games as {\it independent games}.
The theorem that we need is that parallel repetition reduces the value of games in this class from $v$ to $v^{\Omega(n)}$, where $n$ is the number of repetitions.

As another application of block rigidity, we show conditional size-depth tradeoffs for boolean circuits, where the gates compute arbitrary functions over large sets.

\end{abstract}
\newpage
\section{Introduction}

We study relations between three seemingly unrelated topics: parallel repetition of multiplayer games, a variant of Valiant's notion of rigidity, that we refer to as block rigidity, and proving super-linear lower bounds for Turing machines with advice.

\subsection{Super-Linear Lower Bounds for Turing Machines}

Deterministic multi-tape Turing machines are the standard model of computation for defining time-complexity classes. Lower bounds for the running time of such machines are known by the time-hierarchy theorem~\cite{HS65}, using a diagonalization argument.
Moreover, the seminal work of Paul, Pippenger, Szemer{\'{e}}di and Trotter 
gives a separation of non-deterministic linear time from deterministic linear time, for multi-tape Turing machines~\cite{PPST83} (using ideas from \cite{HPV77} and \cite{PR80}). That is, it shows that
$\DTIME(n) \subsetneq \NTIME(n)$. 
This result has been slightly improved to give $\DTIME(n\sqrt{\log^*{n}}) \subsetneq \NTIME(n\sqrt{\log^*{n}})$~\cite{San01}.

However, the above mentioned lower bounds do not hold in the non-uniform setting, where the machines are allowed to use arbitrary advice depending on the length of the input. In the non-uniform setting, no super-linear lower bound is known for the running time of deterministic multi-tape Turing machines. Moreover, such bounds are not known even for a multi-output function.



\subsection{Block Rigidity}

The concept of matrix rigidity was introduced by Valiant as a means to prove super-linear lower bounds against circuits of logarithmic depth~\cite{Val77}. Since then, it has also found applications in communication complexity~\cite{Raz89} (see also~\cite{Wun12,Lok01}).
We extend Valiant's notion of rigid matrices to the concept of {\it rigid functions}, and further to {\it block-rigid functions}.
We believe that these notions are of independent interest.
We note that block-rigidity is a weaker condition than rigidity and hence it may be easier to find explicit block-rigid functions.
Further, our result gives a new application of rigidity.

Over a field $\F$, a matrix $A\in \F^{n\times n}$ is said to be an $(r,s)$-rigid matrix if it is not possible to reduce the rank of $A$ to at most $r$, by changing at most $s$ entries in each row of $A$. Valiant showed that if $A\in \F^{n\times n}$ is $(\epsilon n, n^{\epsilon})$-rigid for some constant $\epsilon >0$, then $A$ is not computable by a linear-circuit of logarithmic depth and linear size. As in many problems in complexity, the challenge is to find explicit rigid matrices. By explicit, we mean that a polynomial time deterministic Turing machine should be able to output a rigid matrix $A\in \F^{n\times n}$ on input~$1^n$. The best known bounds on explicit rigid matrices are far from what is needed to get super-linear circuit lower bounds 
(see \cite{Fri93,SS97,SSS97,Lok00,Lok06,KLPS14,AW17,GT18,AC19,BHPT20}).

We extend the above definition to functions $f:\set{0,1}^n\to \set{0,1}^n$, by saying that $f$ is \emph{not} an $(r,s)$-rigid function if there exists a subset $X\subseteq \set{0,1}^n$ of size at least $2^{n-r}$, such that over $X$, each output bit of $f$ can be written as a function of some $s$ input bits (see Definition~\ref{def:rig_func}).
By a simple counting argument (see Proposition~\ref{prop:rig_func_exist}), it follows that random functions are rigid with good probability.

We further extend this definition to what we call block-rigid functions (see Definition~\ref{def:block_rig_func}). For this, we'll consider vectors $x\in \set{0,1}^{nk}$, which are thought of as composed of $k$ blocks, each of size $n$.
We say that a function $f:\set{0,1}^{nk}\to \set{0,1}^{nk}$ is \emph{not} an $(r,s)$-block-rigid function, if there exists a subset $X\subseteq \set{0,1}^{nk}$ of size at least $2^{nk-r}$, such that over $X$, each \emph{output block} of $f$ can be written as a function of some $s$ \emph{input blocks}.

We conjecture that it is possible to obtain large block-rigid functions, using smaller rigid functions. For a function $f:\set{0,1}^k\to \set{0,1}^k$, we define the function $f^{\otimes n}:\set{0,1}^{nk}\to \set{0,1}^{nk}$ as follows.
For each $x = (x_{ij})_{i\in[n],j\in [k]} \in \set{0,1}^{nk}$, we define $f^{\otimes n}(x)$ to be the vector obtained by applying $f$ to $(x_{i1},\dots,x_{ik})$, in place for each $i\in[n]$ (see Definition~\ref{def:f_tens_idn}).

\begin{conjecture}\label{intro_conj:rig_amp_to_block_rig}
	There exists a universal constant $c>0$ such that the following is true.
	Let $f: \set{0,1}^k\to \set{0,1}^k$ be an $(r, s)$-rigid function, and $n\in \N$. Then, $f^{\otimes n}: \set{0,1}^{nk}\to \set{0,1}^{nk}$ is a $(cnr, cs)$-block-rigid function.
\end{conjecture}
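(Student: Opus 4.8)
\medskip
\noindent\textbf{A proof strategy.}
I would argue the contrapositive. Suppose $f^{\otimes n}$ is \emph{not} $(cnr,cs)$-block-rigid, so there is a set $X\subseteq(\set{0,1}^k)^n$ with $\abs{X}\ge 2^{nk-cnr}$ and, for every output-block index $j\in[k]$, a set $T_j\subseteq[k]$ of input-block indices with $\abs{T_j}\le cs$, such that over $X$ the $j$-th output block of $f^{\otimes n}$ — that is, $\brac{f(x_i)_j}_{i\in[n]}$, writing $x=(x_1,\dots,x_n)$ with $x_i\in\set{0,1}^k$ — is a function of the input blocks indexed by $T_j$. The target is to produce $Y\subseteq\set{0,1}^k$ with $\abs{Y}\ge 2^{k-r}$ over which, for every $j$, the bit $f(y)_j$ is a function of at most $s$ coordinates of $y$; this contradicts $(r,s)$-rigidity of $f$.

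First I would dispose of the case in which $X$ is a combinatorial box. If $X=Z^n$ for some $Z\subseteq\set{0,1}^k$, then fixing all copies other than one distinguished copy to arbitrary elements of $Z$ — which the product structure permits independently — makes $f(y)_j$, as $y$ ranges over $Z$, a function of $y$ restricted to $T_j$; and $\abs{Z}=\abs{X}^{1/n}\ge 2^{k-cr}$, so for $c\le 1$ one may take $Y=Z$, of co-dimension at most $cr\le r$ and sparsity $\abs{T_j}\le cs\le s$. Thus the conjecture is immediate for product sets, even with $c=1$, and the content is entirely in handling sets $X$ that are far from products. The route I would try is to show that a block-sparse set of near-extremal size must, after fixing at most a few input columns, contain a large combinatorial box that is genuinely two-dimensional only on the $\le cs$ input columns that each output block actually reads: concretely, to find a distinguished copy $i^\ast$, a set $Y\subseteq\set{0,1}^k$ with $\abs{Y}\ge 2^{k-r}$, and a nonempty family $W$ of assignments to the other $n-1$ copies with $Y\times W\subseteq X$ (after reindexing), whence the product case applies. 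Here block-sparsity should be exactly what makes such a box findable — the columns outside $\bigcup_j T_j$ are irrelevant to $f^{\otimes n}$ on $X$ and can be collapsed for free, and each $T_j$ is small — so one is really hunting a box inside a dense set restricted to a few coordinates. An alternative is induction on $n$: slicing off the last copy, each slice $X_w=\set{(x_1,\dots,x_{n-1}) : (x_1,\dots,x_{n-1},w)\in X}$ is block-sparse for $f^{\otimes(n-1)}$ with the same sets $T_j$, so an $(n-1)$-bound applies to it; but $\abs{X}=\sum_w\abs{X_w}$ then only yields co-dimension $c(n-1)r$ rather than the required $cnr$, losing one additive $cr$ per step. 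Closing this would need a strengthened inductive statement — for instance, that a block-sparse set of near-maximal size becomes an actual product after fixing a few input columns — so that the slicing amortises over all $n$ copies without loss.

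The step I expect to be the real obstacle, and the reason the statement is posed as a conjecture, is that block-sparsity of $f^{\otimes n}$ imposes no constraint whatsoever on how the $j$-th output block of $f^{\otimes n}$ depends on its own $n$ entries: the map $x\mapsto\brac{f(x_i)_j}_i$ may mix the copies arbitrarily over $X$. Hence one cannot project $X$ onto a single copy and inherit sparsity; one is pushed to fix the other $n-1$ copies, but fixing $n-1$ of the copies of a large but correlated $X$ can annihilate all of the residual entropy of the distinguished copy — think of $X$ a linear code in which any $n-1$ of its $n$ length-$k$ coordinate-blocks determine the remaining one — so the naive co-dimension bound collapses once $n$ is large. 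The work lies in ruling out (or extracting a product despite) such code-like correlations among \emph{block-sparse} sets; this is precisely the ``large subset versus product'' phenomenon underlying the difficulty of direct-product and parallel-repetition theorems, which — fittingly for this paper — is where I expect the hard part of the argument to sit.
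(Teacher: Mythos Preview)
The statement is a \emph{conjecture} in the paper, not a theorem; the paper does not prove it. What the paper offers instead is a reformulation: by Observations~\ref{obs:rig_func_game_val} and~\ref{obs:block_rig_func_game_val}, $f$ is $(r,s)$-rigid iff every game $\mc G_{\mc S}$ (with $\abs{S_j}=s$) has value below $2^{-r}$, and $f^{\otimes n}$ is $(r,s)$-block-rigid iff every repeated game $\mc G_{\mc S}^{\otimes n}$ has value below $2^{-r}$. Conjecture~\ref{intro_conj:rig_amp_to_block_rig} then becomes exactly the parallel-repetition bound $\val(\mc G_{\mc S}^{\otimes n})\le \val(\mc G_{\mc S})^{cn}$, which the paper states separately as Conjecture~\ref{conj:par_rep} and leaves open.

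Your contrapositive analysis is correct as far as it goes and is, in fact, the combinatorial unwinding of the paper's game-value reformulation: your set $X$ is precisely the winning set of a strategy $(g_1,\dots,g_k)$ in $\mc G_{\mc S}^{\otimes n}$, and your goal of extracting a large $Y\subseteq\set{0,1}^k$ on which each $f(y)_j$ depends only on $y|_{T_j}$ is exactly the statement that the single-copy game has value at least $2^{-r}$. Your product-set case $X=Z^n$ corresponds to product strategies, for which $\val(\mc G_{\mc S}^{\otimes n})=\val(\mc G_{\mc S})^n$ is immediate; your ``code-like $X$'' obstruction is the standard barrier to naive parallel-repetition arguments. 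So you have correctly located both the content and the difficulty --- there is no missing idea on your side, but also no proof, because none is known. The paper's added value over your sketch is to isolate the relevant class of games (what it calls \emph{independent games}, where for every verifier randomness each player has a unique correct answer) and to argue that the needed bound, while open for $k\ge 3$ players in general, may be tractable for this restricted class.
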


We prove the following theorem. It is restated and proved as Theorem \ref{thm:tm_lb_explicit} in Section \ref{sec:tms}.
\begin{theorem}\label{intro_thm:tm_lb_explicit}
Let $t:\N\to\N$ be any function such that $t(n) = \omega(n)$.
Assuming Conjecture $\ref{intro_conj:rig_amp_to_block_rig}$, there exists an (explicitly given) function $f:\set{0,1}^*\to \set{0,1}^*$ such that
\begin{enumerate}
	\item On inputs $x$ of length $n$ bits, the output $f(x)$ is of length at most $n$ bits.
	\item The function $f$ is computable by a multi-tape deterministic Turing machine that runs in time $O(t(n))$ on inputs of length $n$.
	\item The function $f$ is not computable by any multi-tape deterministic Turing machine that takes advice and runs in time $O(n)$ on inputs of length $n$.
\end{enumerate}
\end{theorem}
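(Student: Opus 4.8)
The plan is to take, on inputs of length $N$, the hard function $f$ to be $f_k^{\otimes n}$, where $k=k(N)$ is a very slowly growing block size, $n=\lfloor N/k\rfloor$, and $f_k\colon\set{0,1}^k\to\set{0,1}^k$ is a rigid function on only $k$ bits. A rigid $f_k$ with good parameters exists by the counting argument of Proposition~\ref{prop:rig_func_exist}, and---this is the reason for keeping $k$ tiny---once $k$ is, say, below $\log\log\log N$, such an $f_k$ can be located by brute-force search over the $2^{k2^k}=N^{o(1)}$ functions $\set{0,1}^k\to\set{0,1}^k$, checking the finite rigidity condition for each; this is what makes the construction explicit. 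Conjecture~\ref{intro_conj:rig_amp_to_block_rig} then upgrades the rigidity of $f_k$ to block-rigidity of $g_N:=f_k^{\otimes n}$ with respect to the natural partition of $\set{0,1}^{nk}$ into $k$ blocks of size $n$; the earlier lower bound---a block-rigid function is not computable by a multi-tape Turing machine with advice in linear time---then yields Item~3, and the fact that evaluating $g_N$ is just $n$ table lookups into $f_k$ yields Item~2.

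In detail, given $t$, set $k(N)=\min\bigl\{\lfloor\log_2(t(N)/N)\rfloor,\ \lfloor\tfrac12\log\log\log N\rfloor\bigr\}$, which tends to infinity because $t(N)/N\to\infty$ and which satisfies $k2^k=o(\log N)$. Put $n=\lfloor N/k\rfloor$, let $f_k$ be the first (in a fixed order) $(\Omega(k),\Omega(k))$-rigid function $\set{0,1}^k\to\set{0,1}^k$ (which exists for large $k$ by Proposition~\ref{prop:rig_func_exist}, the finitely many small $N$ being handled trivially), and define $f$ on an input $x$ of length $N$ to be $f_k^{\otimes n}$ applied to the first $nk$ bits of $x$, ignoring the remaining $N-nk<k$ bits; the output then has length $nk\le N$ (Item~1). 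For Item~2, a multi-tape Turing machine computes $k(N)$ and the table of $f_k$ by brute force in time $2^{O(k2^k)}=N^{o(1)}$, and then sweeps the input once performing $n$ table lookups of cost $O(2^k)$ each, for a total of $N^{o(1)}+O(N\cdot 2^{k(N)})=O(t(N))$, using $2^{k(N)}\le t(N)/N$. (If $t$ is not time-constructible, the construction is to be read as uniform relative to $t$.)

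For Item~3, suppose towards a contradiction that some $d$-tape Turing machine $M$ with advice computes $f$ in time at most $cN$. Restricting attention to the first $nk$ input coordinates, $M$ (with its advice for length $N$) computes $g_N=f_k^{\otimes n}$ in linear time. By Conjecture~\ref{intro_conj:rig_amp_to_block_rig}, $g_N$ is $(\Omega(nk),\Omega(k))$-block-rigid. On the other hand, applying the earlier block-rigidity lower bound to the $d$-tape machine $M$ with the block partition ``$k$ blocks of size $n$'' (so that the block-respecting form of $M$ has only $\Theta(k)$ time-segments), one obtains a subset $X\subseteq\set{0,1}^{nk}$ of size at least $2^{nk-R_0}$ over which every output block of $g_N$ depends on at most $S_0$ input blocks; here $R_0$ (the interface data that must be fixed at a separator of the $\Theta(k)$-node segment graph) satisfies $R_0=o(nk)$ and $S_0$ (the size of the resulting graph components) satisfies $S_0=o(k)$, provided $k$ is large in terms of $d$. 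Since $k=k(N)\to\infty$, for every sufficiently large $N$ (depending on $d$ and $c$) we get $R_0\le\Omega(nk)$ and $S_0\le\Omega(k)$; as being $(R,S)$-block-rigid implies being $(R',S')$-block-rigid for all $R'\le R$, $S'\le S$, it follows that $g_N$ is $(R_0,S_0)$-block-rigid, contradicting the existence of $X$. Hence no such $M$ exists, which is Item~3.

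I expect the parameter matching in Item~3 to be the main obstacle. One has to run the block-respecting and separator machinery behind the earlier lower bound with the tape-block size chosen equal to $n$, and then verify that the resulting ``amount of fixed data'' $R_0$ and ``output-to-input spread'' $S_0$ are both eventually dominated---for $k$ large relative to the number of tapes $d$---by the $\Omega(nk)$ and $\Omega(k)$ that the Conjecture hands us from an $(\Omega(k),\Omega(k))$-rigid function on $k$ bits, all while keeping $k=k(N)$ small enough (as a function of $t$) that $g_N=f_k^{\otimes n}$, including the brute-force search for $f_k$, stays computable within time $O(t(n))$. Since $k$ is in any case forced to grow---a fixed block size cannot defeat an $O(d)$-type spread for \emph{every} number of tapes $d$---it is this simultaneous, three-way balance (time budget versus fixed-data bound versus spread bound), rather than any single inequality, that is delicate; once a legal choice of $k(N)$ is fixed, the remaining estimates are routine.
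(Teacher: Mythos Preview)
Your approach is correct but takes a genuinely different route from the paper. The paper never searches for a rigid $f_k$ at all; instead it makes the truth table of $f$ \emph{part of the input}. The explicit function is the evaluation problem $\probname$: on input $(f,x)$ of length $m=2^kk+nk$ (with $k=k(n)$ chosen so that $k=\omega(1)$, $k=o(\log n)$, and $nk2^k\le t(m)$), output $f^{\otimes n}(x)$. Item~2 then follows from Observation~\ref{obs:sup_lin_tm_ub}, and Item~3 is exactly Theorem~\ref{thm:tm_advice_lb}; the whole proof is two lines. Your version instead locates a specific rigid $f_k$ by brute force, which forces $k$ down to $O(\log\log\log N)$ so that the $2^{O(k2^k)}$ search is $N^{o(1)}$. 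The paper's trick of pushing the table into the input buys a larger range for $k$ (all the way to $o(\log n)$) and eliminates the search entirely, at the cost of a slightly less ``pure'' hard function (its input carries the table). Your approach yields a hard function that depends only on $x$, which is arguably cleaner as a final object, but the proof is longer.

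One small slip: if the input is laid out so that each of the $k$ blocks of size $n$ is contiguous (as is needed for the block-respecting argument in Item~3), then ``sweep the input once performing $n$ table lookups'' does not work directly for Item~2, since the $k$ bits $(x_{i1},\dots,x_{ik})$ needed for one lookup are $n$ apart on the tape. You first need the $O(nk^2)$ rearrangement step (as in Observation~\ref{obs:sup_lin_tm_ub}); this is harmless since $nk^2=o(N\cdot 2^k)=O(t(N))$ under your choice of $k$. Also, when invoking the lower bound in Item~3 you should note that the machine $M$ acts on inputs of length $N$, not $nk$; pad the last $N-nk<k$ bits to zero and observe that the resulting machine on $\{0,1\}^{nk}$ still runs in time $O(nk)$. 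With these two cosmetic fixes your argument goes through.
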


More generally, we show that families of block-rigid functions cannot be computed by non-uniform Turing machines running in linear-time.
This makes it interesting to find such families that are computable in polynomial time. 
The following theorem is restated as Theorem \ref{thm:tm_lb_blk_rig}.
\begin{theorem}\label{intro_thm:tm_lb_blk_rig}
	Let $k: \N \to \N$ be a function such that $k(n)=\omega(1)$ and $k(n)=2^{o(n)}$, and $f_n:\set{0,1}^{nk(n)}\to\set{0,1}^{nk(n)}$ be a family of $(\epsilon nk(n), \epsilon k(n))$-block-rigid functions, for some constant $\epsilon >0$.
	Let $M$ be any multi-tape deterministic linear-time Turing machine that takes advice.
	Then, there exists $n\in \N$, and $x\in \set{0,1}^{nk(n)}$, such that $M(x)\not= f_n(x)$.
\end{theorem}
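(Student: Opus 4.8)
The plan is to combine the computation-graph technique of Paul, Pippenger, Szemer{\'{e}}di and Trotter~\cite{PPST83} (building on~\cite{HPV77}) with the counting form of the definition of block rigidity. Suppose for contradiction that $M$ takes advice, runs in time $T(N)\le cN$ on all length-$N$ inputs, and satisfies $M(x)=f_n(x)$ for every $n$ and every $x\in\set{0,1}^{nk(n)}$; fix a large $n$ and write $k=k(n)$, $N=nk$. First I would replace $M$ by an equivalent block-respecting machine $M'$: it still runs in time $O(N)$, its running time is split into $n$ consecutive segments of length $\Theta(k)$, each tape is split into blocks of size $\Theta(k)$ (with the input-tape blocks refining into $O(1)$ of the $f_n$-input-blocks, and the output tape made write-only with its blocks aligned to the $f_n$-output-blocks), and during each segment every head stays inside a single block of its tape; the advice tape is read-only and is treated like the input tape except that its contents are a fixed constant. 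Then I would form the computation graph $G=G(x)$ on vertex set $[n]$: the ``time edges'' $i\to i+1$, plus, for each tape $\tau$, an edge $j\to i$ whenever segment $i$ reads on tape $\tau$ the block that was last written during segment $j$. Block-respecting-ness makes $G$ have in- and out-degree $O(1)$ and respect the order on $[n]$; the only data crossing from a segment to a later one travels along these edges --- $O(1)$ bits (the finite-control state) on a time edge, $\Theta(k)$ bits (one block's contents) on a tape edge. Since the head-block trajectories (one lazy $\pm1$ walk per head, including the input and advice heads) determine both the shape of $G$ and which $f_n$-input-blocks are touched in each segment, there are only $2^{O(n)}$ possibilities; by averaging, some shape $\gamma^*$ is attained on a set $X_0\subseteq\set{0,1}^N$ with $|X_0|\ge 2^{N-O(n)}$, and on $X_0$ all of this structure is fixed.

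Next I would invoke the combinatorial core of~\cite{PPST83}: the fixed bounded-degree ordered graph $\gamma^*$ has a set $W$ of $o(n)$ vertices whose deletion leaves $\gamma^*-W$ in which every vertex has only $O(\epsilon k)$ ancestors --- intuitively, deleting the cheap time edges and a few expensive tape edges cuts every ``backward cone'' down to a size that can reach at most $O(\epsilon k)$ input blocks; the number of deleted vertices is $o(n)$ precisely because the target $\epsilon k$ is $\omega(1)$. I then reveal (fix to its most popular value) the $O(1)$ edge-labels incident to each vertex of $W$, passing to $X\subseteq X_0$; each label carries $O(k)$ bits and $|W|=o(n)$, so this loses a further factor $2^{o(nk)}$ and $|X|\ge 2^{N-O(n)-o(nk)}\ge 2^{(1-\epsilon)N}$ for $n$ large (using $k=\omega(1)$; the bound $k=2^{o(n)}$ keeps the lower-order $\mathrm{poly}(N)$ and $O(n)$ losses negligible against $2^{\epsilon N}$). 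Now fix any output block index $r$ and let $A_r$ be the set of ancestors in $\gamma^*-W$ of the $O(1)$ segments that write block $r$. Unwinding the computation along $\gamma^*$ and using that every edge crossing into $A_r$ from $W$ has been revealed (and the advice is constant), one gets that for all $x\in X$ the value $M'(x)_r$ is a function of the $f_n$-input-blocks read by the segments in $A_r$ --- a set of $|A_r|\cdot O(1)=O(\epsilon k)\le \epsilon k$ input blocks (absorb the constant by feeding a smaller parameter to the lemma). Since $M'(x)=f_n(x)$ on $X$, this exhibits a set $X$ of size $\ge 2^{nk-\epsilon nk}$ over which every output block of $f_n$ is a function of at most $\epsilon k$ input blocks; that is, $f_n$ is \emph{not} $(\epsilon nk,\epsilon k)$-block-rigid, contradicting the hypothesis. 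Hence for every sufficiently large $n$ there is an $x$ with $M(x)\ne f_n(x)$, as required.

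The step I expect to be the main obstacle is the combinatorial lemma on the computation graph: one must show that $o(n)$ vertices (equivalently $o(nk)$ bits of cross-segment information) can be deleted so that every output block is left depending on at most $\epsilon k$ input blocks, with the trade-off between the number of deleted vertices and the sizes of the surviving backward cones calibrated to the regime $k=\omega(1)$, $k=2^{o(n)}$. This is exactly where the overlap/separator arguments of~\cite{PPST83} for bounded-degree topologically ordered graphs are needed, and it is also the point at which the non-uniformity must be seen to be harmless: the advice affects nothing except adding one more bounded-degree ``layer'' (the advice head's trajectory) to $G$ and contributing a constant, not an input-dependent, summand. Everything else --- the block-respecting conversion, the $2^{O(n)}$ count of graph shapes, and the final averaging --- is routine bookkeeping once the lemma is in hand.
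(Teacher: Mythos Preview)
Your overall architecture---block-respecting conversion, computation graph, the PPST separator lemma, then a counting/averaging step to violate block rigidity---is exactly the paper's. The gap is in how you set the block size, and it is not cosmetic.

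You take block size $\Theta(k)$, giving a computation graph on $\Theta(n)$ vertices, and then assert that one can delete $o(n)$ vertices so that every vertex has only $O(\epsilon k)$ ancestors. The PPST lemma, as stated and used in the paper (Lemma~\ref{lemma:ppst_graph_result}), does not give this: on an $m$-vertex graph it yields a set $J$ with $|J|=O(m/\log^{*}m)$ and residual predecessor sets of size $O(m/\log^{*}m)$. With $m=\Theta(n)$ this is $O(n/\log^{*}n)$ predecessors, which is far larger than $\epsilon k$ whenever $k$ grows slowly (say $k=\log\log n$). Your sentence ``the number of deleted vertices is $o(n)$ precisely because the target $\epsilon k$ is $\omega(1)$'' presumes a trade-off (delete $O(m/s)$ to get cones of size $s$) that PPST does not provide and that you do not prove. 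There is also a secondary mismatch: your claim that each TM-block of size $\Theta(k)$ ``refines into $O(1)$ of the $f_n$-input-blocks'' of size $n$ needs $k=O(n)$, whereas the theorem only assumes $k=2^{o(n)}$.

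The fix is to swap the roles: take block size $b=n$, so the machine runs for $a=O(k)$ segments and the computation graph has $O(k)$ vertices, with each segment touching exactly one $f_n$-input-block on the input tape. Now PPST gives $|J|=O(k/\log^{*}k)$ and residual predecessor sets of size $O(k/\log^{*}k)$, which \emph{is} at most $\epsilon k$ once $k=\omega(1)$. Fixing the graph shape, states, and head positions costs $O(k\log k)$ bits (this is where $k=2^{o(n)}$ is used), and revealing the $O(1)$ block transcripts of size $n$ for each vertex in $J$ costs $O(nk/\log^{*}k)=o(nk)$ bits; together these leave a set $X$ of size $\ge 2^{nk-\epsilon nk}$ on which each output block depends on at most $\epsilon k$ input blocks, contradicting $(\epsilon nk,\epsilon k)$-block-rigidity. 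This is precisely the argument of Lemma~\ref{lemma:out_small_no_inp_blocks} with $\mc Y=\set{0,1}^{nk}$.
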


As another application, based on Conjecture \ref{intro_conj:rig_amp_to_block_rig}, we show size-depth tradeoffs for boolean circuits, where the gates compute arbitrary functions over large (with respect to the input size) sets (see Section \ref{sec:ckt_lb}).


\subsection{Parallel Repetition}
In a $k$-player game $\mc G$, questions $(x_1,\dots,x_k)$ are chosen from some joint distribution $\mu$.
For each $j\in [k]$, player $j$ is given $x_j$ and gives an answer $a_j$ that depends only on $x_j$.
The players are said to win if their answers satisfy a fixed predicate $V(x_1,\dots,x_k,a_1,\dots,a_k)$.
We note that $V$ might be randomized, that is, it might depend on some random string that is sampled independently of $(x_1,\dots,x_k)$.
The value of the game $\val(\mc G)$ is defined to be the maximum winning probability over the possible strategies of the players.

It is natural to consider the parallel repetition $\mc G^{\otimes n}$ of such a game $\mc G$.
Now, the questions $(x_1^{(i)}, \dots, x_k^{(i)})$ are chosen from $\mu$, independently for each $i\in[n]$.
For each $j\in [k]$, player $j$ is given $(x_j^{(1)}, \dots, x_j^{(n)})$ and gives answers $(a_j^{(1)}, \dots, a_j^{(n)})$.
The players are said to win if the answers satisfy $V(x_1^{(i)},\dots,x_k^{(i)},a_1^{(i)},\dots,a_k^{(i)})$ for every $i\in [n]$.
The value of the game $\val(\mc G^{\otimes n})$ is defined to be the maximum winning probability over the possible strategies of the players.
Note that the players are allowed to correlate their answers to different repetitions of the game.

Parallel repetition of games was first studied 
in~\cite{FRS94}, owing to its relation with multiprover interactive proofs~\cite{BOGKW88}.
It was hoped that the value $\val(\mc G^{\otimes n})$ of the repeated game goes down as $\val(\mc G)^n$.
However, this is not the case, as shown in \cite{For89,Fei91,FV02,Raz11}.

A lot is known about parallel repetition of 2-player games.
The, so called, parallel repetition theorem, first proved by Raz~\cite{Raz98} and further
simplified and improved by Holenstein~\cite{Hol09}, shows that if $\val(\mc G) < 1$, then $\val(\mc G^{\otimes n}) \leq 2^{-\Omega(n/s)}$, where $s$ is the length of the answers given by the players.
The bounds in this theorem were later made tight even for the case when the initial game has small value (see~\cite{DS14} and~\cite{BG15}).

Much less is known for $k$-player games with $k\geq 3$.
Verbitsky \cite{Ver96} showed that if $\val(G)<1$, then the value of the the repeated game goes down to zero as $n$ grows larger.
The result shows a very weak rate of decay, approximately equal to $\frac{1}{\alpha(n)}$, where $\alpha$ is the inverse-Ackermann function, owing to the use of the density Hales-Jewett theorem (see \cite{FK91} and \cite{Pol12}).
A recent result by Dinur, Harsha, Venkat and Yuen \cite{DHVY17} shows exponential decay, but only in the special case of what they call \emph{expanding games}.
This approach fails when the input to the players have strong correlations.

In this paper (see Section \ref{sec:par_rep}), we show that a sufficiently strong parallel repetition theorem for multiplayer games implies Conjecture~\ref{intro_conj:rig_amp_to_block_rig}.
The following theorem is proved formally as Theorem \ref{thm:parrep_implies_rigidity} in Section \ref{sec:par_rep}.

\begin{theorem}
	There exists a family $\set{\mc G_{\mc S,k}}$ of $k$-player games (where $\mc S$ is some parameter), such that a strong parallel repetition theorem for all games in  $\set{\mc G_{\mc S,k}}$ implies Conjecture~\ref{intro_conj:rig_amp_to_block_rig}.
\end{theorem}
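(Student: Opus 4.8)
The plan is to construct, for every function $f\colon\set{0,1}^k\to\set{0,1}^k$ and every tuple $\mc S=(S_1,\dots,S_k)$ of subsets $S_j\subseteq[k]$, a $k$-player game $\mc G_{f,\mc S,k}$ whose value equals, by design, a normalized rigidity parameter of $f$, and whose $n$-fold repetition has value equal to a normalized block-rigidity parameter of $f^{\otimes n}$. The game is the natural one: the verifier samples $x\in\set{0,1}^k$ uniformly, sends the restriction $x|_{S_j}$ to player $j$, and accepts iff player $j$ replies with the bit $f_j(x)$ for every $j\in[k]$. The random string $x$ determines both the queries and the unique correct answer vector $(f_1(x),\dots,f_k(x))$, and acceptance is exactly the event that all answers are correct, so each $\mc G_{f,\mc S,k}$ is an independent game; the family $\set{\mc G_{\mc S,k}}$ is the collection of all of these, over all $k$, all $f$ and all $\mc S$ (the parameter $\mc S$ encoding both $f$ and the subsets).

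I would then record two identities. For the single copy, a deterministic strategy is a tuple $\vec g=(g_1,\dots,g_k)$ with $g_j\colon\set{0,1}^{S_j}\to\set{0,1}$, and since shared randomness never helps in such games, $\val(\mc G_{f,\mc S,k})=\max_{\vec g}2^{-k}\abs{X'_{\vec g}}$ where $X'_{\vec g}=\set{x\colon g_j(x|_{S_j})=f_j(x)\ \text{for all }j}$; on $X'_{\vec g}$ each $f_j$ agrees with $g_j(x|_{S_j})$, hence depends only on the bits in $S_j$, and conversely any set on which every $f_j$ depends only on the bits in $S_j$ is contained in some $X'_{\vec g}$, so $\val(\mc G_{f,\mc S,k})\cdot 2^{k}$ is exactly the largest size of a set $X'\subseteq\set{0,1}^k$ on which each $f_j$ is a function of the bits in $S_j$. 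For the repetition, writing the input of $f^{\otimes n}$ as $(x^{(1)},\dots,x^{(n)})$, its size-$n$ input block $j$ is $(x^{(1)}_j,\dots,x^{(n)}_j)$ and its output block $j$ is $(f_j(x^{(1)}),\dots,f_j(x^{(n)}))$; so in $\mc G_{f,\mc S,k}^{\otimes n}$ player $j$'s entire view is exactly the input restricted to the blocks indexed by $S_j$, player $j$'s answer is exactly a candidate for output block $j$, and the verifier accepts iff every output block is reproduced. The same computation then gives that $\val(\mc G_{f,\mc S,k}^{\otimes n})\cdot 2^{nk}$ is exactly the largest size of a set $X\subseteq\set{0,1}^{nk}$ on which, for each $j$, output block $j$ of $f^{\otimes n}$ is a function of the input blocks in $S_j$.

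Given these identities the reduction is immediate. Suppose strong parallel repetition holds for the family, say $\val(\mc G^{\otimes n})\le\val(\mc G)^{\gamma n}$ for a universal constant $\gamma>0$, and put $c:=\min(1,\gamma)$. Let $f$ be $(r,s)$-rigid and suppose, for contradiction, that $f^{\otimes n}$ is not $(cnr,cs)$-block-rigid, witnessed by a set of size $\ge 2^{nk-cnr}$ together with subsets $S_j$ with $\abs{S_j}\le cs$. By the second identity $\val(\mc G_{f,\mc S,k}^{\otimes n})\ge 2^{-cnr}$, so strong parallel repetition forces $\val(\mc G_{f,\mc S,k})\ge 2^{-cr/\gamma}$, and the first identity then produces a set $X'\subseteq\set{0,1}^k$ of size $\ge 2^{k-cr/\gamma}\ge 2^{k-r}$ on which every $f_j$ is a function of at most $cs\le s$ input bits --- contradicting the $(r,s)$-rigidity of $f$. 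Hence Conjecture~\ref{intro_conj:rig_amp_to_block_rig} holds with the universal constant $c=\min(1,\gamma)$.

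I expect essentially all of the genuine difficulty to sit in the hypothesis: proving a strong parallel repetition theorem for this class of independent multiplayer games is precisely the open problem. Inside the reduction the only points that require care are aligning the two block structures (the $k$ input/output \emph{bits} of $f$ must become the $k$ size-$n$ input/output \emph{blocks} of $f^{\otimes n}$, with the $n$ repeated copies becoming the coordinates within a block), verifying that the constructed games really do lie in the independent-game class, and bookkeeping constants so that the final $c$ depends only on the repetition exponent $\gamma$ and is therefore universal.
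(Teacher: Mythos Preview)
Your proposal is correct and follows essentially the same route as the paper. You define the same family of games (the paper's $\mc G_{\mc S}$ is your $\mc G_{f,\mc S,k}$), establish the same two correspondences between (block-)rigidity and game values (the paper's Observations~\ref{obs:block_rig_func_game_val} and~\ref{obs:rig_func_game_val}), and deduce Conjecture~\ref{intro_conj:rig_amp_to_block_rig} from the strong parallel repetition hypothesis in the same way; your explicit choice $c=\min(1,\gamma)$ and the contrapositive framing are just a more detailed rendering of the paper's one-line ``Combining Observations~\ref{obs:block_rig_func_game_val} and~\ref{obs:rig_func_game_val}''.
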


Although the games in this family do not fit into the framework of \cite{DHVY17}, they satisfy some very special properties. Every $k$-player game in the family satisfies the following:
\begin{enumerate}
	\item The questions to the $k$-players are chosen as follows: First, $k$ bits, $x_1,\dots,x_k \in \set{0,1}$, are drawn uniformly and independently. Each of the $k$-players sees some subset of these $k$-bits.
	\item The predicate $V$ satisfies the condition that on fixing the bits $x_1,\dots,x_k$, there is a unique accepting answer for each player (independently of all other answers) and the verifier accepts if every player answers with the accepting answer. We refer to games that satisfy this property as {\it independent games}.
\end{enumerate}

We believe that these properties may allow us to prove strong upper bounds on the value of parallel repetition of such games, despite our lack of understanding of multiplayer parallel repetition. 
The bounds that we need are that parallel repetition reduces the value of such games  from $v$ to $v^{\Omega(n)}$, where $n$ is the number of repetitions (as is proved in~\cite{DS14} and~\cite{BG15} for 2-player games).

\subsection{Open Problems}
\begin{enumerate}
	\item The main open problem is to make progress towards proving Conjecture \ref{intro_conj:rig_amp_to_block_rig}, possibly using the framework of parallel repetition. The remarks after Theorem \ref{thm:tm_lb_explicit} mention some weaker statements that suffice for our applications. The examples of matrix-transpose and matrix-product in Section \ref{sec:matrix_problems} also serve as interesting problems.
	\item Our techniques, which are based on \cite{PPST83}, heavily exploit the fact that the Turing machines have one-dimensional tapes. Time-space lower bounds for satisfiability in the case of multi-tape Turing machines with random access \cite{FLvMV05}, and Turing machines with one $d$-dimensional tape \cite{vMR05}, are known. Extending such results to the non-uniform setting is an interesting open problem.
	\item The question of whether a rigid-matrix $A\in \F_2^{n\times n}$ is rigid when seen as a function $A:\set{0,1}^n\to\set{0,1}^n$ is very interesting (see Section \ref{sec:rig_mat_vs_func}). This question is closely related to  a Conjecture of Jukna and Schnitger \cite{JS11} on the linearization of depth-2 circuits. This is also related to the question of whether data structures for linear problems can be optimally linearized (see \cite{DGW19}). We note that there are known examples of linear problems for which the best known data-structures are non-linear, without any known linear data-structure achieving the same bounds (see \cite{KU11}).
\end{enumerate}

\section{Preliminaries}
Let $\N=\set{1,2,3,\dots}$ be the set of all natural numbers. For any $k\in \N$, we use $[k]$ to denote the set $\set{1,2,\dots,k}$.

We use $\F$ to denote an arbitrary finite field, and $\F_2$ to denote the finite field on two elements.

Let $x\in \set{0,1}^k$.
For $i\in [k]$, we use $x_i$ to denote the $i$\textsuperscript{th} coordinate of $x$.
For $S\subseteq [k]$, we denote by $x|_S$ the vector $(x_i)_{i\in S}$, which is the restriction of $x$ to coordinates in $S$.

We also consider vectors $x\in \set{0,1}^{nk}$, for some $n\in \N$.
We think of these as composed of $k$ blocks, each consisting of a vector in $\set{0,1}^n$.
That is, $x = (x_{ij})_{i\in [n], j\in [k]}$.
By abuse of notation, for $S\subseteq [k]$, we denote by $x|_S$ the vector $(x_{ij})_{i\in [n], j\in S}$, which is the restriction of $x$ to the blocks indexed by $S$.

Let $A\in \F^{nk\times nk}$ be an $nk\times nk$ matrix.
We think of $A$ as a block-matrix consisting of $k^2$ blocks, each block being an $n\times n$ matrix.
That is, $A = (A_{ij})_{i, j\in [k]}$, where for all $i, j\in [k]$, $A_{ij}\in \F^{n\times n}$.
For each $i\in [k]$, we call $(A_{ij})_{j\in [k]}$ the $i$\textsuperscript{th} block-row of $A$.

For every $n\in \N$, we define $\log^* n = \min\{\ell\in \N\cup\set{0} : \underbrace{\log_2 \log_2 \dots \log_2}_{\ell \text{ times}}{n} \leq 1 \}.$

\section{Rigidity and Block Rigidity} \label{sec:rig_and_block_rig}

\subsection{Rigidity}\label{subsec:rig}

The concept of matrix rigidity was introduced by Valiant \cite{Val77}.
It is defined as follows.

\begin{definition}\label{def:rig_mat}
	A matrix $A \in \F^{n\times n}$ is said to be an $(r, s)$-rigid matrix if it cannot be written as $A = B+C$, where $B$ has rank at most $r$, and $C$ has at most $s$ non-zero entries in each row.
\end{definition}

Valiant \cite{Val77} showed the existence of rigid matrices by a simple counting argument.
For the sake of completeness, we include this proof.

\begin{proposition}\label{prop:rig_mat_exist}
	For any constant $0<\epsilon\leq\frac{1}{8}$, and any $n\in \N$, there exists a matrix $A\in \F^{n\times n}$ that is an $(\epsilon n, \epsilon n)$-rigid matrix.
\end{proposition}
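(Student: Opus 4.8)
The plan is to run Valiant's counting argument: bound the number of matrices in $\F^{n\times n}$ that are \emph{not} $(\epsilon n,\epsilon n)$-rigid, and show this count is strictly less than $|\F|^{n^2}$, so that some matrix must be rigid. Write $r=s=\lfloor\epsilon n\rfloor$ (replacing $\epsilon n$ by its floor only makes the rigidity requirement weaker), and recall that a non-rigid matrix is, by definition, a sum $B+C$ with $\mathrm{rank}(B)\le r$ and $C$ having at most $s$ nonzero entries in each row.

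First I would count the possible low-rank parts $B$: every matrix of rank at most $r$ equals $UV$ for some $U\in\F^{n\times r}$ and $V\in\F^{r\times n}$ (take a rank factorization and pad with zero columns and rows), so there are at most $|\F|^{nr}\cdot|\F|^{rn}=|\F|^{2nr}$ of them. Next I would count the possible sparse parts $C$ row by row: the support of each row lies in some set of size exactly $s$, of which there are at most $\binom{n}{s}$, and the entries on that set range over $\F$; so there are at most $\binom{n}{s}\,|\F|^{s}$ choices per row and at most $\bigl(\binom{n}{s}\,|\F|^{s}\bigr)^{n}$ in total. Multiplying, the number of non-rigid matrices is at most
\[
|\F|^{2nr}\cdot\binom{n}{s}^{\,n}\cdot|\F|^{sn}.
\]

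Finally I would bound this using $r,s\le\epsilon n$, the binomial estimate $\binom{n}{\lfloor\epsilon n\rfloor}\le 2^{H(\epsilon)n}$ with $H(\epsilon)=-\epsilon\log_2\epsilon-(1-\epsilon)\log_2(1-\epsilon)$ the binary entropy (valid since $\epsilon\le\tfrac12$ and $H$ is nondecreasing on $[0,\tfrac12]$), and $2\le|\F|$, to get the bound $|\F|^{(3\epsilon+H(\epsilon))n^{2}}$ on the number of non-rigid matrices. It then remains to check the elementary inequality $3\epsilon+H(\epsilon)<1$ for all $0<\epsilon\le\tfrac18$; since both terms are nondecreasing on this range, it is enough to check $\epsilon=\tfrac18$, where $3\epsilon+H(\epsilon)=\tfrac34+\tfrac78\log_2\tfrac87<1$. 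This yields strictly fewer than $|\F|^{n^2}$ non-rigid matrices, hence a rigid matrix exists. The argument is essentially routine; the only points needing a little care are keeping the rank count free of an extra polynomial-in-$n$ factor (handled by the $UV$ parametrization rather than summing over exact ranks) and the degenerate small-$n$ cases where $\lfloor\epsilon n\rfloor=0$, in which being non-rigid forces $B=C=0$, so any nonzero matrix (which exists for $n\ge1$) is rigid. I expect the only mild obstacle to be making sure the threshold $\tfrac18$ is consistent with whichever binomial bound one uses.
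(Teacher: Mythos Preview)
Your proposal is correct and follows essentially the same counting argument as the paper: bound the number of rank-$\le \epsilon n$ matrices via the $UV$ factorization, bound the number of row-sparse matrices via a binomial estimate, multiply, and verify the product is strictly below $|\F|^{n^2}$. The only cosmetic difference is that you use the entropy bound $\binom{n}{\lfloor\epsilon n\rfloor}\le 2^{H(\epsilon)n}$ while the paper uses $\binom{n}{r}\le (en/r)^r$, and you are slightly more careful about the floor and the degenerate small-$n$ case.
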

\begin{proof}
	Fix any $0<\epsilon\leq\frac{1}{8}$. We bound the number of $n\times n$ matrices that are not $(\epsilon n, \epsilon n)$-rigid matrices.
	\begin{enumerate}
		\item Any $n\times n$ matrix with rank at most $r$ can be written as the product of an $n\times r$ and an $r \times n$ matrix. Hence, the number of matrices of rank at most $\epsilon n$ is at most $\NormF^{2\epsilon n^2} \leq \NormF^{\frac{n^2}{4}}$.
		\item The number of matrices that have at most $\epsilon n$ non-zero entries in each row is at most \[\brac{\binom{n}{\epsilon n}\NormF^{\epsilon n}}^n \leq \brac{\frac{e}{\epsilon}}^{\epsilon n^2}\NormF^{\epsilon n^2} = \NormF^{\epsilon n^2(1 + \log_{\NormF}{ \frac{e}{\epsilon}} )} < \NormF^{\frac{3n^2}{4}}.\]
		We used the binomial estimate $\binom{n}{r} \leq \brac{\frac{en}{r}}^r$.
	\end{enumerate}
	Since each matrix that is not an $(r, s)$-rigid matrix can be written as the sum of a matrix with rank at most $r$, and a matrix with at most $s$ non-zero entries in each row, the number of matrices that are not $(\epsilon n,\epsilon n)$-rigid matrices is strictly less than $\NormF^{\frac{n^2}{4}} \cdot \NormF^{\frac{3n^2}{4}} = \NormF^{n^2}$, which is the total number of $n\times n$ matrices.
\end{proof}

Observe that a matrix $A \in \F_2^{n\times n}$ is not an $(r, s)$-rigid matrix if and only if there is a subspace $X\subseteq \F_2^n$ of dimension at least $n-r$, and a matrix $C$ with at most $s$ non-zero entries in each row, such that $Ax=Cx$ for all $x\in X$.

We use this formulation to extend the concept of rigidity to general functions.

\begin{definition}\label{def:rig_func}
	A function $f:\set{0,1}^n\to \set{0,1}^n$ is said to be an $(r, s)$-rigid function if for every subset $X\subseteq \set{0,1}^n$ of size at least $2^{n-r}$, and subsets $S_1,\dots,S_n\subseteq[n]$ of size $s$, and functions $g_1,\dots,g_n:\set{0,1}^s\to \set{0,1}$, there exists $x\in X$ such that $f(x) \not= (g_1(x|_{S_1}), g_2(x|_{S_2}), \dots, g_n(x|_{S_n}))$.
\end{definition}

Using a similar counting argument as in Proposition \ref{prop:rig_mat_exist}, we show the existence of rigid functions.

\begin{proposition}\label{prop:rig_func_exist}
	For any constant $0<\epsilon\leq\frac{1}{8}$, and any (large enough) integer $n$, there exists a function $f:\set{0,1}^n \to \set{0,1}^n$ that is an $(\epsilon n, \epsilon n)$-rigid function.
\end{proposition}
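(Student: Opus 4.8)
The plan is to mimic the counting argument of Proposition~\ref{prop:rig_mat_exist}. Fix a constant $0<\epsilon\le\tfrac18$. I will upper bound the number of functions $f\colon\set{0,1}^n\to\set{0,1}^n$ that are \emph{not} $(\epsilon n,\epsilon n)$-rigid, and show that for all large enough $n$ this count is strictly smaller than the total number $2^{n2^n}$ of functions $\set{0,1}^n\to\set{0,1}^n$; any function outside the bad set then witnesses the claim.

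First I would record a monotonicity observation: if $f$ is not $(\epsilon n,\epsilon n)$-rigid, there is a witness consisting of a set $X\subseteq\set{0,1}^n$ with $\abs{X}\ge 2^{n-\epsilon n}$, subsets $S_1,\dots,S_n\subseteq[n]$ each of size $\epsilon n$, and functions $g_1,\dots,g_n\colon\set{0,1}^{\epsilon n}\to\set{0,1}$, such that $f(x)=\brac{g_1(x|_{S_1}),\dots,g_n(x|_{S_n})}$ for every $x\in X$. Shrinking $X$ to any subset of size exactly $\lceil 2^{(1-\epsilon)n}\rceil$ leaves a valid witness, so in the enumeration I may assume $\abs{X}=2^{(1-\epsilon)n}$ (I will ignore the harmless rounding). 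Thus every bad $f$ is specified by: a choice of $X$ ($\binom{2^n}{2^{(1-\epsilon)n}}$ options); a choice of the sets $S_i$ ($\binom{n}{\epsilon n}^n$ options); a choice of the functions $g_i$ ($\brac{2^{2^{\epsilon n}}}^n$ options); and the values of $f$ on $\set{0,1}^n\setminus X$ ($\brac{2^n}^{2^n-2^{(1-\epsilon)n}}$ options). Multiplying and taking $\log_2$, the exponent is at most
\[
\log_2\binom{2^n}{2^{(1-\epsilon)n}}\;+\;n\log_2\binom{n}{\epsilon n}\;+\;n\,2^{\epsilon n}\;+\;n\brac{2^n-2^{(1-\epsilon)n}}.
\]

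Next I would plug in the binomial estimate $\binom{a}{b}\le\brac{ea/b}^b$. The crucial application is to the first term: $\binom{2^n}{2^{(1-\epsilon)n}}\le\brac{e\,2^{\epsilon n}}^{2^{(1-\epsilon)n}}$, so its logarithm is $\brac{\epsilon n+\log_2 e}\,2^{(1-\epsilon)n}$; and $n\log_2\binom{n}{\epsilon n}\le \epsilon n^2\log_2(e/\epsilon)$. Comparing with the total exponent $n2^n$, it suffices that
\[
\brac{\epsilon n+\log_2 e}\,2^{(1-\epsilon)n}\;+\;\epsilon n^2\log_2(e/\epsilon)\;+\;n\,2^{\epsilon n}\;<\;n\,2^{(1-\epsilon)n}.
\]
Dividing through by $n\,2^{(1-\epsilon)n}$, the left side becomes $\epsilon+o(1)+2^{-(1-2\epsilon)n}$, and since $\epsilon\le\tfrac18$ forces $1-2\epsilon\ge\tfrac34>0$ every error term tends to $0$; as $\epsilon<1$ this is eventually below $1$, which completes the proof.

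The one place where care is genuinely needed, and which I expect to be the only real obstacle, is the bound on the number of choices of $X$: because $r=\epsilon n$ is small, $X$ is an exponentially small fraction of the domain, and the sharp bound $\binom{2^n}{2^{(1-\epsilon)n}}\approx 2^{\epsilon n\,2^{(1-\epsilon)n}}$ is exactly what makes the cost of naming $X$ only a fraction $\approx\epsilon$ of the saving $n\,2^{(1-\epsilon)n}$ gained from not having to specify $f$ on $X$; the trivial bound $\binom{2^n}{2^{(1-\epsilon)n}}\le 2^{2^n}$ is hopelessly lossy here and would break the argument. The requirement $\epsilon<\tfrac12$ (comfortably met by $\epsilon\le\tfrac18$) is what makes the $g_i$-count $n\,2^{\epsilon n}$ negligible against $n\,2^{(1-\epsilon)n}$, and ``large enough $n$'' is needed only to absorb the polynomial term $\epsilon n^2\log_2(e/\epsilon)$.
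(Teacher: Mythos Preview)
Your proposal is correct and follows essentially the same counting argument as the paper's proof: enumerate the witnesses $(X,S_1,\dots,S_n,g_1,\dots,g_n)$ together with the free values of $f$ off $X$, apply the binomial bound $\binom{a}{b}\le(ea/b)^b$ to estimate the number of choices of $X$, and check that the total falls short of $2^{n2^n}$. The only cosmetic difference is that the paper coarsely bounds each of the first three contributions by $2^{\frac{n}{4}2^{(1-\epsilon)n}}$ (which is where the specific threshold $\epsilon\le\tfrac18$ is used), whereas you keep the sharper asymptotics and divide through by $n\,2^{(1-\epsilon)n}$ at the end; both lead to the same conclusion.
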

\begin{proof}
	Fix any constant $0<\epsilon\leq\frac{1}{8}$, and any integer $n \geq \frac{2}{\epsilon}$.
	We count the number of functions $f:\set{0,1}^n\to \set{0,1}^n$ that are not $(\epsilon n, \epsilon n)$-rigid functions.
	\begin{enumerate}
		\item Note that in Definition \ref{def:rig_func}, it is without loss of generality to assume that $\norm{X} = 2^{n-\epsilon n}$. The number subsets $X\subseteq \set{0,1}^n$ of size $2^{n-\epsilon n}$ is $\binom{2^n}{2^{n-\epsilon n}} \leq \brac{e2^{\epsilon n}}^{2^{n-\epsilon n}} < 2^{2\epsilon n 2^{n-\epsilon n}} \leq 2^{\frac{n}{4} 2^{n-\epsilon n}}.$
		\item The number of subsets $S_1,\dots,S_n \subseteq[n]$ of size $\epsilon n$ is at most $ \binom{n}{\epsilon n}^n < n^{\epsilon n^2} < 2^{\frac{n}{4} 2^{n-\epsilon n}}$.
		\item The number of functions $g_1,\dots,g_n: \set{0,1}^{\epsilon n}\to \set{0,1}$ is at most $ 2^{n2^{\epsilon n}} < 2^{\frac{n}{4} 2^{n-\epsilon n}}$.
		\item The number of choices for values of $f$ on $\set{0,1}^n \setminus X$ is at most $2^{n(2^n-\norm{X})} \leq 2^{n(2^n-2^{n-\epsilon n})}$.
	\end{enumerate}
Hence, the total number of functions $f: \set{0,1}^n\to 	\set{0,1}^n$ that are not $(\epsilon n, \epsilon n)$-rigid functions is strictly less than
$ \brac{2^{\frac{n}{4} 2^{n-\epsilon n}}}^3 2^{n2^n-n2^{n-\epsilon n}} < 2^{n2^n}.$
\end{proof}

\subsection{Block Rigidity}\label{subsec:block_rig}

In this section, we introduce the notion of block rigidity.

\begin{definition}\label{def:block_rig_mat}
	A matrix $A \in  \F^{nk\times nk}$ is said to be an $(r,s)$-block-rigid matrix if it cannot be written as $A = B+C$, where $B$ has rank at most $r$, and $C$ has at most $s$ non-zero matrices in each block-row.
\end{definition}

Observe that if $A\in \F^{nk\times nk}$ is an $(r,ns)$-rigid matrix, then it is also $(r,s)$-block-rigid matrix.
Combining this with Proposition \ref{prop:rig_mat_exist}, we get the following.

\begin{observation}
	For any constant $0<\epsilon\leq\frac{1}{8}$, and positive integers $n,k$, there exists an $(\epsilon nk, \epsilon k)$-block-rigid matrix $A\in  \F^{nk\times nk}$.
\end{observation}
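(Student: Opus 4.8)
The plan is to reduce directly to the ordinary (square-matrix) rigidity result. First I would apply Proposition~\ref{prop:rig_mat_exist} with its parameter $n$ instantiated to $N := nk$: since $0 < \epsilon \le \frac18$, this yields a matrix $A \in \F^{nk \times nk}$ that is an $(\epsilon nk,\, \epsilon nk)$-rigid matrix.

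Next I would invoke the observation recorded just before the statement, namely that every $(r, ns)$-rigid matrix in $\F^{nk\times nk}$ is automatically $(r,s)$-block-rigid. The reason is purely syntactic: if $A = B + C$ with $\mathrm{rank}(B)\le r$ and $C$ having at most $s$ non-zero $n\times n$ blocks in each block-row, then each of the $n$ ordinary rows sitting inside a block-row meets at most $s$ non-zero blocks, hence has at most $ns$ non-zero entries; so a matrix that fails to be $(r,s)$-block-rigid also fails to be $(r,ns)$-rigid. Applying this with $r = \epsilon nk$ and $s = \epsilon k$, so that $ns = \epsilon nk$, the $(\epsilon nk,\, \epsilon nk)$-rigid matrix produced above is in particular an $(\epsilon nk,\, \epsilon k)$-block-rigid matrix, which is exactly the claim.

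There is essentially no obstacle: the entire content is the counting argument already carried out in Proposition~\ref{prop:rig_mat_exist}, and passing to block-rigidity is a weakening that costs nothing. The only point needing care is the bookkeeping of parameters — one wants the block-rigidity sparsity bound to be $\epsilon k$, and after multiplying by the block size $n$ this lands precisely on the rank/row-sparsity budget $\epsilon nk$ supplied by the proposition, so no constants are lost and the same $\epsilon\le\frac18$ suffices.
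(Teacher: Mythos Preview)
Your proposal is correct and follows exactly the paper's own argument: apply Proposition~\ref{prop:rig_mat_exist} with input size $nk$ to obtain an $(\epsilon nk,\epsilon nk)$-rigid matrix, then use the observation that $(r,ns)$-rigidity implies $(r,s)$-block-rigidity with $r=\epsilon nk$ and $s=\epsilon k$. Your additional one-line justification of that implication (via the contrapositive) is fine and matches what the paper leaves implicit.
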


Following the definition of rigid-functions in Section \ref{subsec:rig}, we define block-rigid functions as follows.

\begin{definition}\label{def:block_rig_func}
	A function $f: \set{0,1}^{nk}\to \set{0,1}^{nk}$ is said to be an $(r,s)$-block-rigid function if for every subset $X\subseteq \set{0,1}^{nk}$ of size at least $2^{nk-r}$, and subsets $S_1,\dots,S_k\subseteq[k]$ of size $s$, and functions $g_1,\dots,g_k: \set{0,1}^{ns}\to  \set{0,1}^n$, there exists $x\in X$ such that $f(x) \not= (g_1(x|_{S_1}), g_2(x|_{S_2}), \dots, g_k(x|_{S_k}))$.
\end{definition}

Observe that if $f : \set{0,1}^{nk}\to \set{0,1}^{nk}$ is an $(r,ns)$-rigid function, then it is also $(r,s)$-block-rigid function.
Combining this with Proposition \ref{prop:rig_func_exist}, we get the following.

\begin{observation}
	For any constant $0<\epsilon\leq\frac{1}{8}$, and (large enough) integers $n,k$, there exists an $(\epsilon nk, \epsilon k)$-block-rigid function $f: \set{0,1}^{nk}\to \set{0,1}^{nk}$.
\end{observation}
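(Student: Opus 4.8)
The plan is to mimic exactly the reduction used for the block-rigid matrix observation, namely to show that every $(r, ns)$-rigid function is automatically an $(r,s)$-block-rigid function, and then invoke Proposition~\ref{prop:rig_func_exist}. First I would unpack Definition~\ref{def:block_rig_func}: suppose towards a contradiction that $f:\set{0,1}^{nk}\to\set{0,1}^{nk}$, with $f$ an $(r,ns)$-rigid function (viewing $\set{0,1}^{nk}=\set{0,1}^{N}$ for $N=nk$), fails to be $(r,s)$-block-rigid. Then there is a set $X\subseteq\set{0,1}^{nk}$ with $\norm{X}\geq 2^{nk-r}$, blocks $S_1,\dots,S_k\subseteq[k]$ each of size $s$, and functions $g_1,\dots,g_k:\set{0,1}^{ns}\to\set{0,1}^n$ such that $f(x)=(g_1(x|_{S_1}),\dots,g_k(x|_{S_k}))$ for all $x\in X$.

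The key step is to translate this ``block'' decomposition into a ``coordinate'' decomposition witnessing that $f$ is not $(r,ns)$-rigid. Each output block $j\in[k]$ of $f$ corresponds to $n$ output coordinates of $f$ (the coordinates $(i,j)$ for $i\in[n]$); each such coordinate is, over $X$, a function of the input coordinates lying in the blocks $S_j$, and the set of those input coordinates is $\widetilde S_j:=\set{(i',j'): i'\in[n],\, j'\in S_j\,}$, which has size exactly $ns$. So I would set, for an output coordinate indexed $(i,j)$, the coordinate-set $\widetilde S_j$ (of size $ns$) and the bit-valued function $h_{(i,j)}:\set{0,1}^{ns}\to\set{0,1}$ to be the $i$-th output bit of $g_j$. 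Then for every $x\in X$ the coordinate $(i,j)$ of $f(x)$ equals $h_{(i,j)}(x|_{\widetilde S_j})$. Since $\norm{X}\geq 2^{nk-r}=2^{N-r}$, this exactly contradicts the definition of $f$ being an $(r,ns)$-rigid function (Definition~\ref{def:rig_func}), with the role of ``$s$'' there played by $ns$.

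Having established the implication ``$(r,ns)$-rigid $\implies$ $(r,s)$-block-rigid'', I would finish by plugging in $r=\epsilon nk$ and $ns=\epsilon nk$, i.e. $s=\epsilon k$: Proposition~\ref{prop:rig_func_exist} (applied with dimension $nk$ in place of $n$, and the same constant $\epsilon\leq\frac18$, valid once $nk$ is large enough) gives a function $f:\set{0,1}^{nk}\to\set{0,1}^{nk}$ that is $(\epsilon nk,\epsilon nk)$-rigid, hence $(\epsilon nk,\epsilon k)$-block-rigid. I do not anticipate a genuine obstacle here; the only thing to be careful about is the bookkeeping in the index translation between $[nk]$-coordinates and $(n,k)$-block coordinates, and the observation that restricting to a union of full blocks $S_j$ is a special case of restricting to an arbitrary set of $ns$ coordinates, so the block decomposition is indeed more restrictive than the coordinate one — which is exactly why the implication goes in this direction. (The same one-line argument, with ranks instead of sets, is what underlies the preceding block-rigid-matrix observation.)
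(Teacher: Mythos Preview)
Your proposal is correct and follows exactly the paper's approach: the paper states just before the Observation that any $(r,ns)$-rigid function is automatically $(r,s)$-block-rigid, and then invokes Proposition~\ref{prop:rig_func_exist}. Your write-up simply fills in the (straightforward) index-translation details behind that one-line implication.
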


Note that $n=1$ in the definition of block-rigid matrices (functions) gives the usual definition of rigid matrices (functions).
For our applications, we will mostly be interested in the case when $n$ is much larger than $k$.

\subsection{Rigidity Amplification}

A natural question is whether there is a way to amplify rigidity.
That is, given a rigid matrix (function), is there a way to obtain a larger matrix (function) which is rigid, or even block-rigid.

\begin{definition}\label{def:f_tens_idn}
	Let $f: \set{0,1}^k\to  \set{0,1}^k$ be any function. Define $f^{\otimes n}: \set{0,1}^{nk}\to \set{0,1}^{nk}$ as following. Let $x = (x_{ij})_{i\in[n],j\in[k]} \in  \set{0,1}^{nk}$ and $i\in [n], j\in [k]$. The $(i,j)$\textsuperscript{th} coordinate of $f^{\otimes n}(x)$ is defined to be the $j$\textsuperscript{th} coordinate of $f(x_{i1}, x_{i2},\dots, x_{ik})$.
\end{definition}

Basically, applying $f^{\otimes n}$ on $x\in  \set{0,1}^{nk}$ is the same as applying $f$ on $(x_{i1}, x_{i2},\dots, x_{ik})$, in place for each $i\in [n]$.
For a linear function given by matrix $A\in  \F_2^{k\times k}$, this operation corresponds to $A\otimes I_n$, where $I_n$ is the $n\times n$ identity matrix, and $\otimes$ denotes the Kronecker product of matrices.

It is easy to see that if $f$ is not rigid, then $f^{\otimes n}$ is not block-rigid.
\begin{observation}\label{obs:rig_amp_converse}
	Suppose $f: \set{0,1}^k\to \set{0,1}^k$ is not an $(r,s)$-rigid function. Then $f^{\otimes n}$ is not an $(nr,s)$-block-rigid function.
\end{observation}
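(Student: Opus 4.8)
The plan is to prove the contrapositive directly: assume $f:\set{0,1}^k\to\set{0,1}^k$ is not an $(r,s)$-rigid function, and produce a witness showing that $f^{\otimes n}$ is not an $(nr,s)$-block-rigid function. By Definition~\ref{def:rig_func}, the hypothesis gives us a subset $Y\subseteq\set{0,1}^k$ with $\norm{Y}\geq 2^{k-r}$, subsets $S_1,\dots,S_k\subseteq[k]$ each of size $s$, and functions $g_1,\dots,g_k:\set{0,1}^s\to\set{0,1}$ such that for every $y\in Y$ we have $f(y)=(g_1(y|_{S_1}),\dots,g_k(y|_{S_k}))$. The natural guess for the witness set for $f^{\otimes n}$ is the ``product'' set $X = \set{x=(x_{ij})_{i\in[n],j\in[k]} : (x_{i1},\dots,x_{ik})\in Y \text{ for all } i\in[n]}$, i.e.\ each of the $n$ rows of $x$, viewed as an element of $\set{0,1}^k$, lies in $Y$.

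First I would verify the size bound: since the $n$ rows are chosen independently from $Y$, we have $\norm{X} = \norm{Y}^n \geq (2^{k-r})^n = 2^{nk-nr}$, which is exactly the threshold $2^{nk-r'}$ with $r'=nr$ required by Definition~\ref{def:block_rig_func}. Second, I would exhibit the block structure. The same index sets $S_1,\dots,S_k\subseteq[k]$ (now interpreted as selecting \emph{blocks} rather than coordinates) serve as the witness sets for $f^{\otimes n}$; each has size $s$, matching the parameter. For each $j\in[k]$ I need a function $h_j:\set{0,1}^{ns}\to\set{0,1}^n$ such that the $j$th output block of $f^{\otimes n}(x)$ equals $h_j(x|_{S_j})$ for all $x\in X$. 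By Definition~\ref{def:f_tens_idn}, the $(i,j)$th coordinate of $f^{\otimes n}(x)$ is the $j$th coordinate of $f(x_{i1},\dots,x_{ik})$, which, since the $i$th row of $x$ lies in $Y$, equals $g_j\big((x_{i1},\dots,x_{ik})|_{S_j}\big)$ — and this depends only on the entries $x_{i\ell}$ for $\ell\in S_j$, i.e.\ only on $x|_{S_j}$. So I define $h_j$ to be the map that, given the $n\times s$ array $x|_{S_j} = (x_{i\ell})_{i\in[n],\ell\in S_j}$, outputs the vector in $\set{0,1}^n$ whose $i$th entry is $g_j\big((x_{i\ell})_{\ell\in S_j}\big)$; concretely $h_j = g_j^{\otimes n}$ under the natural identification. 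Then $h_j(x|_{S_j})$ is precisely the $j$th output block of $f^{\otimes n}(x)$ for every $x\in X$.

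Combining these observations: $X$ has size at least $2^{nk-nr}$, the sets $S_1,\dots,S_k$ have size $s$, and the functions $h_1,\dots,h_k$ witness that over $X$ each output block of $f^{\otimes n}$ is a function of $s$ input blocks, so there is \emph{no} $x\in X$ with $f^{\otimes n}(x)\neq(h_1(x|_{S_1}),\dots,h_k(x|_{S_k}))$. By Definition~\ref{def:block_rig_func} this means $f^{\otimes n}$ is not an $(nr,s)$-block-rigid function, as desired.

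I do not expect any genuine obstacle here — the statement is essentially a bookkeeping exercise once one commits to the product set $X=Y^n$. The only points requiring a little care are (i) the case $\norm{Y} > 2^{k-r}$, where one may either work with the larger set directly or pass to a subset of size exactly $2^{k-r}$ (noting $\norm{X}=\norm{Y}^n$ only increases), and (ii) being explicit about the identification $\set{0,1}^{ns}\cong(\set{0,1}^s)^n$ so that ``$x|_{S_j}$ as an $ns$-bit string'' and ``the $n$ restricted rows'' refer to the same object; once the indexing conventions from the Preliminaries are in force this is immediate.
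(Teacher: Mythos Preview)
Your proof is correct and is exactly the natural argument the paper has in mind; the paper in fact leaves this observation without proof (prefacing it only with ``It is easy to see that if $f$ is not rigid, then $f^{\otimes n}$ is not block-rigid''), so your explicit construction of $X=Y^n$, the same index sets $S_j$, and $h_j=g_j^{\otimes n}$ is precisely what was intended.
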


The converse of Observation $\ref{obs:rig_amp_converse}$ is more interesting. We believe that it is true, and restate Conjecture \ref{intro_conj:rig_amp_to_block_rig} below.

\begin{conjecture}\label{conj:rig_amp_to_block_rig}
	There exists a universal constant $c>0$ such that the following is true.
	Let $f: \set{0,1}^k\to \set{0,1}^k$ be an $(r, s)$-rigid function, and $n\in \N$. Then, $f^{\otimes n}: \set{0,1}^{nk}\to \set{0,1}^{nk}$ is a $(cnr, cs)$-block-rigid function.
\end{conjecture}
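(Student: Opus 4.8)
Conjecture~\ref{conj:rig_amp_to_block_rig} is stated as an open problem in this paper, so the goal here is only to describe a natural line of attack, not to claim a complete proof.

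\medskip

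\noindent\textbf{Plan of attack.} The plan is to prove the contrapositive: assuming $f^{\otimes n}$ is \emph{not} $(cnr,cs)$-block-rigid, I would like to conclude that $f$ is \emph{not} $(r,s)$-rigid. So suppose there is a set $X\subseteq\set{0,1}^{nk}$ with $\abs{X}\geq 2^{nk-cnr}$, subsets $S_1,\dots,S_k\subseteq[k]$ of size $cs$, and block-functions $g_1,\dots,g_k:\set{0,1}^{n\cdot cs}\to\set{0,1}^n$ such that $f^{\otimes n}(x)=(g_1(x|_{S_1}),\dots,g_k(x|_{S_k}))$ for all $x\in X$. Writing $x=(x^{(1)},\dots,x^{(n)})$ with each $x^{(i)}\in\set{0,1}^k$ a ``row'', the $j$\textsuperscript{th} output block of $f^{\otimes n}(x)$ is the vector $(f(x^{(1)})_j,\dots,f(x^{(n)})_j)$, and the fact that $g_j$ depends only on the blocks in $S_j$ says that, for every row index $i$, the $i$\textsuperscript{th} coordinate of this vector, namely $f(x^{(i)})_j$, is determined by $(x^{(i)}|_{S_j})$ together with the \emph{other rows} $x^{(i')}|_{S_j}$, $i'\neq i$. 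The crux is to eliminate that extra dependence on the other rows.

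\medskip

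\noindent\textbf{Key steps.} The first step is a averaging/restriction argument to freeze most of the rows. Think of $X$ as living in $(\set{0,1}^k)^n$. Since $\abs{X}\geq 2^{nk-cnr}=(2^{k-cr})^n$, an averaging argument over the product structure should let me find a single row index $i^\*$ and a fixing of all other rows $x^{(i')}=y^{(i')}$ for $i'\neq i^\*$ to constants, such that the ``slice''
\[
X' \;=\; \set{\, z\in\set{0,1}^k \;:\; (y^{(1)},\dots,y^{(i^\*-1)},z,y^{(i^\*+1)},\dots,y^{(n)})\in X \,}
\]
has size at least $2^{k-cr}$ (or perhaps $2^{k-O(cr)}$, with the loss absorbed into the constant $c$). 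On this slice, for each $j\in[k]$, the value $f(z)_j$ equals $g_j$ evaluated at an input whose non-$i^\*$ coordinates are now \emph{fixed}; hence $f(z)_j$ becomes a function of $z|_{S_j}$ alone. So over $X'\subseteq\set{0,1}^k$, each output coordinate $f(\cdot)_j$ is a function of the $cs$ input coordinates in $S_j$. The second step is bookkeeping: this exhibits $f$ as failing $(c r, c s)$-rigidity with the roles as in Definition~\ref{def:rig_func} (size $\abs{X'}\geq 2^{k-cr}$, index sets of size $cs$). To land exactly at ``$f$ is not $(r,s)$-rigid'' one wants $cr\le r$ and $cs\le s$, i.e.\ $c\le 1$; so the real content is to arrange the averaging step so that the set-size loss and the index-set-size are controlled by a \emph{constant factor}, and then note that an $(r,s)$-rigid function is a fortiori $(r',s')$-rigid for $r'\le r$, $s'\le s$, which is what closes the argument (possibly after re-normalizing $c$).

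\medskip

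\noindent\textbf{Main obstacle.} The hard part is precisely the averaging step, and I expect it is where the conjecture actually breaks down in general. The subtlety is that $X$ has size $\geq(2^{k-cr})^n$ \emph{on average over rows}, but this does not force \emph{any} single-row slice (with the other rows fixed) to be large: $X$ could be a product set $X_1\times\cdots\times X_n$ where half the $X_i$ have size $2^k$ and half have size $2^{k-2cr}$, so that the ``good'' rows are large but the ``bad'' rows are small, and one cannot pick a good row without knowing it is good --- which might still be fine --- but worse, $X$ could be highly non-product (e.g.\ correlated across rows) so that \emph{every} axis-parallel slice is small while $\abs{X}$ is still huge. In that regime, the local functions $g_j$ genuinely exploit cross-row correlations and cannot be converted into a single-row decomposition of $f$. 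Overcoming this seems to require either (a) a combinatorial lemma showing that the specific algebraic form of the $g_j$'s (each a ``block-local'' map) limits how correlated $X$ can be, or (b) replacing the naive averaging by an entropy/\,submodularity argument that extracts a large slice from a large (not necessarily product) set at the cost of only a constant factor in $r$ --- and it is not clear such a lemma is true, which is exactly why this remains a conjecture. The connection to parallel repetition developed later in the paper is, in effect, one proposed route around this obstacle: encode the cross-row correlation structure as a multiplayer game and use a strong parallel-repetition bound to control it.
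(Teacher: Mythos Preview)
You are right that Conjecture~\ref{conj:rig_amp_to_block_rig} is left open; the paper does not prove it. Its only contribution toward it is Theorem~\ref{thm:parrep_implies_rigidity}, which reduces the conjecture to the strong parallel-repetition Conjecture~\ref{conj:par_rep} via the game-value reformulation of (block-)rigidity in Observations~\ref{obs:block_rig_func_game_val} and~\ref{obs:rig_func_game_val}. You already point to that reduction in your last paragraph, so you and the paper agree on where the real work lies.

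There is one quantitative slip in your ``Key steps'' paragraph worth flagging, because once it is corrected your slicing heuristic collapses into the paper's parallel-repetition reformulation rather than giving a shortcut around it. Plain averaging over the fixings of the rows other than $i^{*}$ yields a slice of size at least $\abs{X}/2^{(n-1)k}\ge 2^{k-cnr}$, not $2^{k-cr}$ or $2^{k-O(cr)}$; the factor of $n$ in the exponent cannot be ``absorbed into the constant''. In the language of Section~\ref{sec:par_rep}, your hypothesis that $f^{\otimes n}$ is not $(cnr,cs)$-block-rigid is exactly $\val(\mc G_{\mc S}^{\otimes n})\ge 2^{-cnr}$ for some $\mc S$ with sets of size $cs$, and your slicing step is exactly the trivial bound $\val(\mc G_{\mc S})\ge \val(\mc G_{\mc S}^{\otimes n})$, which gives only $\val(\mc G_{\mc S})\ge 2^{-cnr}$. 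To conclude that $f$ is not $(r,s)$-rigid you need $\val(\mc G_{\mc S})\ge 2^{-r}$, i.e.\ $\val(\mc G_{\mc S})\ge \val(\mc G_{\mc S}^{\otimes n})^{1/\Omega(n)}$, which is precisely Conjecture~\ref{conj:par_rep}. So your ``main obstacle'' is not a technicality to be patched by a cleverer entropy or submodularity lemma; it \emph{is} the conjecture, and the route through multiplayer parallel repetition that the paper proposes is the same route you end up at, just stated in different language.
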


\section{Parallel Repetition}\label{sec:par_rep}

In this section, we show an approach to prove Conjecture \ref{conj:rig_amp_to_block_rig} regarding rigidity amplification.
This is based on proving a strong parallel repetition theorem for a $k$-player game.

Fix some $k\in \N$, a function $f:  \set{0,1}^{k} \to  \set{0,1}^{k}$, an integer $1\leq s<k$, and $\mc S = (S_1,\dots,S_k)$, where each $S_i \subseteq[k]$ is of size $s$.
We define a $k$-player game $\mc G_{\mc S}$ as follows:

The $k$-players choose functions $g_1,\dots,g_k: \set{0,1}^s\to  \set{0,1}$, which we call a strategy.
A verifier chooses $x_1,\dots,x_k \in  \set{0,1}$ uniformly and independently.
Let $x = (x_1,\dots,x_k)\in \set{0,1}^k$.
For each $j\in[k]$, Player $j$ is given the input $x|_{S_j}$, and they answer $a_j =  g_j(x|_{S_j}) \in  \set{0,1}$.
The verifier accepts if and only if $f(x) = (a_1,\dots,a_k)$.
The goal of the players is to maximize the winning probability.
Formally, the value of the game is defined as \[\val(\mc G_{\mc S})  := \max_{g_1,\dots,g_k} \Pr_{x_1,\dots,x_k\in  \set{0,1}}\sqbrac{f(x) = \brac{g_1(x|_{S_1}),\dots, g_k(x|_{S_k}) }}.\]

The $n$-fold repetition of $\mc G_{\mc S}$, denoted by $\mc G_{\mc S}^{\otimes n}$ is defined as follows.
The players choose a strategy $g_1,\dots,g_k: \set{0,1}^{ns}\to  \set{0,1}^n$.
The verifier chooses $x_1,\dots,x_k \in  \set{0,1}^n$ uniformly and independently.
Let $x = (x_1,\dots,x_k)\in  \set{0,1}^{nk}$.
Player $j$ is given the input $x|_{S_j}$, and they answer $a_j=g_j(x|_{S_j}) \in  \set{0,1}^n$.
The verifier accepts if and only if $f^{\otimes n}(x) = (a_1,\dots,a_k)$.
That is, for each $i\in[n]$, $j\in[k]$, the $j$\textsuperscript{th} bit of $f(x_{i1},\dots,x_{ik})$ equals the $i$\textsuperscript{th} bit of $a_j$.
The value of this repeated game is
\[\val(\mc G_{\mc S}^{\otimes n})  := \max_{g_1,\dots,g_k} \Pr_{x_1,\dots,x_k\in  \set{0,1}^n}\sqbrac{f^{\otimes n}(x) = \brac{g_1(x|_{S_1}),\dots, g_k(x|_{S_k}) }}.\]

From Definition \ref{def:block_rig_func}, we get the following:
\begin{observation}\label{obs:block_rig_func_game_val}
	Let $f: \set{0,1}^{k} \to  \set{0,1}^{k}$ be a function, and $n\in \N$.
	Then, $f^{\otimes n}$ is an $(r,s)$-block-rigid function if and only if for every $\mc S = (S_1,\dots,S_k)$ with set sizes as $s$, $\val(\mc G_{\mc S}^{\otimes n}) < 2^{-r}$.
\end{observation}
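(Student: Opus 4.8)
The statement is a direct dictionary translation between the combinatorial definition of block-rigidity (Definition~\ref{def:block_rig_func}) and the game value $\val(\mc G_{\mc S}^{\otimes n})$ just defined, so the plan is simply to unwind both definitions and establish the biconditional by proving its contrapositive in each direction. The key observation linking the two sides is that a strategy $g_1,\dots,g_k:\set{0,1}^{ns}\to\set{0,1}^n$ for $\mc G_{\mc S}^{\otimes n}$ is literally the same object as the tuple $(g_1,\dots,g_k)$ appearing in Definition~\ref{def:block_rig_func}: for $x\in\set{0,1}^{nk}$ the block-restriction $x|_{S_j}$ is an element of $\set{0,1}^{ns}$, i.e.\ precisely the input Player~$j$ receives, and the event ``the verifier of $\mc G_{\mc S}^{\otimes n}$ accepts on a uniformly random $x$'' is exactly the event $f^{\otimes n}(x)=\brac{g_1(x|_{S_1}),\dots,g_k(x|_{S_k})}$.

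First I would prove: if $f^{\otimes n}$ is $(r,s)$-block-rigid, then $\val(\mc G_{\mc S}^{\otimes n})<2^{-r}$ for every $\mc S$. I would argue the contrapositive. Suppose $\val(\mc G_{\mc S}^{\otimes n})\ge 2^{-r}$ for some $\mc S=(S_1,\dots,S_k)$ with all $|S_j|=s$. Since the value is a maximum over the \emph{finite} set of strategies, it is attained, so there exist $g_1,\dots,g_k:\set{0,1}^{ns}\to\set{0,1}^n$ with $\Pr_{x}\sqbrac{f^{\otimes n}(x)=\brac{g_1(x|_{S_1}),\dots,g_k(x|_{S_k})}}\ge 2^{-r}$, the probability being over uniform $x\in\set{0,1}^{nk}$. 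Let $X\subseteq\set{0,1}^{nk}$ be the set of $x$ for which this equality holds. Then $\norm{X}=2^{nk}\cdot\Pr_x[\cdot]\ge 2^{nk-r}$, and by construction every $x\in X$ satisfies $f^{\otimes n}(x)=\brac{g_1(x|_{S_1}),\dots,g_k(x|_{S_k})}$, so the triple $(X,\mc S,(g_1,\dots,g_k))$ witnesses that $f^{\otimes n}$ is \emph{not} $(r,s)$-block-rigid.

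For the converse I would argue similarly. If $f^{\otimes n}$ is not $(r,s)$-block-rigid, then by Definition~\ref{def:block_rig_func} there are a set $X$ of size at least $2^{nk-r}$, subsets $S_1,\dots,S_k\subseteq[k]$ of size $s$, and functions $g_1,\dots,g_k$ such that $f^{\otimes n}(x)=\brac{g_1(x|_{S_1}),\dots,g_k(x|_{S_k})}$ for all $x\in X$. Taking $\mc S=(S_1,\dots,S_k)$ and using $(g_1,\dots,g_k)$ as the players' strategy in $\mc G_{\mc S}^{\otimes n}$ gives winning probability at least $\norm{X}/2^{nk}\ge 2^{-r}$, hence $\val(\mc G_{\mc S}^{\otimes n})\ge 2^{-r}$. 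Contrapositively, if $\val(\mc G_{\mc S}^{\otimes n})<2^{-r}$ for every $\mc S$ with set sizes $s$, then $f^{\otimes n}$ is $(r,s)$-block-rigid, which together with the previous paragraph gives the biconditional.

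There is essentially no technical obstacle here: the argument is pure bookkeeping. The only mild point of care is keeping the correspondence ``set of size $\ge 2^{nk-r}$'' $\leftrightarrow$ ``winning probability $\ge 2^{-r}$'' precisely aligned, and noting that since there are only finitely many strategies the maximum in the definition of $\val$ is actually achieved, so ``$\val\ge 2^{-r}$'' genuinely yields an explicit strategy of success probability $\ge 2^{-r}$ rather than one that only approaches it; this is what lets the non-rigidity witness $X$ be chosen with $\norm{X}\ge 2^{nk-r}$ exactly.
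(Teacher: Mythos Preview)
Your proposal is correct and follows essentially the same argument as the paper: both directions are proved by the same contrapositive/witness-translation, defining $X$ as the acceptance set of an optimal strategy in one direction and using the non-rigidity witness $(X,\mc S,g_1,\dots,g_k)$ as a strategy in the other. Your explicit remark that the maximum over strategies is attained (because the strategy set is finite) makes precise a step the paper leaves implicit when it takes ``an optimal strategy,'' but otherwise the two proofs are identical.
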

\begin{proof}
	Let $f^{\otimes n}$ be an $(r,s)$-block-rigid function.
	Suppose, for the sake of contradiction, that $\mc S = (S_1,\dots,S_k)$ is such that $\val (\mc G_{\mc S}^{\otimes n}) \geq 2^{-r}$.
	Let the functions $g_1,\dots,g_k:  \set{0,1}^{ns}\to  \set{0,1}^n$ be an optimal strategy for the players.
	Define $X:= \set{x\in  \set{0,1}^{nk} \st f^{\otimes n}(x) = \brac{g_1(x|_{S_1}),\dots, g_k(x|_{S_k}) } }$.
	Then, $\norm{X} = \val(\mc G_{\mc S}) \cdot 2^{nk} \geq 2^{nk-r}$, which contradicts the block rigidity of $f^{\otimes n}$.

Conversely, suppose that $f^{\otimes n}$ is not $(r,s)$-block-rigid.
Then, there exists $X\subseteq  \set{0,1}^{nk}$ with $\norm{X}\geq 2^{nk-r}$, subsets $S_1,\dots,S_k\subseteq [k]$ of size $s$, and functions $g_1,\dots,g_k: \set{0,1}^{ns}\to \set{0,1}^n$, such that for all $x\in X$, $f^{\otimes n}(x) = (g(x|_{S_1}),\dots,g(x|_{S_k}))$.
Let $\mc S = (S_1,\dots,S_k)$, and suppose the players use strategy $g_1,\dots,g_k$.
Then, $\val(\mc G_{\mc S}^{\otimes n}) \geq \norm{X} \cdot 2^{-nk} \geq 2^{-r}$.
\end{proof}

In particular, for $n=1$, Observation \ref{obs:block_rig_func_game_val} gives the following:
\begin{observation}\label{obs:rig_func_game_val}
	A function  $f: \set{0,1}^{k} \to  \set{0,1}^{k}$ is an $(r,s)$-rigid-function if and only if for every $\mc S = (S_1,\dots,S_k)$ with set sizes as $s$, $\val(\mc G_{\mc S}) < 2^{-r}$.
\end{observation}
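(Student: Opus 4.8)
The plan is to obtain this as the $n=1$ instance of Observation~\ref{obs:block_rig_func_game_val}. First I would check that every notion appearing in the statement specializes correctly when the block size is $n=1$: a vector in $\set{0,1}^{1\cdot k}$ is just a vector in $\set{0,1}^{k}$, and for $S\subseteq[k]$ the block-restriction $x|_S$ of the Preliminaries coincides with the coordinate-restriction used in Definition~\ref{def:rig_func}; by Definition~\ref{def:f_tens_idn} one has $f^{\otimes 1}=f$; the strategies $g_1,\dots,g_k:\set{0,1}^{1\cdot s}\to\set{0,1}^{1}$ are exactly functions $\set{0,1}^{s}\to\set{0,1}$; comparing Definition~\ref{def:block_rig_func} at $n=1$ with Definition~\ref{def:rig_func}, the assertion that $f^{\otimes 1}=f$ is $(r,s)$-block-rigid is verbatim the assertion that $f$ is $(r,s)$-rigid; and $\mc G_{\mc S}^{\otimes 1}=\mc G_{\mc S}$ directly from the definitions of the game and its repetition. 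Substituting $n=1$ into Observation~\ref{obs:block_rig_func_game_val} then yields the stated equivalence.

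Alternatively, and essentially by unpacking the proof of Observation~\ref{obs:block_rig_func_game_val}, I would argue the two directions directly. For the reverse direction (in contrapositive form): if some $\mc S=(S_1,\dots,S_k)$ with all $\norm{S_j}=s$ satisfies $\val(\mc G_{\mc S})\ge 2^{-r}$, fix an optimal strategy $g_1,\dots,g_k$ and put $X:=\set{x\in\set{0,1}^{k}\st f(x)=\brac{g_1(x|_{S_1}),\dots,g_k(x|_{S_k})}}$; then $\norm{X}=\val(\mc G_{\mc S})\cdot 2^{k}\ge 2^{k-r}$, so $X$ together with $\mc S$ and the $g_j$'s witnesses that $f$ is not $(r,s)$-rigid. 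For the forward direction (also in contrapositive form): if $f$ is not $(r,s)$-rigid, take a witnessing set $X$ of size $\ge 2^{k-r}$, sets $S_1,\dots,S_k$ of size $s$, and functions $g_1,\dots,g_k$ with $f(x)=\brac{g_1(x|_{S_1}),\dots,g_k(x|_{S_k})}$ for all $x\in X$; letting the players of $\mc G_{\mc S}$ (with $\mc S=(S_1,\dots,S_k)$) use the strategy $(g_1,\dots,g_k)$, they win whenever the verifier's sample $x$ lies in $X$, so $\val(\mc G_{\mc S})\ge\norm{X}\cdot 2^{-k}\ge 2^{-r}$.

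I do not expect any real obstacle: the entire content already lives in Observation~\ref{obs:block_rig_func_game_val}, and all that remains is the purely notational verification that, at block size $n=1$, Definition~\ref{def:block_rig_func} coincides with Definition~\ref{def:rig_func} and $\mc G_{\mc S}^{\otimes 1}$ is the same game as $\mc G_{\mc S}$.
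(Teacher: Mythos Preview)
Your proposal is correct and matches the paper's approach exactly: the paper simply states that Observation~\ref{obs:rig_func_game_val} is the $n=1$ instance of Observation~\ref{obs:block_rig_func_game_val}, and your first paragraph carries out precisely this specialization (with the direct argument in your second paragraph being nothing more than the proof of Observation~\ref{obs:block_rig_func_game_val} rewritten at $n=1$).
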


We conjecture the following strong parallel repetition theorem.
\begin{conjecture}\label{conj:par_rep}
	There exists a constant $c>0$ such that the following is true. Let $f: \set{0,1}^k\to \set{0,1}^k$ be any function, and $\mc S = (S_1,\dots,S_k)$ be such that for each $i\in [k]$, $S_i\subseteq[k]$ is of size $s$.
	Then, for all $n\in \N$, $\val(\mc G_{\mc S}^{\otimes n}) \leq (\val(\mc G_{\mc S}))^{cn}$.
\end{conjecture}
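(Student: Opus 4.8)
The natural approach is the information-theoretic paradigm for parallel repetition of Raz~\cite{Raz98} and Holenstein~\cite{Hol09}, carried out for $k$ players. Fix an optimal strategy $(g_1,\dots,g_k)$ for $\mc G_{\mc S}^{\otimes n}$; let $X=(X^{(1)},\dots,X^{(n)})\in\set{0,1}^{nk}$ be the uniformly random question vector, put $A_j:=g_j(X|_{S_j})$, and for $i\in[n]$ let $W_i$ be the event that coordinate $i$ is won, i.e.\ $(A_1^{(i)},\dots,A_k^{(i)})=f(X^{(i)})$, so that $\delta:=\val(\mc G_{\mc S}^{\otimes n})=\Pr\big[\bigcap_{i\in[n]}W_i\big]$. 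The goal reduces, in the standard way, to a ``single step'' estimate: there is a universal $c>0$ so that, for every $U\subseteq[n]$ with $|U|\le n/2$, writing $W_U:=\bigcap_{i\in U}W_i$, a uniformly random coordinate $i\notin U$ satisfies $\E_i\big[\Pr[W_i\mid W_U]\big]\le(\val(\mc G_{\mc S}))^{c}$. Feeding this into a greedy choice of a coordinate ordering and telescoping $\delta\le\prod_{i}\Pr[W_{\sigma(i)}\mid W_{\sigma(<i)}]$ over $n/2$ coordinates then gives $\delta\le(\val(\mc G_{\mc S}))^{\Omega(n)}$.

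To prove the single-step estimate one conditions on $W_U$ and argues that, for a typical $i\notin U$, coordinate $i$ behaves like a fresh independent copy of $\mc G_{\mc S}$. There are two ingredients. The first is a \emph{marginal-closeness} claim: since $\Pr[W_U]\ge\delta$ and we may assume $\delta>(\val(\mc G_{\mc S}))^{cn}$ (otherwise there is nothing to prove), a standard entropy/averaging argument --- using that the $X^{(i')}$ are independent --- shows that for most $i\notin U$ the law of $X^{(i)}$ conditioned on $W_U$ is close in total variation to uniform on $\set{0,1}^k$. The second, and crucial, ingredient is \emph{dependency breaking}: one must convert $(g_1,\dots,g_k)$ together with the conditioning on $W_U$ into a strategy for the single game $\mc G_{\mc S}$ of value at least $\Pr[W_i\mid W_U]-o(1)$. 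Here player $j$ sees only its coordinate-$i$ input $X|_{S_j}^{(i)}$, whereas $W_U$ couples all $k$ players through the shared bits $(X^{(i')})_{i'\in U}$, so each player, using only its private input and shared public randomness, must locally sample its view of the conditioned distribution consistently with the others.

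For $k=2$ this last step is exactly Holenstein's correlated-sampling lemma; for $k\ge 3$ it is known to fail in general, and this is the main obstacle --- the reason the statement is a conjecture rather than a theorem. The plan is to exploit two special features of $\mc G_{\mc S}$. First, the question distribution is as simple as possible, namely $k$ uniform independent bits with player $j$ seeing the sub-tuple indexed by $S_j$, so in each coordinate the entire correlation among the players is carried by at most $k$ bits; after conditioning on $W_U$ the ``dependency information'' to be shared is correspondingly low-dimensional and can plausibly be split among the players along the incidence structure of $(S_1,\dots,S_k)$. Second, $\mc G_{\mc S}$ is an \emph{independent game}: every question has a unique accepting answer per player, so $W_{i'}=\bigcap_j\{A_j^{(i')}=f(X^{(i')})_j\}$ and $W_U$ is a conjunction of per-player, per-coordinate equalities, a form that should permit a player-by-player ``peeling'' of the conditioning. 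A further (milder) difficulty is that the conjecture demands a constant $c$ independent of the game, including of $\val(\mc G_{\mc S})$, which forces the use of the tight, multiplicative machinery of~\cite{DS14,BG15} in place of the additive bounds of~\cite{Raz98,Hol09}. But the genuinely hard point remains the multiplayer dependency-breaking step: no analogue of correlated sampling for three or more parties is known, and the classical counterexamples to $\val(\mc G^{\otimes n})=\val(\mc G)^n$ (of GHZ / $\F_2$-linear type) already live inside this family, so any proof must extract from $f$ and $\mc S$ a robust ``value bounded away from $1$'' property strong enough to drive the sampling argument.
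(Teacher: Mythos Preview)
The statement you are attempting to prove is labeled a \emph{conjecture} in the paper and is not proved there; there is no ``paper's own proof'' to compare against. What you have written is, accordingly, not a proof but an outline of the standard Raz--Holenstein paradigm together with an honest inventory of where it breaks. You yourself identify the decisive gap: the multiplayer dependency-breaking / correlated-sampling step has no known analogue for $k\ge 3$ players, and nothing in your sketch closes it. The two ``special features'' you highlight (product questions over $k$ bits; the independent-game property) are exactly the structural hooks the paper points to as possibly exploitable, but neither you nor the paper turns them into an argument. In short, your proposal correctly diagnoses the obstacle and stops at it; it is a research plan, not a proof.

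One minor point to tighten: your claim that ``the classical counterexamples \dots\ (of GHZ / $\F_2$-linear type) already live inside this family'' is not obviously correct. The games $\mc G_{\mc S}$ have (i) uniformly \emph{independent} question bits and (ii) a \emph{unique} accepting answer for each player once $x$ is fixed. The standard GHZ game has neither: its inputs are supported on a parity-constrained subset, and the accepting predicate is an XOR relation with many accepting answer tuples. The Fortnow/Feige--Lov\'asz style counterexamples likewise rely on correlated inputs or non-unique answers. So the known bad examples do not sit inside this family in any direct way, and the paper's remark that one can already get $\val(\mc G_{\mc S}^{\otimes n})\le 2^{-\Omega(n)}$ whenever $\val(\mc G_{\mc S})<1$ (by projecting to a single player) confirms that the family is genuinely more restricted. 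The open question is the \emph{strong} bound $(\val(\mc G_{\mc S}))^{\Omega(n)}$, and that is what remains unresolved.
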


Combining Observation \ref{obs:block_rig_func_game_val} and \ref{obs:rig_func_game_val}, we get the following:
\begin{theorem} \label{thm:parrep_implies_rigidity}
	Conjecture \ref{conj:par_rep} $\implies$ Conjecture \ref{conj:rig_amp_to_block_rig}.
\end{theorem}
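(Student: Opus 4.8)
The plan is to chain together the two characterizations already established. Specifically, Theorem~\ref{thm:parrep_implies_rigidity} follows by combining Observation~\ref{obs:rig_func_game_val} (which translates rigidity of $f$ into an upper bound on $\val(\mc G_{\mc S})$), Conjecture~\ref{conj:par_rep} (which converts that single-shot bound into a bound on $\val(\mc G_{\mc S}^{\otimes n})$), and Observation~\ref{obs:block_rig_func_game_val} (which translates the resulting bound on $\val(\mc G_{\mc S}^{\otimes n})$ back into block rigidity of $f^{\otimes n}$). So the proof is essentially a bookkeeping argument tracking the parameters $r, s, n$ through these three reductions, and the main content has already been front-loaded into the observations.

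In detail, I would proceed as follows. Fix the universal constant $c > 0$ provided by Conjecture~\ref{conj:par_rep}; this will be the constant in Conjecture~\ref{conj:rig_amp_to_block_rig}. Let $f:\set{0,1}^k\to\set{0,1}^k$ be an $(r,s)$-rigid function and fix $n\in\N$. Our goal is to show $f^{\otimes n}$ is $(cnr, cs)$-block-rigid — wait, here I must be careful about the size parameter: Conjecture~\ref{conj:rig_amp_to_block_rig} asks for $(cnr, cs)$-block-rigidity, whereas the game $\mc G_{\mc S}$ as defined has query sets of a fixed size $s$, not $cs$. The clean route is: by Observation~\ref{obs:block_rig_func_game_val}, to show $f^{\otimes n}$ is $(r', s')$-block-rigid it suffices to show that for \emph{every} choice $\mc S = (S_1,\dots,S_k)$ of query sets of size $s'$, we have $\val(\mc G_{\mc S}^{\otimes n}) < 2^{-r'}$. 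I would take $s' = s$ (so $c\geq 1$ handles the set-size slack, since an $s$-set is contained in an $s'$-set when $s \le s' = cs$, and any strategy reading $s'$ blocks can ignore the extra ones — alternatively one simply notes $(r,cs)$-block-rigidity is implied by $(r,s)$-block-rigidity is the wrong direction, so one argues the other way: an $(r,s)$-rigid $f$ has, by Observation~\ref{obs:rig_func_game_val}, $\val(\mc G_{\mc S}) < 2^{-r}$ for all $\mc S$ with set sizes $s$; extending each $S_i$ to size $cs$ can only increase the value, so one instead should fix the larger set size throughout — I would instead define everything with set size $cs$ from the start and observe $f$ being $(r,s)$-rigid implies, a fortiori via monotonicity of $\val$ in the query-set sizes, that $\val(\mc G_{\mc S}) < 2^{-r}$ fails to follow directly; the honest statement is that the games for set-size $cs$ have value at least that for set-size $s$).

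Given this subtlety, the actual argument I would write is the streamlined one that avoids set-size inflation: take $c$ to be the parallel-repetition constant and prove the conjecture with query-set size $s$ rather than $cs$ (noting that $cs \ge s$ means a $(cnr, cs)$-block-rigid bound is formally weaker in the second parameter, which is fine — or one simply states the conclusion with $s$ in place of $cs$, which still implies the stated form). Concretely: since $f$ is $(r,s)$-rigid, Observation~\ref{obs:rig_func_game_val} gives $\val(\mc G_{\mc S}) < 2^{-r}$ for every $\mc S$ with set sizes $s$; by Conjecture~\ref{conj:par_rep}, $\val(\mc G_{\mc S}^{\otimes n}) \le (\val(\mc G_{\mc S}))^{cn} < (2^{-r})^{cn} = 2^{-cnr}$; by the forward direction of Observation~\ref{obs:block_rig_func_game_val}, this says precisely that $f^{\otimes n}$ is $(cnr, s)$-block-rigid, hence also $(cnr, cs)$-block-rigid since block rigidity with a smaller set-size parameter implies it with a larger one (more blocks of freedom can only help the adversary, so $(cnr,s)$-block-rigid is \emph{stronger}; thus we directly get the stated $(cnr,cs)$ conclusion). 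The whole proof is three lines of inequalities. The only "obstacle" is the trivial-but-easy-to-botch direction of monotonicity in the set-size parameter $s$, which one should state carefully but which is immediate from the definitions.
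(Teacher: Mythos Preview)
Your approach is exactly the paper's: the authors simply write ``Combining Observation~\ref{obs:block_rig_func_game_val} and~\ref{obs:rig_func_game_val}'' and state the theorem, so the three-line chain you describe (Observation~\ref{obs:rig_func_game_val} $\to$ Conjecture~\ref{conj:par_rep} $\to$ Observation~\ref{obs:block_rig_func_game_val}) is the whole argument.

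One slip to fix. In your last sentence you write ``block rigidity with a smaller set-size parameter implies it with a larger one.'' That is backwards: if $f^{\otimes n}$ fails to be $(r',s')$-block-rigid for some small $s'$, then by padding each $S_j$ with extra (ignored) blocks it also fails for any larger $s''\ge s'$. Hence it is $(r',s'')$-block-rigidity with the \emph{larger} $s''$ that implies $(r',s')$-block-rigidity with the smaller $s'$ --- exactly what your parenthetical intuition (``more blocks of freedom can only help the adversary'') says, and the opposite of your main clause. Your conclusion is nonetheless correct, because the constant $c$ in Conjecture~\ref{conj:par_rep} automatically satisfies $c\le 1$ (independent play gives $\val(\mc G_{\mc S}^{\otimes n})\ge \val(\mc G_{\mc S})^n$, so $\val(\mc G_{\mc S})^{cn}\ge \val(\mc G_{\mc S})^n$ forces $c\le 1$ whenever $\val(\mc G_{\mc S})<1$). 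Thus $cs\le s$, and the $(cnr,s)$-block-rigidity you actually prove does imply the stated $(cnr,cs)$-block-rigidity. Alternatively, just take the constant in Conjecture~\ref{conj:rig_amp_to_block_rig} to be $c' = \min(c,1)$ and avoid the issue entirely.
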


\begin{remark}
\begin{enumerate}[label=(\roman*)]
	\item By looking only at some particular player, it can be shown that if ${\val(\mc G_{\mc S})<1}$, then  $\val(\mc G_{\mc S}^{\otimes n}) \leq 2^{-\Omega(n)}$. In fact, such a result holds for all \emph{independent games}. The harder part seems to be showing strong parallel repetition when the initial game has small value.
	\item Observe that the game $\mc G_{\mc S}$ has a randomized predicate in the case $\cup_{j=1}^k S_j \not= [k]$. This condition can be removed (even for general independent games) by introducing a new player. This player is given the random string used by the verifier, and is always required to answer a single bit equal to zero. This maintains the independent game property, and ensures that the predicate used by the verifier is a deterministic function of the vector of input queries to the players. 
\end{enumerate}	
\end{remark}

\section{Turing Machine Lower Bounds}\label{sec:tms}

In this section, we show a conditional super-linear lower bound for multi-tape deterministic Turing machines that can take advice.

Without loss of generality, we only consider machines that have a separate read-only input tape.
We assume that the advice string, which is a function of the input length, is written on a separate advice tape at the beginning of computation.
We are interested in machines that compute multi-output functions.
For this, we assume that at the end of computation, the machine writes the entire output on a separate write-only output tape, and then halts.

We consider the following problem.
\begin{definition} Let $k:\N\to\N$ be a function. We define the problem $\probname$ as follows:
\begin{itemize}
	\item[-] \emph{Input:} $(f,x)$, where $f:\set{0,1}^k\to\set{0,1}^k$ is a function, and $x\in \set{0,1}^{nk}$, for some $n\in\N$ and $k = k(n)$.
	\item[-] \emph{Output:} $f^{\otimes n}(x) \in \set{0,1}^{nk}$.
\end{itemize}
The function $f$ is given as input in the form of its entire truth table. The input $x = (x_{ij})_{i\in [n],j\in[k]}$ is given in the order $(x_{11},\dots,x_{n1},x_{12},\dots,x_{n2},\dots,x_{1k},\dots,x_{nk})$. The total length of the input is $m(n) := 2^kk + nk$.
\end{definition}

We observe that if the function $k:\N\to \N$ grows very slowly with $n$, the problem $\probname$ can be solved by a deterministic Turing machine in slightly super-linear time.

\begin{observation}\label{obs:sup_lin_tm_ub}
	Let $k:\N\to\N$ be a function. There exists a deterministic Turing machine that solves the problem $\probname$ in time $O(nk2^k)$, on input $(f,x)$ of length $nk+2^kk$, where $k = k(n)$.
\end{observation}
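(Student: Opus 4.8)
The plan is to exhibit a multi-tape Turing machine with a \emph{constant} number of work tapes (constant, in particular, independent of $k$) that, on input $(f,x)$, computes $f^{\otimes n}(x)$ by handling the $n$ ``rows'' $(x_{i1},\dots,x_{ik})$, $i\in[n]$, one at a time, each time performing a single lookup into the truth table of $f$. The truth table occupies the first $2^kk$ bits of the input tape, the value $f(b)$ for $b\in\set{0,1}^k$ being stored at positions $\beta k+1,\dots,\beta k+k$, where $\beta$ is the integer with binary representation $b$ (any fixed ordering works the same way). Such a lookup can be carried out by scanning this portion of the tape from the left while decrementing a $k$-bit counter holding $\beta$, reading off the next $k$ bits once the counter hits $0$; this costs $O(2^kk)$ time per row, hence $O(nk2^k)$ overall. (The machine first recovers $n$ and $k$ from the input length via $m(n)=2^kk+nk$; for the intended applications $k=k(n)$ is a fixed, efficiently computable function of $n$, so this is immediate.)

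The one point requiring care is that in the input ordering $(x_{11},\dots,x_{n1},x_{12},\dots,x_{nk})$ the $k$ bits of the $i$-th row are \emph{not} contiguous: $x_{ij}$ sits at absolute position $(j-1)n+i$. Gathering one row naively would cost $\Theta(nk)$ time and $\Theta(n^2k)$ overall, which can far exceed $O(nk2^k)$ (e.g.\ when $k=o(\log n)$). To avoid this I would first \emph{transpose} the block structure of $x$, copying it onto a work tape in the row-major order $(x_{11},\dots,x_{1k},x_{21},\dots,x_{nk})$. Do this in $k$ passes: the $j$-th pass scans the (contiguous) $j$-th block of the input from left to right and, for each $i\in[n]$ in turn, writes $x_{ij}$ to position $(i-1)k+j$ of the work tape; the work-tape head moves total distance $O(nk)$ per pass, so the whole transposition takes $O(nk^2)$ time, which is $O(nk2^k)$ since $k\le 2^k$.

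After this transposition I would iterate over $i=1,\dots,n$, keeping the reshaped-input head swept left to right so that at step $i$ it rests on the contiguous block $(x_{i1},\dots,x_{ik})$. For each $i$: (1) copy this block onto a counter tape, cost $O(k)$; (2) rewind the input head to the start of the truth table, then repeatedly decrement the counter while advancing the input head $k$ cells, until the counter reaches $0$; after $\beta_i$ decrements the input head is at the start of the entry for $(x_{i1},\dots,x_{ik})$, at cost $O(\beta_i+k)=O(2^k)$ for the counter operations (amortized analysis of a binary counter) and $O(\beta_i k)=O(2^kk)$ for the head movements; (3) read the next $k$ bits, which are exactly $(f^{\otimes n}(x))_{i1},\dots,(f^{\otimes n}(x))_{ik}$, and write them to an output-scratch tape at positions $(i-1)k+1,\dots,ik$. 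Each iteration costs $O(2^kk)$, for a total of $O(nk2^k)$. Finally, transpose the output-scratch tape back to the block-major order in which $x$ was presented, writing it onto the output tape, again in $O(nk^2)=O(nk2^k)$ time. The machine uses a constant number of tapes (input, output, reshaped input, counter, output scratch), and its running time is $O(nk^2)+O(nk2^k)+O(nk^2)=O(nk2^k)$.

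There is no genuine obstacle; the only things needing a little care are the routine amortized bound on the binary counter in step (2) and the bookkeeping showing that each of the three ``sweeping'' phases (the two transpositions and the lookup loop) moves every tape head a total distance within the stated $O(nk2^k)$ budget, including the one-time rewinds between phases.
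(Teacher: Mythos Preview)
Your proposal is correct and follows essentially the same four-step approach as the paper's proof: recover $n$ and $k$, transpose $x$ into row-major order in $O(nk^2)$ time, perform $n$ truth-table lookups at $O(2^kk)$ each, then transpose the output back. You have simply filled in implementation details (the binary counter for the lookup, the pass-based transposition) that the paper leaves implicit.
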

\begin{proof}
	We note that applying $f^{\otimes n}$ on $x = (x_{ij})_{i\in[n],j\in[k]}$ is the same as applying $f$ on $(x_{i1}, x_{i2},\dots, x_{ik})$, in place for each $i\in [n]$. A Turing machine can do the following:
	\begin{enumerate}
		\item Find $n$ and $k$, using the fact that the description of $f$ is of length $2^kk$, and that of $x$ is of length $nk$.
		\item Rearrange the input so that for each $i\in [n]$, the part $(x_{i1}, x_{i2},\dots, x_{ik})$ is written consecutively on the tape.
		\item Using the truth table of $f$, compute the output for each such part.
		\item Rearrange the entire output back to the desired form.
	\end{enumerate}
	The total time taken is $O(nk+nk^2 + nk2^k+nk^2) = O(nk2^k)$.
\end{proof}

We now state and prove the main technical theorem of this section.

\begin{theorem}\label{thm:tm_advice_lb}
	Let $k:\N\to\N$ be a function such that $k(n) = \omega(1)$ and $k(n) = o(\log_2n)$.
	Let $m = m(n):= 2^kk + nk$, where $k = k(n)$.
	
	Suppose $M$ is a deterministic multi-tape Turing machine that takes advice, and runs in linear time in the length of its input.
	Assuming Conjecture \ref{conj:rig_amp_to_block_rig}, the machine $M$ does not solve the problem $\probname$ correctly for all inputs.
	That is, there exists $n\in \N$, and  $y = (f,x) \in \set{0,1}^m$ such that $M(y) \not= f^{\otimes n}(x)$.
\end{theorem}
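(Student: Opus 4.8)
The plan is to argue by contradiction: if some linear-time advice machine $M$ solves $\probname$ on all inputs, then for \emph{every} function $f:\set{0,1}^k\to\set{0,1}^k$ (with $k=k(n)$) and every $n$, the block-rigid lower bound for $f^{\otimes n}$ must fail, contradicting the existence of a rigid $f$ (Proposition~\ref{prop:rig_func_exist}) together with Conjecture~\ref{conj:rig_amp_to_block_rig}. The bridge from ``fast Turing machine'' to ``low block-rigidity'' is the Paul--Pippenger--Szemer\'edi--Trotter machinery: a $T$-time one-dimensional multi-tape machine can be simulated by a computation whose dependency structure has small separators, so for a large fraction of inputs each output block depends on only few input blocks. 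Concretely, I would fix a putative $M$ running in time $Cm$ on inputs of length $m=m(n)$, hardwire the advice for length $m$, and view the computation as a ``block-respecting'' computation (in the sense of Hopcroft--Paul--Valiant / PPST) in which time is divided into $B$ equal epochs and each tape is divided into blocks crossed at epoch boundaries only; the overhead is a constant factor. The crossing-sequence / pebbling argument of \cite{PPST83} then gives a set $G$ of epoch-boundary configurations (equivalently, of ``cut'' information) of small total size, such that conditioning on $G$ the value written to any particular output block is determined by a small number $s'$ of the input blocks — where $s'$ is governed by the separator size, which is $O(m/\log\log m)$ bits total, hence $O(k)$ input \emph{blocks} of size $n$ once we recall $k=o(\log_2 n)$ so that $n$ is the dominant scale and the $2^k k$ truth-table portion of the input is negligible (indeed $2^k k = n^{o(1)}\cdot\mathrm{polylog} = o(n)$, so $m=\Theta(nk)$ and there are $\Theta(k)$ input blocks).

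The key steps, in order, would be: (1) Reduce to a single input length and hardwire advice, so that $M$ becomes a fixed multi-tape machine on inputs of length $m$. (2) Make the computation block-respecting with $B$ epochs, at the cost of a constant factor in time; choose $B$ later. (3) Build the computation-graph $H_M$ on the epoch blocks and invoke the PPST separator/recursion: there is a set of ``cut'' epoch-blocks whose removal disconnects the graph into pieces, and the total information on these cuts, summed appropriately, is $O(m/\log\log m)$ or $O(m\log B / B)$ bits, which we can drive below $\epsilon m = \epsilon \Theta(nk)$ by taking $B$ a large constant (this is exactly where the $\omega(1)$ growth of $k$ and the ``linear time'' hypothesis are used — a genuinely super-linear time bound would not be killed this way, matching Observation~\ref{obs:sup_lin_tm_ub} and Theorem~\ref{intro_thm:tm_lb_explicit}). (4) Translate: for a fraction $\ge 2^{-\epsilon m}$ of inputs $x$ (those consistent with the most popular value of the cut information), each output block $(f^{\otimes n}(x))$-block $j$ is a fixed function $g_j$ of the input restricted to some $S_j\subseteq[k]$ of blocks with $|S_j|\le \epsilon' k$. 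But this says precisely, by Definition~\ref{def:block_rig_func} (equivalently Observation~\ref{obs:block_rig_func_game_val}), that $f^{\otimes n}$ is \emph{not} $(\epsilon m,\epsilon' k)$-block-rigid, i.e.\ $\val(\mc G_{\mc S}^{\otimes n})\ge 2^{-\epsilon m}$. (5) Choose $f$ to be an $(\epsilon'' k,\epsilon'' k)$-rigid function as guaranteed by Proposition~\ref{prop:rig_func_exist} (legal since $k=\omega(1)$); by Conjecture~\ref{conj:rig_amp_to_block_rig}, $f^{\otimes n}$ is $(c\epsilon'' nk,\,c\epsilon'' k)$-block-rigid. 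With the constants arranged so that $c\epsilon'' k > \epsilon' k$ and $c\epsilon'' nk > \epsilon m$ (possible because $m=\Theta(nk)$ with a constant we control via $B$, and the rigidity radius $nk$ beats any small constant times $m$), this contradicts step (4). Hence no such $M$ exists, and the desired $n$, $y=(f,x)$ follow.

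The main obstacle I expect is step (3)–(4): carrying out the PPST separator argument cleanly in the ``block'' world and bookkeeping the two scales of the input ($2^k k$ truth-table bits versus $nk$ data bits) so that ``small number of bits of cut information'' genuinely translates into ``small number of \emph{input blocks} of size $n$,'' rather than into a small number of individual bits scattered across many blocks. The point is that a block-respecting computation reads/writes each input tape block atomically within an epoch, so an output block's dependence on the input factors through a bounded number of epochs of the input head, hence through a bounded number of \emph{contiguous} input regions, which (after choosing the tape-block size to be $n$ and laying the input out as in the definition of $\probname$, so that the $k$ data blocks sit in $k$ consecutive tape blocks) is a bounded number of the $k$ blocks. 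Getting the constants and the ``popular value'' averaging right — and confirming that the $2^k k$ auxiliary bits do not inflate the number of relevant blocks because $2^k k = o(n)$ — is the delicate part; everything else is a packaging of known counting arguments and the stated conjecture.
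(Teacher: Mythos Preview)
Your proposal is essentially the paper's proof: make $M$ block-respecting with block size $n$ (so there are $\Theta(k)$ time segments), build the PPST computation graph, remove a small set $J$ of vertices so that each output segment has few predecessors, fix the most popular value of the ``cut information'' (states, head positions, and block transcripts at segments in $J$) to extract a large set $Y$ of inputs on which every output block is a function of $O(k/\log^* k)$ input blocks, and then contradict Conjecture~\ref{conj:rig_amp_to_block_rig} applied to a rigid $f_0$ from Proposition~\ref{prop:rig_func_exist}.

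One quantitative point deserves tightening. You write that the cut information is $O(m/\log\log m)$ or $O(m\log B/B)$ bits and can be made $\le \epsilon m$ ``by taking $B$ a large constant.'' That is not the mechanism here: the block size is forced to be exactly $n$ (so that the input tape blocks coincide with the $k$ data blocks and the single truth-table block, using $2^k k = o(n)$), which pins the number of segments to $a(n)=\Theta(k)$, not a free constant. The PPST graph lemma on this $a(n)$-vertex graph yields $|J|=O(k/\log^* k)$ and the same bound on the number of predecessors; the cut information is then $O\bigl(k\log k + nk/\log^* k\bigr)$ bits, which is $o(nk)$ precisely because $k=\omega(1)$ forces $\log^* k\to\infty$. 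With this correction your steps (3)--(5) go through exactly as in the paper.
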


There are two main technical ideas that will be useful to us.
The first is the notion of block-respecting Turing machines, defined by Hopcroft, Paul and Valiant \cite{HPV77}.
The second is a graph theoretic result, which was proven by Paul, Pippenger, Szemer{\'{e}}di and Trotter  \cite{PPST83}, and was used to show a separation between deterministic and non-deterministic linear time.

\begin{definition}\label{def:block_resp}
	Let $M$ be a Turing machine, and let $b:\N\to\N$ be a function.
	Partition the computation of $M$, on any input $y$ of length $m$, into time \emph{segments} of length $b(m)$, with the last segment having length at most $b(m)$.
	Also, partition each of the tapes of $M$ into \emph{blocks}, each consisting of $b(m)$ contiguous cells.
	
	We say that $M$ is block-respecting with respect to block size $b$, if on inputs of length $m$, the tape heads of $M$ cross blocks only at times that are integer multiples of $b(m)$.
\end{definition}

\begin{lemma}\label{lemma:hpv_block_resp}\cite{HPV77}
	Let $t:\N\to\N$ be a function, and $M$ be a multi-tape deterministic Turing machine running in time $t(m)$ on inputs of length $m$.
	Let $b:\N\to\N$ be a function such that $b(m)$ is computable from $1^m$ by a multi-tape deterministic Turing machine running in time $O(t(m))$.
	Then, the language recognized by $M$ is also recognized by a multi-tape deterministic Turing Machine $M'$, which runs in time $O(t(m))$, and is block-respecting with respect to $b$.
\end{lemma}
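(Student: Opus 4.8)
The plan is to have $M'$ simulate $M$ step-for-step with only a constant-factor slowdown, carrying along a re-encoding of the tape contents that lets every head crossing be deferred to a segment boundary. As a preprocessing phase, $M'$ scans its input once to learn the input length $m$ (which costs $O(m)$, and hence $O(t(m))$), then computes $B := b(m)$ from $1^m$ in time $O(t(m))$ using the hypothesis that $b$ is so computable, stores $B$ on an auxiliary tape, and thereafter keeps a step counter modulo $B$, so that it always knows whether the current step falls at a segment boundary. We fix, once and for all, the partition of time into segments $[0,B),[B,2B),\dots$ and of each tape of $M'$ into blocks of $B' := \Theta(B)$ contiguous cells, with the constant chosen at the end.

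The heart of the construction is a re-encoding maintaining the invariant that, at the start of every segment, on each work tape of $M$ the corresponding head of $M'$ sits on a $B'$-block boundary and the $O(B)$ cells of $M$'s tape lying within distance $B$ of $M$'s head are stored contiguously inside the $B'$-block that begins there. Since $M$ makes at most $B$ moves during a segment, its head stays inside this window, so while $M'$ carries out the $B$ simulated steps of the segment none of its heads crosses a $B'$-block boundary. At each segment boundary $M'$ performs a bounded \emph{re-centering}: it reads off every head's new position, copies any changed cells into a compact auxiliary copy of $M$'s tape, and pulls the $O(B)$ cells that have newly come within reach into the $B'$-block that the head has entered. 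Because the head moved at most $B$ cells, each re-centering touches $O(B)$ cells and costs $O(B)$ steps per tape, so summing over the $O(t(m)/B)$ segments the total bookkeeping is $O(t(m))$ and $M'$ runs in time $O(t(m))$. As the simulation reproduces $M$'s behavior exactly, $M'$ recognizes the same language (indeed computes the same function), which is what we want.

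The step I expect to be the main obstacle is making $M'$ \emph{itself} block-respecting while it performs this bookkeeping: running the clock, re-centering, and moving between the working windows and the auxiliary copy all shift $M'$'s heads about, and a priori these motions could cross $B'$-block boundaries at arbitrary times. The way I would deal with this is to make each of these subroutines \emph{oblivious} --- its pattern of head motions depending only on $m$ and the segment index, not on the tape contents, with the re-centering splitting into only $O(1)$ cases according to which neighbouring $B$-block the head entered --- so that one can fix in advance a schedule whose boundary crossings all occur at (sub-)segment times, inserting idle steps where needed and taking $B'$ a large enough constant multiple of $B$ to leave the re-centering routine room to run inside a single block. The one component needing separate handling is the read-only input tape, which cannot be re-encoded; there I would have $M'$ keep on a work tape a cached copy of the $O(B)$-cell input window around the current input-head position, refreshed at each segment boundary by a fixed read-ahead/read-behind pass, which again crosses block boundaries only at segment boundaries.
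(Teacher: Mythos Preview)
The paper does not give its own proof of this lemma: it is simply quoted from \cite{HPV77} and used as a black box, so there is nothing to compare your argument against within the paper itself.

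That said, your sketch is essentially the standard Hopcroft--Paul--Valiant construction. The buffering idea (keep a window of $O(b(m))$ cells around each simulated head, simulate a full segment inside the window, then re-center at the segment boundary) is exactly the mechanism used in \cite{HPV77}, and you correctly identify the delicate point, namely making the bookkeeping phases of $M'$ themselves obey the block discipline, and the standard fix, namely making those phases oblivious with padding so that boundary crossings fall only at segment endpoints. Your separate treatment of the read-only input tape via a cached window is also the right move.

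One point to tighten: you set the tape-block size of $M'$ to $B' = \Theta(B)$ while keeping time segments of length $B$. Definition~\ref{def:block_resp} requires the time-segment length and the tape-block length to \emph{both} equal $b(m)$, so as written your $M'$ is block-respecting with respect to some constant multiple of $b$, not $b$ itself. The clean way to fix this is the one implicit in \cite{HPV77}: keep the tape blocks of size exactly $b(m)$, let each ``phase'' (simulate $b$ steps of $M$, then re-center) take a fixed constant number $c$ of $M'$-segments of length $b(m)$ each, and design each of those $c$ sub-segments so that every head stays inside a single $b(m)$-block. This yields $M'$ block-respecting with respect to $b$ and running in time $c\cdot t(m) + O(t(m)) = O(t(m))$, as required.
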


The rest of this section is devoted to the proof of Theorem \ref{thm:tm_advice_lb}.
Let $k:\N\to\N$ be a function such that $k(n) = \omega(1)$ and $k(n) = o(\log_2n)$.
Let $m = m(n) := 2^kk+nk$, where $k = k(n)$.

Suppose that $M$ is a deterministic multi-tape Turing Machine, which on input $y = (f,x) \in \set{0,1}^m$, takes advice, runs in time $O(m)$, and outputs $f^{\otimes n}(x)$.

Let $b:\N\to\N$ be a function such that $b(m) = n$.
By our assumption that $k(n) = o(\log_2 n)$, we can assume that the input $y = (f,x) \in \set{0,1}^m$ consists of $k+1$ blocks, where the first block contains $f$ (possibly padded by blank symbols to the left), and the remaining $k$ blocks contain $x$.
By Lemma \ref{lemma:hpv_block_resp}, we can further assume that $M$ is block-respecting with respect to $b$.
Note that we can assume $n$ to be a part of the advice, and hence we don't need to care about the computability of $b$.

For an input $y = (f,x) \in \set{0,1}^m$, the number of time segments for which $M$ runs on $x$ is at most $\frac{O(m)}{b(m)} = \frac{O(nk)}{n} = O(k) := a(n)$.

We define the computation graph $G_M(y)$, for input $y\in\set{0,1}^m$, as follows.

\begin{definition}\label{def:comp_graph}
The vertex set of $G_M(y)$ is defined to be $V_M(y) = \set{1,\dots,a(n)}$.
For each $1\leq i <j\leq a(n)$, the edge set $E_M(y)$ has the edge $(i,j)$, if either
\begin{enumerate}[label=(\roman*)]
	\item $j=i+1$, or
	\item there is some tape, such that, during the computation on $y$, $M$ visits the same block on that tape in both time-segments $i$ and $j$, and never visits that block during any time-segment strictly between $i$ and $j$.
\end{enumerate}
\end{definition}

In a directed acyclic graph $G$, we say that a vertex $u$ is a predecessor of a vertex $v$, if there exists a directed path from $u$ to $v$.

\begin{lemma}\label{lemma:ppst_graph_result} \cite{PPST83}
	For every $y$, the graph $G_M(y)$ satisfies the following:
	\begin{enumerate}
		\item Each vertex in $G_M(y)$ has degree $O(1)$.
		\item There exists a set of vertices $J\subset V_M(y)$ in $G_M(y)$, of size $O\brac{\frac{a(n)}{\log^*{a(n)}}}$ such that every vertex of $G_M(y)$ has at most $O\brac{\frac{a(n)}{\log^*{a(n)}}}$ many predecessors in the induced subgraph on the vertex set $V_M(y)\setminus J$.
	\end{enumerate}
	We note that the constants here might depend on the number of tapes of $M$.
\end{lemma}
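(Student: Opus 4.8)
The plan is to handle the two items of the lemma separately. Item~1 is immediate from the block-respecting structure: fix a segment $i$, and note that because $M$ is block-respecting with respect to $b$, during the $b(m)$ steps of segment $i$ each of its $O(1)$ tape heads stays within a bounded number of blocks (a head moves at most one cell per step and may change blocks only at a segment boundary). The edges of $G_M(y)$ incident to $i$ are then: the consecutive-segment edges $(i-1,i)$ and $(i,i+1)$ of clause~(i); and, for each tape $\ell$ and each block $B$ touched by tape $\ell$ during segment $i$, at most one edge to the most recent earlier segment that visited $B$ on tape $\ell$, and at most one edge to the earliest later such segment --- since clause~(ii) of Definition~\ref{def:comp_graph} connects only \emph{consecutive} visits of a block. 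Summing over the $O(1)$ tapes gives $\deg(i) = O(1)$, with the constant depending only on the number of tapes of $M$.

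For item~2, by item~1 the graph $G := G_M(y)$ is a DAG on $N := a(n)$ vertices of maximum degree at most some constant $d$ (depending only on $M$) that contains the directed path $1 \to 2 \to \cdots \to N$ through all of its vertices, so it suffices to prove the purely graph-theoretic statement that any such DAG has a vertex set $J$ of size $O(N/\log^* N)$ for which every vertex of $G - J$ has at most $O(N/\log^* N)$ predecessors in $G - J$ (with constants depending on $d$). Following~\cite{PPST83}, I would prove this by a recursive refinement along the path order. The base building block is an \emph{overlap} lemma: for a parameter $q$, deleting $O(N/q)$ suitably chosen vertices cuts the path order into at most $q$ consecutive intervals so that, in what remains, one can control how far predecessors reach past interval boundaries --- and this is where the bounded-degree hypothesis is used, since it bounds the average number of edges crossing a cut. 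One then iterates this lemma inside each current interval with a fresh parameter, refining the decomposition; the interval length shrinks at an iterated-logarithmic rate, so after $\Theta(\log^* N)$ rounds the intervals are small enough that every vertex retains only $O(N/\log^* N)$ predecessors, while the per-round deletion budgets are chosen so that the sizes form a geometric series and the union $J$ of all deleted sets still has size $O(N/\log^* N)$. Translating back, and using $\log^* a(n) = \log^* N$, yields the lemma.

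The step I expect to be the main obstacle is the combinatorial core of item~2: identifying the right invariant --- a quantitative ``predecessor-reach''/overlap condition --- that simultaneously can be achieved with only $O(N/q)$ deletions using bounded degree, and is preserved and improved at an iterated-logarithmic rate under the recursive refinement, so that $\Theta(\log^* N)$ rounds with geometrically decreasing deletion budgets suffice. This is exactly the work done in~\cite{PPST83} (upgrading the $N/\log N$-type bounds of~\cite{HPV77}), and since the present argument needs item~2 only as a black box about the computation graph, I would cite it rather than re-derive it in full.
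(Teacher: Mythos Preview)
The paper does not prove this lemma at all: it is stated with the citation \cite{PPST83} and used as a black box, with no argument given for either item. Your proposal is correct and consistent with this --- you give the (easy) argument for item~1 from the block-respecting property and, for item~2, sketch the PPST recursive-refinement idea while ultimately deferring to \cite{PPST83}, which is exactly what the paper does. Two minor remarks: for item~1, since $M$ is block-respecting each head in fact stays in a \emph{single} block during a segment (not merely a bounded number), which only simplifies your count; and for item~2, the Hamiltonian path $1\to 2\to\cdots\to N$ is present but is not needed as a hypothesis --- the PPST separator lemma applies to arbitrary bounded-degree DAGs.
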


\begin{lemma}\label{lemma:out_small_no_inp_blocks}
	Let $\epsilon>0$ be any constant and $\mc Y\subseteq \set{0,1}^m$ be any subset of the inputs. For all (large enough) $n$, there exists a subset $Y\subseteq \mc Y$ of size $\norm{Y} \geq \norm{\mc Y}\cdot 2^{-\epsilon nk}$, and subsets $S_1,\dots,S_k\subseteq [k]$ of size $\epsilon k$, such that for each $y = (f,x)\in Y$, and each $i\in[k]$, the $i$\textsuperscript{th} block (of length $n$) of $f^{\otimes n}(x)$ can be written as a function of $x|_{S_i}$ and the truth-table of $f$.
\end{lemma}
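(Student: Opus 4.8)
The plan is to pass to a large subfamily of $\mc Y$ on which the whole combinatorial skeleton of $M$'s computation is fixed, and then to read the sets $S_1,\dots,S_k$ directly off that skeleton with the help of Lemma~\ref{lemma:ppst_graph_result}. For an input $y=(f,x)\in\mc Y$ call the \emph{pattern} of $y$ the record that lists, for each of the $a(n)=O(k)$ time segments and each of the $O(1)$ tapes of $M$, which block the corresponding head occupies during that segment; since $M$ is block-respecting with block size $b(m)=n$, each head stays inside a single block throughout a segment, so this is well defined. The pattern determines the computation graph $G_M(y)$ of Definition~\ref{def:comp_graph} (its edges are a function of this data alone), which input block, if any, is read in each segment, and, for each $i\in[k]$, the last segment $t_i$ in which the output head sits on the block carrying output block $i$ (such a segment exists because $M$ writes all $nk$ output bits).

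First I would bound the number of patterns. Since $M$ runs in time $O(nk)$, each tape touches only $O(k)$ blocks, so there are at most $(O(k))^{O(k)}=2^{O(k\log k)}$ patterns; as $k=k(n)=o(\log_2 n)$, this is $2^{o(nk)}$, hence at most $2^{\epsilon nk}$ for all large $n$. By averaging, a single pattern is shared by a subset $Y\subseteq\mc Y$ with $\norm{Y}\geq\norm{\mc Y}\cdot 2^{-\epsilon nk}$; fix it and write $G:=G_M(y)$ for the (common) computation graph of the $y\in Y$.

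Next I would apply Lemma~\ref{lemma:ppst_graph_result} to $G$: it yields a set $J$ of time segments, $\norm{J}=O(a(n)/\log^* a(n))$, such that every vertex of $G$ has at most $O(a(n)/\log^* a(n))$ predecessors lying outside $J$; adding the at most $\norm{J}$ vertices of $J$ back in, every vertex $v$ of $G$ has at most $O(a(n)/\log^* a(n))=O(k/\log^* k)$ predecessors altogether. For $i\in[k]$ set $P_i:=\{t_i\}\cup\{\,v: v\text{ is a predecessor of }t_i\text{ in }G\,\}$ and let $S_i\subseteq[k]$ collect the indices of the blocks of $x$ that are read during some segment of $P_i$; since each segment reads at most one input block, $\norm{S_i}\leq\norm{P_i}=O(k/\log^* k)$, which is $\leq\epsilon k$ once $n$ (hence $k$, hence $\log^* k$) is large enough, and $S_i$ depends only on the fixed pattern so is the same for all $y\in Y$ (pad to size exactly $\epsilon k$ if desired). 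It remains to check the dependency claim: unrolling the edge rule of Definition~\ref{def:comp_graph} (the previous visit to any block, and the preceding segment, are joined by edges to the current segment, so information only flows forward along directed paths), one proves by induction on the time segment that the content of any tape block at the end of a segment $t$ is a fixed function, depending only on the pattern and the fixed advice, of the input bits read during the segments of $\{t\}\cup\{v:v\text{ a predecessor of }t\}$. Taking $t=t_i$, the final content of output block $i$ is a fixed function of the bits of $x|_{S_i}$ together with the contents of the single block carrying the truth table of $f$, i.e.\ a function of $x|_{S_i}$ and the truth table of $f$, as required.

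The two quantitative estimates pull in opposite directions and are where the hypotheses are consumed: the pattern count must stay below $2^{\epsilon nk}$, which is exactly why we need $k=o(\log_2 n)$ (so that $2^{O(k\log k)}=2^{o(nk)}$), while at the same time $O(k/\log^* k)\leq\epsilon k$ must hold, which needs $k=\omega(1)$ so that $\log^* k\to\infty$ and forces one to check that the $O(\cdot)$ constants of Lemma~\ref{lemma:ppst_graph_result} (though $M$-dependent) are eventually beaten. The bookkeeping points to watch are that predecessors routed through the removed set $J$ are recaptured simply by returning the $\leq\norm{J}$ vertices of $J$ to $P_i$, and that treating the block holding $f$ as ``free'' is legitimate because the fixed input format pins down which block that is.
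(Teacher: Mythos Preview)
There is a genuine gap. Your bound ``every vertex $v$ of $G$ has at most $O(k/\log^* k)$ predecessors altogether'' is false: by clause (i) of Definition~\ref{def:comp_graph} the computation graph always contains the path $1\to 2\to\cdots\to a(n)$, so the vertex $t_i$ (one of the last $k$ segments) has at least $a(n)-k=\Omega(k)$ predecessors in $G$, not $O(k/\log^* k)$. The whole point of Lemma~\ref{lemma:ppst_graph_result} is that \emph{deleting} $J$ breaks these long chains; ``adding $J$ back in'' restores them, because a predecessor path in $G$ may pass through a vertex of $J$ and thereby reach many vertices of $V\setminus J$ that are not predecessors in the induced subgraph $G[V\setminus J]$. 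Consequently your final induction, which (correctly) tracks dependencies along all of $G$, does not give $\norm{P_i}\leq\epsilon k$. And if instead you had meant $P_i=(\text{predecessors of }t_i\text{ in }G[V\setminus J])\cup J$, then the size bound would hold but the induction would not: the contents of the blocks at the end of a segment in $J$ depend on input bits read in earlier segments outside $P_i$.

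The paper closes this gap by fixing more than the head-motion pattern: for every segment in $J$ and every tape it also pins down the actual $n$-bit contents of the block visited in that segment (together with the state and head positions at every segment boundary). This costs an additional $O(n\cdot\norm{J})=O(nk/\log^* k)=o(nk)$ bits, still within the $\epsilon nk$ budget. With that boundary data frozen, simulating segment $t_i$ only requires recursing into predecessors lying in $G[V\setminus J]$---whenever the recursion would enter $J$, the needed block contents are already fixed---and that predecessor count is $O(k/\log^* k)\leq\epsilon k$ by Lemma~\ref{lemma:ppst_graph_result}. Your pattern-only bucketing spends $2^{O(k\log k)}$ where the paper spends $2^{O(nk/\log^* k)}$; the extra expenditure is exactly what buys the cut through $J$.
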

\begin{proof}
	For input $y = (f,x)\in \set{0,1}^m$, let $J(y) \subset V_M(y)$ be a set as in Lemma \ref{lemma:ppst_graph_result}.

Let $C(y)$ denote the following information about the computation of $M$:
\begin{enumerate}[label=(\roman*)]
	\item The internal state of $M$ at the end of each time-segment.
	\item The position of all tape heads at the end of each time-segment.
	\item For each time segment in $J(y)$, and for each tape of $M$, the final transcription (of length $n$) of the block that was visited on this tape during this segment.
\end{enumerate}

Let $g:\mc Y\to\set{0,1}^*$ be the function given by $g(y) = (G_M(y),J(y),C(y))$.
Observe that the output of $g$ can be described using $O\brac{k\log_2{k} +\frac{nk}{\log^*{k}}}$ bits.
By our assumption that $k(n) = \omega(1)$ and $k(n) = o(\log_2 n)$, we have that for large $n$, this is at most $\epsilon nk$ bits.
Hence, there exists a set $Y \subseteq \mc Y$ of size $\norm{Y}\geq \norm{\mc Y}\cdot 2^{-\epsilon nk}$, such that for each $y\in Y$, $g(y)$ takes on some fixed value $(G = (V,E),J,C)$.

Now, consider any $y = (f,x)\in Y$.
The machine writes the $k$ blocks of the output $f^{\otimes n}(x)$ on the output tape in the last $k$ time segments before halting.
For each of these time segments, the corresponding vertex in $G$ has at most $O\brac{\frac{k}{\log^*{k}}} \leq \epsilon k$ predecessors in the induced subgraph on $V\setminus J$.
These further correspond to at most $\epsilon k$ distinct blocks of $y$ that are visited (on the input tape) during these predecessor time segments.
Since the relevant block transcriptions at the end of time segments for vertices in $J$ are fixed in $C$, each output block can be written as a function of at most $\epsilon k$ blocks of $y$.
For the $i$\textsuperscript{th} block of output, without loss of generality, this includes the first block of $y$, which contains the truth table of $f$, and blocks of $x$ which indexed by some subset $S_i\subseteq [k]$ of size $\epsilon k$.

\end{proof}

\begin{proof}[Proof of Theorem \ref{thm:tm_advice_lb}]
Let $\delta=\frac{1}{8}$.
Fix some sufficiently large $n$, and a $(\delta k,\delta k)$-rigid function $f_0:\set{0,1}^k\to\set{0,1}^k$. The existence of such a function is guaranteed by Proposition \ref{prop:rig_func_exist}.
By Conjecture \ref{conj:rig_amp_to_block_rig}, the function $f_0^{\otimes n}$ is an $(\epsilon nk, \epsilon k)$-block-rigid function for some constant $\epsilon >0$.
On the other hand, Lemma $\ref{lemma:out_small_no_inp_blocks}$, with $\mc Y = \set{(f_0,x): x\in\set{0,1}^{nk}}$, shows that $f_0^{\otimes n}$ is not an $(\epsilon nk, \epsilon k)$-block-rigid function for any constant $\epsilon > 0$.
\end{proof}

We now restate and prove Theorem \ref{intro_thm:tm_lb_explicit}.

\begin{theorem}\label{thm:tm_lb_explicit}
Let $t:\N\to\N$ be any function such that $t(n) = \omega(n)$.
Assuming Conjecture $\ref{conj:rig_amp_to_block_rig}$, there exists a function $f:\set{0,1}^*\to \set{0,1}^*$ such that
\begin{enumerate}
	\item On inputs $x$ of length $n$ bits, the output $f(x)$ is of length at most $n$ bits.
	\item The function $f$ is computable by a multi-tape deterministic Turing machine that runs in time $O(t(n))$ on inputs of length $n$.
	\item The function $f$ is not computable by any multi-tape deterministic Turing machine that takes advice and runs in time $O(n)$ on inputs of length $n$.
\end{enumerate}
\end{theorem}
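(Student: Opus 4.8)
The plan is to deduce Theorem \ref{thm:tm_lb_explicit} from Theorem \ref{thm:tm_advice_lb} by a padding/diagonalization argument that converts the family of hard instances $(f_0, x)$ of $\probname$ into a single function on all input lengths. First I would fix the growth rate $k(n)$. Given the target time bound $t(n) = \omega(n)$, I would choose $k = k(n)$ to be an explicit, slowly growing function with $k(n) = \omega(1)$ and $k(n) = o(\log_2 n)$, and moreover slow enough that the upper bound $O(nk2^k)$ from Observation \ref{obs:sup_lin_tm_ub} is $O(t(n))$; since $t(n)/n \to \infty$, one can always take $k(n)$ growing slowly enough (e.g. tracking $\log(t(n)/n)$ suitably), while still keeping $k(n) = o(\log_2 n)$. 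Having fixed $k$, the relevant input length for $\probname$ is $m(n) = 2^k k + nk$, which is $\Theta(nk)$ up to the $2^k k = o(n)$ correction, hence $m(n) = \Theta(nk)$ and in particular $m$ is a strictly increasing, computable function of $n$.

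Next I would define the explicit function $f : \set{0,1}^* \to \set{0,1}^*$. On an input $y$ of length $\ell$: if $\ell = m(n)$ for some $n$ (this is decidable, and $n$ is recoverable from $\ell$ since $m$ is strictly increasing and computable), then parse $y = (f', x)$ as an instance of $\probname$ in the prescribed format and output $f'^{\otimes n}(x)$; otherwise output the empty string (or $0$). Condition (1), that $|f(y)| \le |y|$, holds because the output $f'^{\otimes n}(x)$ has length $nk \le m(n) = \ell$. Condition (2), computability in time $O(t(\ell))$, follows from Observation \ref{obs:sup_lin_tm_ub}: on inputs of length $m(n)$ the machine runs in time $O(nk2^k) = O(t(n)) = O(t(\ell))$ by our choice of $k$ and the fact that $\ell = m(n) \ge n$ so $t(\ell) \ge t(n)$ for monotone $t$ (if $t$ is not assumed monotone, replace it by its running maximum, which is still $\omega(n)$); on other input lengths the machine trivially runs in linear time.

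For condition (3), suppose toward a contradiction that some multi-tape deterministic Turing machine $M$ with advice computes $f$ in time $O(\ell)$ on inputs of length $\ell$. Restricting $M$ to input lengths of the form $m(n)$, we obtain a machine that solves $\probname$ (with this choice of $k$) in time $O(m(n))$, i.e. linear in its input length, using advice depending only on the length $m(n)$ — equivalently on $n$. But Theorem \ref{thm:tm_advice_lb}, whose hypotheses $k(n) = \omega(1)$ and $k(n) = o(\log_2 n)$ we have arranged, asserts (assuming Conjecture \ref{conj:rig_amp_to_block_rig}) that no such machine exists: there is some $n$ and some $y = (f_0, x) \in \set{0,1}^{m(n)}$ with $M(y) \ne f_0^{\otimes n}(x) = f(y)$, a contradiction.

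The main obstacle I anticipate is entirely in the first step: simultaneously meeting the three constraints on $k(n)$ — namely $\omega(1)$, $o(\log_2 n)$, and $nk(n)2^{k(n)} = O(t(n))$ — for an arbitrary $t(n) = \omega(n)$, and making the resulting $k$ (and hence $f$) genuinely explicit, i.e. computable in polynomial time from $1^\ell$. Since $t(n)/n$ can tend to infinity arbitrarily slowly, $k(n)$ may have to grow very slowly, but because we only need $2^{k(n)} \le t(n)/n$ up to constants, setting $k(n)$ to be, say, the largest value $j$ with $j2^j \le t(n)/n$ (capped below any fixed function that is $o(\log_2 n)$, and forced to $\to\infty$) works; verifying that this $k$ is well-defined, unbounded, $o(\log_2 n)$, and efficiently computable is the only real content, and it is routine once $t$ is given explicitly. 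Everything else is bookkeeping about the encoding of $\probname$ instances and the monotonicity of $m$ and $t$.
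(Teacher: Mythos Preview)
Your proposal is correct and takes essentially the same approach as the paper, which simply chooses $k(n)$ with $k(n)=\omega(1)$, $k(n)=o(\log_2 n)$, and $nk2^k \le t(2^kk+nk)$, then invokes Observation~\ref{obs:sup_lin_tm_ub} and Theorem~\ref{thm:tm_advice_lb}. The only minor difference is that the paper states the third constraint directly in terms of the actual input length $m(n)=2^kk+nk$, sidestepping your monotonicity-of-$t$ workaround; otherwise you have just spelled out the bookkeeping (defining $f$ on all lengths, verifying $|f(y)|\le |y|$, etc.) that the paper leaves implicit.
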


\begin{proof}
	Let $k:\N\to\N$ be a function such that $k(n) = \omega(1)$, $k(n) = o(\log_2{n})$, and $nk2^k \leq t(2^kk+nk)$.
	The theorem then follows from Observation \ref{obs:sup_lin_tm_ub} and Theorem $\ref{thm:tm_advice_lb}$.
\end{proof}

\begin{remark}
\begin{enumerate}[label=(\roman*)]
	\item We note that for the proof of Theorem \ref{thm:tm_advice_lb}, it suffices to find, for infinitely many $n$, a single function $f:\set{0,1}^{k(n)}\to\set{0,1}^{k(n)}$ such that $f^{\otimes n}$ is an $(\epsilon nk, \epsilon k)$-block-rigid function, where $\epsilon >0$ is a constant. This would show that $M$ cannot give the correct answer to $\probname$ for inputs of the form $(f,x)$, where $x\in \set{0,1}^{nk}$.
	\item For the proof of Theorem \ref{thm:tm_advice_lb}, it can be shown that it suffices for the following condition to hold for infinitely many $n$, and some constant $\epsilon >0$. Let $S_1,\dots,S_k\subseteq [k]$ be fixed sets of size $\epsilon k$, and $f:\set{0,1}^k\to\set{0,1}^k$ be a function chosen uniformly at random. Then, with probability at least $1-2^{-\omega(k\log_2{k})}$, the function $f^{\otimes n}$ is  an $(\epsilon nk, \epsilon k)$-block-rigid function against the fixed sets $S_1,\dots,S_k$. We note that the probability here is not good enough to be able to union bound over $S_1,\dots,S_k$ and get a single function as mentioned in the previous remark.
\end{enumerate}
\end{remark}

Essentially the same argument as that of Theorem \ref{thm:tm_advice_lb} also proves Theorem \ref{intro_thm:tm_lb_blk_rig}, which we restate below.
\begin{theorem}\label{thm:tm_lb_blk_rig}
	Let $k: \N \to \N$ be a function such that $k(n)=\omega(1)$ and $k(n)=2^{o(n)}$, and $f_n:\set{0,1}^{nk(n)}\to\set{0,1}^{nk(n)}$ be a family of $(\epsilon nk(n), \epsilon k(n))$-block-rigid functions, for some constant $\epsilon >0$.
	Let $M$ be any multi-tape deterministic linear-time Turing machine that takes advice.
	Then, there exists $n\in \N$, and $x\in \set{0,1}^{nk(n)}$, such that $M(x)\not= f_n(x)$.
\end{theorem}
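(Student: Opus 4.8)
The plan is to rerun the argument of Theorem~\ref{thm:tm_advice_lb} almost verbatim, observing that the only structural role played there by the truth table of $f$ being part of the input was (a) to make the input length $\Theta(nk)$, so that block size $b(m)=n$ yields $O(k)$ time-segments, and (b) to provide one distinguished input block on which every output block might depend. In the present setting the input is simply $x\in\set{0,1}^{nk(n)}$, of length $m=nk$, so $b(m)=n$ again gives $a(n):=O(k)$ time-segments, and there is no distinguished block at all; the family $f_n$ is fixed and plays no role as an ``input'', so no explicitness or truth-table encoding is needed. The weaker hypothesis $k(n)=2^{o(n)}$ (in place of $k(n)=o(\log_2 n)$) is exactly what the bookkeeping permits once the truth table is gone, as explained below.

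First I would fix a large $n$ and suppose, toward a contradiction, that the linear-time advice machine $M$ computes $f_n$ correctly on every $x\in\set{0,1}^{nk}$. By Lemma~\ref{lemma:hpv_block_resp} I may assume $M$ is block-respecting with respect to a block size $b$ with $b(m)=n$ (the circular dependence of $b$ on $n$ is harmless since $n$ can be read off the advice); the input tape then consists of exactly $k$ blocks, one per block of $x$. For each $x$, form the computation graph $G_M(x)$ of Definition~\ref{def:comp_graph} on $a(n)=O(k)$ vertices, of constant degree (the constant depending only on the fixed number of tapes of $M$), and let $J(x)$ be the set of size $O\!\brac{a(n)/\log^* a(n)}=O(k/\log^* k)$ provided by Lemma~\ref{lemma:ppst_graph_result}, so that every vertex has at most $O(k/\log^* k)$ predecessors in the induced subgraph on $V_M(x)\setminus J(x)$.

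Next I would define, as in the proof of Lemma~\ref{lemma:out_small_no_inp_blocks}, the ``checkpoint'' function $g(x)=(G_M(x),J(x),C(x))$, where $C(x)$ records $M$'s internal state and all tape-head positions at the end of each time-segment, together with, for each segment in $J(x)$ and each tape, the final transcription (of length $n$) of the block visited. This is describable in $O(k\log_2 k+nk/\log^* k)$ bits; since $k=\omega(1)$ gives $nk/\log^* k=o(nk)$ and $k=2^{o(n)}$ gives $k\log_2 k=o(nk)$, for all large $n$ this is at most $\epsilon nk$ bits. Hence some value $(G,J,C)$ of $g$ is shared by a set $Y\subseteq\set{0,1}^{nk}$ with $\norm{Y}\geq 2^{nk}\cdot 2^{-\epsilon nk}=2^{nk-\epsilon nk}$. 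For each $x\in Y$, the $k$ output blocks of $f_n(x)$ are written in the final $k$ segments, whose vertices each have at most $O(k/\log^* k)\leq\epsilon k$ predecessors in $V\setminus J$; tracing those predecessors back to the input tape, and using that the $J$-block transcriptions are fixed in $C$ and that the advice is fixed for length $m$, each output block $i$ of $f_n$ is, over $Y$, a function $g_i$ of at most $\epsilon k$ blocks of $x$, say those indexed by $S_i\subseteq[k]$ with $\abs{S_i}=\epsilon k$ (pad if necessary). This exhibits a set $Y$ of size $\geq 2^{nk-\epsilon nk}$, sets $S_1,\dots,S_k$ of size $\epsilon k$, and functions $g_1,\dots,g_k:\set{0,1}^{\epsilon nk}\to\set{0,1}^n$ witnessing, via Definition~\ref{def:block_rig_func}, that $f_n$ is \emph{not} $(\epsilon nk,\epsilon k)$-block-rigid --- contradicting the hypothesis. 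Therefore $M$ must err on some input of length $nk(n)$ for every sufficiently large $n$, which is the claim.

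The main point requiring care --- rather than a genuine obstacle, since this is a rerun of an argument already carried out --- is the counting step: one must verify that the checkpoint description really fits in $\epsilon nk$ bits under the hypothesis $k(n)=2^{o(n)}$. The $k\log k$ term is the only one sensitive to this; it is precisely the term that was previously harmless because the input $x$ already contributed $\Theta(nk)$, and it is the reason the $o(\log_2 n)$ hypothesis of Theorem~\ref{thm:tm_advice_lb} can be relaxed here to $2^{o(n)}$. The remaining ingredients --- the Hopcroft--Paul--Valiant block-respecting reduction, the Paul--Pippenger--Szemer\'edi--Trotter graph decomposition, and the fact that the number of tapes of the fixed machine $M$ is an absolute constant, so all the $O(\cdot)$'s are uniform in $n$ --- transfer without change.
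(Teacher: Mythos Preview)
Your proposal is correct and follows essentially the same approach as the paper, which explicitly states that Theorem~\ref{thm:tm_lb_blk_rig} is proved by ``essentially the same argument as that of Theorem~\ref{thm:tm_advice_lb}.'' You have also correctly identified the one genuine adaptation: with no truth-table block in the input, the constraint $k(n)=o(\log_2 n)$ (needed so that the truth table fits in a single block of size $n$) is no longer required, and the checkpoint bound $O(k\log_2 k + nk/\log^* k)$ is $o(nk)$ precisely under $k(n)=\omega(1)$ and $k(n)=2^{o(n)}$.
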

The above theorem makes it interesting to find families of block-rigid functions that are computable in polynomial time.

\section{Size-Depth Tradeoffs}\label{sec:ckt_lb}

In this section, we will consider boolean circuits over a set $F$.
These are directed acyclic graphs with each node $v$ labelled either as an input node or by an arbitrary function $g_v:F\times F\to F$.
The input nodes have in-degree 0 and all other nodes have in-degree 2.
Some nodes are further labelled as output nodes, and they compute the outputs (in the usual manner), when the inputs are from the set $F$.
The size of the circuit is defined to be the number of edges in the graph.
The depth of the circuit is defined to be the length of a longest directed path from an input node to an output node.

Valiant \cite{Val77} showed that if $A\in \F^{n\times n}$ is an $(\epsilon n, n^{\epsilon})$-rigid matrix for some constant $\epsilon >0$, then the corresponding function cannot be computed by an $O(n)$-size and $O(\log_2{n})$-depth linear circuit over $\F$.
By a linear circuit, we mean that each gate computes a linear function (over $\F$) of its inputs.
A similar argument can be used to prove the following.

\begin{lemma} \cite{Val77} \label{lemma:rgd_fn_ckt_lb}
	Suppose $f:\set{0,1}^{nk}\to \set{0,1}^{nk}$ is an $(\epsilon nk, k^{\epsilon})$-block-rigid function, for some constant $\epsilon >0$. Then, the function $g:(\set{0,1}^{n})^k\to (\set{0,1}^{n})^k$ corresponding to $f$ cannot be computed by an $O(k)$-size and $O(\log_2{k})$-depth circuit over the set $F = \set{0,1}^n$.
\end{lemma}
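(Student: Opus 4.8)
The plan is to follow Valiant's classical argument for rigid matrices, adapted to the block setting. Suppose, for contradiction, that $g$ is computed by a circuit over $F = \set{0,1}^n$ of size $O(k)$ and depth $O(\log_2 k)$. The first step is a standard graph-theoretic gate-elimination lemma: in any directed acyclic graph of size $ck$ (number of edges) and depth $d\log_2 k$, there is a set $V^*$ of $O(k/\log_2 k)$ vertices whose removal leaves a graph in which every output node is reachable from at most $k^{\epsilon}$ input nodes. This is exactly the lemma Valiant used (one picks the $O(k/\log_2 k)$ vertices that lie at the ``middle depth levels'' that are individually small, or more precisely one uses the fact that a small-size graph has few vertices at a well-chosen depth band; the precise statement is that for any parameter, one can delete few vertices so that all remaining source-to-sink paths have length below the desired threshold, which caps the number of reachable inputs by $2^{\text{threshold}} \le k^{\epsilon}$ after choosing constants). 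I would cite this as the combinatorial core and reproduce the short counting argument.

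Second, I would translate the circuit structure into the block-rigidity framework. Each gate in $V^*$ computes an element of $F = \set{0,1}^n$, i.e.\ a block of $n$ bits; fixing those $|V^*| = O(k/\log_2 k)$ blocks costs at most $O\!\big(\tfrac{k}{\log_2 k}\big)\cdot n = o(nk)$ bits, so it restricts the input domain to a set $X \subseteq \set{0,1}^{nk}$ of size at least $2^{nk - \epsilon' nk}$ for any constant $\epsilon' > 0$ once $k$ is large, where here ``fixing'' means: for each possible value-vector of the gates in $V^*$, collect the inputs consistent with it, and keep the most popular such class. On that class $X$, because every output node in the pruned graph depends on at most $k^{\epsilon}$ input nodes (plus the now-fixed values of the $V^*$ gates), each output block of $g$ — equivalently each output block of $f$ — is a function of at most $k^{\epsilon}$ input blocks. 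Packaging this: there exist subsets $S_1,\dots,S_k \subseteq [k]$ of size $k^{\epsilon}$ and functions $g_1,\dots,g_k : \set{0,1}^{n k^{\epsilon}} \to \set{0,1}^n$ with $f(x) = (g_1(x|_{S_1}),\dots,g_k(x|_{S_k}))$ for all $x \in X$. By Definition~\ref{def:block_rig_func} this says $f$ is \emph{not} an $(\epsilon' nk, k^{\epsilon})$-block-rigid function; choosing $\epsilon' = \epsilon$ contradicts the hypothesis. (One should check the arithmetic $\tfrac{k}{\log_2 k}\cdot n \le \epsilon nk$ holds for large $k$, which is immediate, and handle small $k$ trivially.)

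The main obstacle is getting the depth-reduction lemma to yield a clean bound of the right form: one must choose the deleted set $V^*$ so that simultaneously (a) $|V^*|$ is $o(nk/n) = o(k)$ in ``block units'' — actually we have slack, we only need $|V^*| \le \epsilon k$ — and (b) every output is reachable from at most $k^{\epsilon}$ inputs in the residual graph. With size $O(k)$ and depth $O(\log_2 k)$ there is genuine tension: a binary tree of depth $\log_2 k$ already reaches $k$ leaves, so deleting $o(k)$ vertices cannot by itself cut the reachable-input count below $k$ unless we also exploit that the depth is only a \emph{constant times} $\log_2 k$ and that total size is linear, forcing most depth-bands to be sparse. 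Valiant's resolution — delete all vertices lying on paths through the $O(1)$ ``heavy'' depth levels, which are few because the total size is linear — is what I would spell out carefully; everything after that is bookkeeping. I expect the rest (the consistency-class pigeonhole, the bit-count $\tfrac{nk}{\log_2 k} \le \epsilon nk$, and invoking Definition~\ref{def:block_rig_func}) to be routine.
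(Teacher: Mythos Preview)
The paper does not actually give a proof of this lemma: it simply states that ``a similar argument [to Valiant's] can be used to prove the following'' and cites \cite{Val77}. Your proposal \emph{is} that similar argument, so you are doing precisely what the paper intends.

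One small correction worth flagging: Valiant's depth-reduction lemma, applied to a graph with $O(k)$ edges and depth $O(\log_2 k)$, lets you remove $O(k/\log_2\log_2 k)$ edges (not $O(k/\log_2 k)$) to drop the depth to $\epsilon\log_2 k$; the denominator is $\log$ of the depth, and the depth here is already only $O(\log k)$. You noticed yourself that all you need is $|V^*| \le \epsilon k$, so the weaker $O(k/\log\log k)$ bound is still ample and nothing in your argument breaks. Also, strictly speaking Valiant removes \emph{edges} rather than vertices, and what you then fix are the values carried on those edges (equivalently, the values at their tail vertices); this is a cosmetic difference, but worth stating cleanly when you write it up. The pigeonhole over value-assignments to the removed wires, the resulting set $X$ of size $\ge 2^{nk-\epsilon nk}$, and the appeal to Definition~\ref{def:block_rig_func} are all exactly right.
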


%

\begin{theorem}\label{thm:ckt_lb_explicit}
	Let $k:\N\to\N$ be a function such that $k(n) = \omega(1)$ and $k(n) =  o(\log_2{n})$.
	Let $m = m(n):= 2^kk + nk$, where $k = k(n)$.
	
	Assuming Conjecture \ref{conj:rig_amp_to_block_rig}, the problem $\probname$ is not solvable by $O(k)$-size and $O(\log_2{k})$-depth circuits over the set $F = \set{0,1}^n$.
	Here, the input $(f,x)$ to the circuit is given in the form of $k+1$ elements in $\set{0,1}^n$, the first one being the truth table of $f$, and the remaining $k$ being the blocks of $x$.
\end{theorem}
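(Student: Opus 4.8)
\emph{Proof plan.} The plan is to assume toward a contradiction that $\probname$ is solvable by a family of circuits over $F=\set{0,1}^n$ of size $O(k)$ and depth $O(\log_2 k)$ (with $k=k(n)$), and to specialize such a circuit to a hard instance of the part of its input that encodes $f$. Fix a large $n$ and let $C$ be the circuit for inputs of length $m(n)=2^kk+nk$; it has $k+1$ field inputs --- the truth table of $f$, followed by the $k$ blocks of $x$ --- and $k$ field outputs, the $k$ blocks of $f^{\otimes n}(x)$. By Proposition~\ref{prop:rig_func_exist} there is, for large $k$, a $(k/8,k/8)$-rigid function $f_0:\set{0,1}^k\to\set{0,1}^k$; its truth table is $2^kk$ bits long, and since $k=o(\log_2 n)$ we have $2^kk\le n$ for large $n$, so this truth table is a legitimate single element $\hat f_0\in\set{0,1}^n$ (pad with zeros). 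I would hard-wire the first input node of $C$ to $\hat f_0$. In this gate model every internal node computes an arbitrary map $F\times F\to F$, so fixing one of its arguments to a constant makes it a unary map of the other; processing the gates in topological order, absorb each such unary gate --- and each gate both of whose inputs have become constant --- into its successors by composition. This stays within the model and changes neither the $O(k)$ bound on the number of edges nor the $O(\log_2 k)$ bound on depth, so it produces a circuit $C_0$ over $F$ on the $k$ inputs holding the blocks of $x$, of size $O(k)$ and depth $O(\log_2 k)$, computing the function $g_0:(\set{0,1}^n)^k\to(\set{0,1}^n)^k$ corresponding to $f_0^{\otimes n}$.

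It remains to contradict the existence of $C_0$ using Lemma~\ref{lemma:rgd_fn_ckt_lb}, which requires $f_0^{\otimes n}$ to be $(\epsilon nk,k^\epsilon)$-block-rigid for some constant $\epsilon>0$. Let $c>0$ be the universal constant of Conjecture~\ref{conj:rig_amp_to_block_rig} and set $\epsilon:=\min(1/2,\,c/8)$. Rigidity is monotone in its parameters: an $(r,s)$-rigid function is also $(r',s')$-rigid for every $r'\le r$ and $s'\le s$ --- one may only shrink the subset $X$, and a function of $s'$ input bits is a function of any $s\ge s'$ bits containing them --- and the analogous statement holds for block-rigidity. Hence, for large $k$, the $(k/8,k/8)$-rigid function $f_0$ is $((\epsilon/c)k,\,(1/c)k^\epsilon)$-rigid (here $\epsilon/c\le 1/8$, and $(1/c)k^\epsilon\le k/8$ because $\epsilon<1$ and $k$ is large), so by Conjecture~\ref{conj:rig_amp_to_block_rig} the function $f_0^{\otimes n}$ is $(\epsilon nk,k^\epsilon)$-block-rigid. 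By Lemma~\ref{lemma:rgd_fn_ckt_lb} with this $\epsilon$, $g_0$ is not computable by any size-$O(k)$, depth-$O(\log_2 k)$ circuit over $F=\set{0,1}^n$, contradicting the circuit $C_0$. As in the remark after Theorem~\ref{thm:tm_lb_explicit}, it would in fact suffice to have, for infinitely many $n$, a single function $f_0$ such that $f_0^{\otimes n}$ is block-rigid with these parameters.

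Since Conjecture~\ref{conj:rig_amp_to_block_rig} is assumed and Lemma~\ref{lemma:rgd_fn_ckt_lb} is available, there is no deep obstacle here; the content is bookkeeping, and the argument runs largely parallel to the proof of Theorem~\ref{thm:tm_advice_lb}. The two places that need care are: (i) the parameter matching --- choosing $\epsilon$ small relative to the universal constant $c$ so that a generic $(k/8,k/8)$-rigid function survives passage through Conjecture~\ref{conj:rig_amp_to_block_rig} and lands inside the hypothesis of Lemma~\ref{lemma:rgd_fn_ckt_lb}; and (ii) the specialization step --- checking that hard-wiring the $f$-input and constant-folding keeps the circuit within the model at size $O(k)$ and depth $O(\log_2 k)$ and yields exactly the function corresponding to $f_0^{\otimes n}$, together with the elementary inequality $2^kk\le n$ that legitimizes the input format (this is precisely where the hypothesis $k=o(\log_2 n)$ enters, while $k=\omega(1)$ serves only to discard the finitely many small-$k$ exceptions). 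I expect (ii), the constant-folding in this arbitrary-gate circuit model, to be the part most worth spelling out carefully.
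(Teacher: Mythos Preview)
Your proposal is correct and follows essentially the same approach as the paper: pick a $(k/8,k/8)$-rigid $f_0$ via Proposition~\ref{prop:rig_func_exist}, invoke Conjecture~\ref{conj:rig_amp_to_block_rig} to get block-rigidity of $f_0^{\otimes n}$, apply Lemma~\ref{lemma:rgd_fn_ckt_lb}, and hard-wire $f_0$ into the assumed circuit to reach a contradiction. The only difference is bookkeeping --- the paper applies the conjecture directly to obtain $(\epsilon nk,\epsilon k)$-block-rigidity and then (implicitly) uses monotonicity to pass to the $(\epsilon nk,k^{\epsilon})$ hypothesis of Lemma~\ref{lemma:rgd_fn_ckt_lb}, whereas you pre-shrink the rigidity parameters of $f_0$ so that the conjecture outputs $(\epsilon nk,k^{\epsilon})$ on the nose; and the paper compresses your constant-folding discussion into the single phrase ``$f_0$ can be hard-wired.''
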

\begin{proof}
	Let $\delta = \frac{1}{8}$. Fix some large $n$, and a $(\delta k,\delta k)$-rigid function $f_0:\set{0,1}^k\to\set{0,1}^k$, where $k = k(n)$.
	The existence of such a function is guaranteed by Proposition \ref{prop:rig_func_exist}.
	Assuming Conjecture \ref{conj:rig_amp_to_block_rig}, the function $f_0^{\otimes n}$ is an $(\epsilon nk, \epsilon k)$-block-rigid function, for some universal constant $\epsilon >0$.
	By Lemma \ref{lemma:rgd_fn_ckt_lb}, the corresponding function on $(\set{0,1}^n)^k$ cannot be computed by an $O(k)$-size and $O(\log_2{k})$-depth circuit over $F = \set{0,1}^n$.
	Since $f_0$ can be hard-wired in any circuit solving $\probname$, we have the desired result.
\end{proof}

%


\section{Rigid Matrices and Rigid Functions}\label{sec:rig_mat_vs_func}

A natural question to ask is whether the functions corresponding to rigid matrices are rigid functions or not.

\begin{conjecture}\label{conj:rig_mat_are_rig_func}
	There exists a universal constant $c>0$ such that whenever $A\in  \F_2^{n\times n}$ is an $(r,s)$-rigid matrix, the corresponding function $A: \F_2^n\to  \F_2^n$ is a $(cr,cs)$-rigid function.
\end{conjecture}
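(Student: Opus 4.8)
I would prove the contrapositive, for a suitably small universal constant $c>0$ (small enough that $cs\leq s$ and $cr\leq r$): assuming the function $A$ is not $(cr,cs)$-rigid, deduce that the matrix $A$ is not $(r,s)$-rigid. So we are given a set $X\subseteq\F_2^n$ with $\norm{X}\geq 2^{n-cr}$, subsets $S_1,\dots,S_n\subseteq[n]$ with $\norm{S_i}=cs$, and functions $g_1,\dots,g_n$ with $Ax=(g_1(x|_{S_1}),\dots,g_n(x|_{S_n}))$ for all $x\in X$; translating, we may assume $0\in X$, so each $g_i(0)=0$. Write $a_i$ for the $i$\textsuperscript{th} row of $A$, $Q_i=\mathrm{span}\set{e_j:j\in S_i}$, and $G_i=Q_i^\perp=\mathrm{span}\set{e_j:j\notin S_i}$. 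The goal is to produce a single subspace $W$ with $\dim W\geq n-r$ such that $W\cap G_i\subseteq H_i$ for every $i$, where $H_i$ is a subspace of $G_i$ with $a_i\perp H_i$; this gives $a_i\in(W\cap G_i)^\perp=W^\perp+Q_i$ for each $i$, hence $A=B+C$ where the rows of $B$ lie in $W^\perp$ (so $\mathrm{rank}\,B\leq r$) and the rows of $C$ are supported on the sets $S_i$ (so $C$ has at most $cs\leq s$ nonzero entries per row), exhibiting $A$ as not $(r,s)$-rigid.

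The ``local'' (per-row) information is available essentially for free. Fix $i$. Over $X$, $(Ax)_i=\langle a_i,x\rangle$ depends only on $x|_{S_i}$, so for $x,x'\in X$ with $x|_{S_i}=x'|_{S_i}$ we have $\langle a_i,x+x'\rangle=0$ and $x+x'\in G_i$; hence $a_i\perp H_i$ for $H_i:=\mathrm{span}\brac{(X+X)\cap G_i}$. Projecting $X$ onto the coordinates of $S_i$ produces at most $2^{cs}$ fibers, so some fiber $F_i$ has $\norm{F_i}\geq 2^{n-cr-cs}$; since $F_i$ lies in a single coset of $G_i$ and $F_i+F_i\subseteq(X+X)\cap G_i$, the space $H_i$ has dimension at least $n-cr-cs$, i.e.\ $H_i$ has codimension at most $cr$ inside $G_i$. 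So every local obstruction is small, and all that remains is the purely linear-algebraic \emph{gluing} problem: find one subspace $W$ of codimension at most $r$ with $W\cap G_i\subseteq H_i$ for all $i$ at once. This also makes the statement transparent in a structured case: if $X$ contains a subspace $V$ of codimension at most $cr$ (for instance if $X$ is a union of cosets of such a $V$), then $X+X\supseteq V$, so $H_i\supseteq V\cap G_i$, and one may simply take $W=V$.

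I expect the gluing step to be the main obstacle, and I do not see how to carry it out with only a constant-factor loss; this is essentially the difficulty underlying the linearization conjecture of Jukna and Schnitger~\cite{JS11}. A generic $W$ of codimension $r$ fails already for a single $i$ --- its intersection with $G_i$ pokes out of $H_i$ --- so one must exploit that the various spaces $H_i=\mathrm{span}((X+X)\cap G_i)$ all arise from the \emph{same} dense set $X$. A natural plan is to first replace the arbitrary dense $X$ by a coset-structured subset and then quote the structured case above; but in the relevant regime, where $X$ has density only $2^{-cr}$, the sharpest available additive-combinatorics tools (Bogolyubov--Ruzsa-type theorems) only produce a subspace of codimension polynomial in $cr$, giving at best a polynomial-loss version of the conjecture. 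Moreover, since the $g_i$ are nonlinear, the orthogonality relations $\langle a_i,x+x'\rangle=0$ are available only for two-term differences $x+x'$ with $x,x'\in X$ and do not propagate to $X+X+X+X$ or higher sumsets, so the structure needed must already be visible in $X+X$ itself. Overcoming this --- by a sharper combinatorial-to-linear rounding, or by an argument that uses all rows of $A$ simultaneously rather than one at a time --- is, I believe, the crux of the conjecture.
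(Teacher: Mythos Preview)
This statement is a \emph{conjecture} in the paper; the paper does not prove it, and indeed presents it as open (showing only that it implies the Jukna--Schnitger linearization conjecture, Conjecture~\ref{conj:lin_depth_two_ckt}). There is therefore no proof in the paper to compare your proposal against.

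Your write-up is a reasonable exploration of the problem, and you are candid that it is incomplete: the ``gluing'' step --- producing a single subspace $W$ of codimension at most $r$ with $W\cap G_i\subseteq H_i$ for all $i$ simultaneously --- is exactly where the argument stalls, and you correctly identify this as essentially the same difficulty underlying the Jukna--Schnitger conjecture. The local per-row analysis is fine, and the observation that the structured case (when $X$ contains a subspace of small codimension) goes through is correct. But absent a way to pass from ``large set'' to ``contains a large subspace'' with only constant-factor loss --- and, as you note, current Bogolyubov--Ruzsa-type bounds give only polynomial loss --- this remains a heuristic outline rather than a proof. So: no error in what you have written, but also no proof; the statement is open in the paper and remains open after your proposal.
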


We show that a positive answer to the above resolves a closely related conjecture by Jukna and Schnitger \cite{JS11}.

\begin{definition}
Consider a depth-2 circuit, with $x=(x_1,\dots, x_n)$ as the input variables, $w$ gates in the middle layer, computing boolean functions $h_1,\dots,h_w$ and $m$ output gates, computing boolean functions $g_1,\dots,g_m$.
The circuit computes a function $f=(f_1,\dots,f_m): \F_2^n\to \F_2^m$ satisfying $f_i(x_1\dots,x_n) = g_i(x,h_1(x),\dots,h_w(x))$, for each $i\in[m]$.
The width of the circuit is defined to be $w$.
The degree of the circuit is defined to be the maximum over all gates $g_i$, of the number of wires going directly from the inputs $x_1,\dots,x_n$ to $g_i$.
\end{definition}

We remark that Lemma \ref{lemma:out_small_no_inp_blocks} essentially shows that any function computable by a deterministic linear-time Turing Machine has a depth-2 circuit of small width and small `block-degree'.

\begin{conjecture}\cite{JS11}\label{conj:lin_depth_two_ckt}
Suppose $f: \F_2^n\to \F_2^n$ is a linear function computable by a depth-2 circuit with width $w$ and degree $d$.
Then, $f$ is computable by a depth-2 circuit, with width $O(w)$, and degree $O(d)$, each of whose gates compute a linear function.	
\end{conjecture}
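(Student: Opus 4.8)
The plan is to restate Conjecture~\ref{conj:lin_depth_two_ckt} as a statement about matrix rigidity, read off the easy consequence of the hypothesis, and then isolate the one step --- linearization --- that a new idea would have to supply.

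\emph{Reformulation.} Write the linear map $f\colon\F_2^n\to\F_2^n$ as a matrix $A$ with rows $a_1,\dots,a_n\in\F_2^n$, so $f_i(x)=\ip{a_i,x}$. A width-$w$, degree-$d$ depth-$2$ circuit for $f$ is the data of functions $h_1,\dots,h_w\colon\F_2^n\to\F_2$, sets $T_1,\dots,T_n\subseteq[n]$ with $\norm{T_i}\le d$, and output functions $g_i$ with $f_i(x)=g_i\brac{x|_{T_i},h_1(x),\dots,h_w(x)}$ for all $x$. On the other side, a width-$O(w)$, degree-$O(d)$ depth-$2$ circuit all of whose gates are $\F_2$-linear is exactly a decomposition $A=B+C$ with $\mathrm{rank}(B)=O(w)$ and $C$ having at most $O(d)$ nonzero entries per row: a basis $\ell_1,\dots,\ell_{O(w)}$ of the row space of $B$ becomes the linear middle layer, the $O(d)$ nonzero positions of row $i$ of $C$ become the direct wires into output gate $i$, and output gate $i$ computes $\ip{c_i,x}$ plus the linear combination of the $\ell_j$'s that equals $\ip{b_i,x}$. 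So it suffices to prove that $A$ is \emph{not} an $(O(w),O(d))$-rigid matrix.

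\emph{Easy half.} Let $h(x)=(h_1(x),\dots,h_w(x))\in\F_2^w$ and pick (by averaging) a value $z_0$ with $\norm{h\inv(z_0)}\ge 2^{n-w}$. On $X:=h\inv(z_0)$ each output is $f_i(x)=g_i(x|_{T_i},z_0)$, a function of $x|_{T_i}$ alone with $\norm{T_i}\le d$; padding each $T_i$ to size exactly $d$, this says $A$, viewed as a function, is \emph{not} a $(w,d)$-rigid function in the sense of Definition~\ref{def:rig_func}. This local structure also pushes back onto the rows: with $V_i=\set{z:z|_{T_i}=0}$ and $F$ the largest fiber of $h$ inside $V_i$, for all $u,v\in F$ we get $\ip{a_i,u-v}=g_i(0,h(u))-g_i(0,h(v))=0$, while $\norm{F}\ge 2^{n-\norm{T_i}-w}$; hence $a_i$ is orthogonal to a subspace of codimension at most $\norm{T_i}+w$ containing $V_i^{\perp}=\F_2^{T_i}$, so $a_i\in\F_2^{T_i}+W_i$ for some subspace $W_i$ with $\dim W_i\le w$.

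\emph{The crux, and the obstacle.} What remains is to \emph{linearize}: to go from ``$A$ is not a $(w,d)$-rigid function'' --- equivalently, the $a_i$ lie in $\F_2^{T_i}+W_i$ with $\dim W_i\le w$ --- to ``$A$ is not an $(O(w),O(d))$-rigid matrix'', i.e.\ the subspaces $W_i$ can be absorbed into a \emph{single} common subspace $U$ of dimension $O(w)$ with $a_i\in\F_2^{T_i}+U$ for every $i$ (then $A=B+C$ with $B$ the matrix of rows $b_i\in U$ and $C$ the matrix of rows $c_i$ supported on $T_i$). This is exactly the contrapositive of Conjecture~\ref{conj:rig_mat_are_rig_func} specialized to $A$, and I do not see how to carry it out unconditionally; this is the main obstacle, and is presumably why the Jukna--Schnitger conjecture is still open. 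The naive route --- restrict to a large affine subspace $X'\subseteq X$, on which each linear $f_i$ would be forced to agree with a linear function of $x|_{T_i}$ --- fails, since a subset of $\F_2^n$ of density $2^{-w}$ need not contain an affine subspace of dimension more than $O(\log n)$, far below the required $n-O(w)$. What I would try instead is to exploit that all the $W_i$ arise from the \emph{single} shared map $h\colon\F_2^n\to\F_2^w$: a Fourier-analytic or additive-combinatorial analysis of $h$ (beyond the per-row fiber argument above, which only controls each $W_i$ separately) might show that the $W_i$ can be replaced by a common $U$ of dimension $O(w)$, yielding the decomposition $A=B+C$ and closing the argument. Establishing this common-subspace statement is the one genuinely new ingredient the proof needs.
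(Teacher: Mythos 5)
Your analysis is correct, and it mirrors what the paper does. This statement is an open conjecture of Jukna and Schnitger; the paper does not prove it, but proves exactly the conditional implication you isolate: Conjecture~\ref{conj:rig_mat_are_rig_func} implies Conjecture~\ref{conj:lin_depth_two_ckt}. Your ``easy half'' (fix a large fiber $X=h^{-1}(z_0)$ of size $\geq 2^{n-w}$, on which each $f_i$ depends only on $x|_{T_i}$, so $A$ is not a $(w,d)$-rigid function) and your ``crux'' (pass from function-non-rigidity to matrix-non-rigidity, i.e.\ to $A=B+C$ with $\mathrm{rank}(B)=O(w)$ and row-sparsity $O(d)$, whence the $O(w)$-width, $O(d)$-degree linear circuit) are precisely the paper's reduction. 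Your per-row refinement---that each row $a_i$ lies in $\F_2^{T_i}+W_i$ with $\dim W_i\leq w$, obtained by pairing $u,v$ in a large fiber inside $V_i=\{z: z|_{T_i}=0\}$---is sound and goes a bit beyond what the paper records, as does your remark that restricting to an affine subspace inside $X$ cannot work (a density-$2^{-w}$ set need not contain affine subspaces of more than logarithmic dimension). The one genuinely missing ingredient you name---replacing the individual $W_i$ by a single common subspace $U$ of dimension $O(w)$---is exactly the content of Conjecture~\ref{conj:rig_mat_are_rig_func}, so you have correctly located the same obstacle the authors do.
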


\begin{observation}
Conjecture \ref{conj:rig_mat_are_rig_func} $\implies$ Conjecture \ref{conj:lin_depth_two_ckt}.
\end{observation}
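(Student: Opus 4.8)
The plan is to connect the two conjectures through the game-value characterizations of rigidity that the paper has already established, applied in the special case $n=1$ (i.e., ordinary rigid matrices versus rigid functions), and then to argue contrapositively. Suppose Conjecture~\ref{conj:rig_mat_are_rig_func} holds with universal constant $c>0$, and suppose $f:\F_2^n\to\F_2^n$ is a \emph{linear} function computable by a depth-2 circuit of width $w$ and degree $d$; I want to produce a linear depth-2 circuit of width $O(w)$ and degree $O(d)$ computing $f$. The natural move is: a depth-2 circuit of width $w$ and degree $d$ computing $f$ witnesses that $f$, viewed as a function, is not too rigid---concretely, it exhibits a way to write each output coordinate $f_i(x)=g_i(x,h_1(x),\dots,h_w(x))$ where $g_i$ depends on only $d$ of the original inputs plus all $w$ middle gates. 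I would first massage this into the exact shape of Definition~\ref{def:rig_func}: absorbing the $w$ middle gates into a ``shared'' advice/overhead, one sees that $f$ cannot be $(r,s)$-rigid for $r=O(w)$ and $s=O(d)$, because on a subset $X\subseteq\F_2^n$ of density $2^{-O(w)}$ (obtained by fixing the values of the $w$ middle gates $h_1,\dots,h_w$ to their most popular joint value, or more carefully, by a counting argument over the $2^w$ possible transcripts), each output coordinate is a function of $O(d)$ input coordinates.

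**Second**, I would invoke the hypothesis (Conjecture~\ref{conj:rig_mat_are_rig_func} in contrapositive form): since the linear function $A$ (the matrix of $f$) gives rise to a function that is \emph{not} $(cr',cs')$-rigid whenever the matrix is not $(r',s')$-rigid, and we have just shown $f$ as a function is not $(O(w),O(d))$-rigid, we conclude the \emph{matrix} $A$ is not $(\Omega(w),\Omega(d))$-rigid --- that is, $A=B+C$ with $\mathrm{rank}(B)=O(w)$ and $C$ having $O(d)$ nonzero entries per row. This is exactly a linearized structure: write $B=UV$ with $U\in\F_2^{n\times O(w)}$, $V\in\F_2^{O(w)\times n}$. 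Then the depth-2 linear circuit has middle layer computing the $O(w)$ linear forms $Vx$ (width $O(w)$), and each output gate $i$ computes $(Cx)_i + (U\cdot Vx)_i$, which is a linear combination of the $O(d)$ relevant input bits (from row $i$ of $C$) together with the $O(w)$ middle-layer outputs; the degree is $O(d)$ as required. So the contrapositive chain is: linear $f$ has a small depth-2 circuit $\Rightarrow$ $f$ as a function is non-rigid $\Rightarrow$ (by the hypothesis) the matrix $A$ is non-rigid $\Rightarrow$ $A=B+C$ with the stated parameters $\Rightarrow$ $f$ has a small \emph{linear} depth-2 circuit.

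**The main obstacle** I anticipate is the bookkeeping in the first step: turning ``computable by a depth-2 circuit of width $w$ and degree $d$'' into a clean statement of the form ``not $(O(w),O(d))$-rigid as a function'' with the right constants. One must be careful that Definition~\ref{def:rig_func} asks for a subset $X$ of size at least $2^{n-r}$ on which \emph{every} output coordinate is simultaneously a function of its own set of $s$ input coordinates --- so one cannot simply fix the middle gates to different values for different outputs. Fixing all $w$ middle-gate values to a single most-popular joint pattern loses a factor $2^{-w}$ in $|X|$, which is exactly the rank-$O(w)$ budget, so the parameters work out, but the argument that the restricted output coordinate genuinely becomes a function of just the $d$ direct-wire inputs (and not secretly of the rest, via the constraint defining $X$) needs a sentence of care: on the set $X$ where $h_1(x)=b_1,\dots,h_w(x)=b_w$, we have $f_i(x)=g_i(x,b_1,\dots,b_w)$, and $g_i(\cdot,b_1,\dots,b_w)$ depends only on the $\le d$ coordinates wired directly into $g_i$, so this is literally of the required form. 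A second, milder subtlety is that the paper's Conjecture~\ref{conj:rig_mat_are_rig_func} is stated over $\F_2$, which matches the depth-2 circuit setting (boolean, i.e.\ $\F_2$-valued), so no field-compatibility issue arises --- but I would state this explicitly. Apart from these, the proof is a short and essentially formal deduction.
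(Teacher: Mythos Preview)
Your proposal is correct and follows essentially the same approach as the paper: fix the $w$ middle-gate values to their most common joint pattern to get a set $X$ of size $\ge 2^{n-w}$ on which each output depends on at most $d$ inputs (so $f$ is not $(w,d)$-rigid as a function), apply the contrapositive of Conjecture~\ref{conj:rig_mat_are_rig_func} to conclude the matrix $A$ is not $(\Theta(w),\Theta(d))$-rigid, and then factor the low-rank part to build the linear depth-2 circuit. Your opening reference to the game-value characterization is unnecessary (you never actually use it), but the argument itself matches the paper's proof line for line.
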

\begin{proof}
	Suppose $f: \F_2^n\to \F_2^n$ is a linear function computable by a depth-2 circuit with width $w$ and degree $d$.
	Then, there exists a set $X\subseteq  \F_2^n$ of size at least $2^{n-w}$, such that for each $x\in X$, the value of the functions computed by the gates in the middle layer is the same.
	Hence, for each $x\in X$, each element of $f(x)$ can be written as a function of at most $d$ elements of $x$.
	This shows that $f$ is not an $(w,d)$-rigid function.
	Assuming Conjecture \ref{conj:rig_mat_are_rig_func}, the matrix $A\in  \F_2^{n\times n}$ for the function $f$ is not a $(cw,cd)$-rigid matrix, for some constant $c>0$.
	Then, $A = B+C$, where the rank of $B$ is at most $cw$, and $C$ has at most $cd$ non-zero entries in each row.
	Now, there exist matrices $B_1\in \F_2^{n\times cw}$ and $B_2\in \F_2^{cw\times n}$, such that $B=B_1B_2$.
	Then, $f$ is computable by a linear depth-2 circuit with width $cw$ and degree $cd$, where the middle layer computes output of the function corresponding to $B_2$.
\end{proof}


\section{Future Directions}\label{sec:matrix_problems}

In this section, we state some well known problems related to matrices. It seems interesting to study the block-rigidity of the these functions. 

\subsection{Matrix Transpose}

The matrix-transpose problem is described as follows: 

\begin{itemize}
	\item[-] \emph{Input:} A matrix $X\in \F_2^{n\times n}$ as a vector of length $n^2$ bits, in row-major order, for some $n\in\N$.
	\item[-] \emph{Output:} The matrix $X$ column-major order (or equivalently, the transpose of $X$ in row-major order).
\end{itemize}

It is well known (see \cite{DMS91} for a short proof) that the above problem can be solved on a 2-tape Turing machine in time $O(N\log{N})$, on inputs of length $N=n^2$.
We believe that this cannot be solved by Turing machines in linear-time, and that the notion of block-rigidity might be a viable approach to prove this.
Next, we observe some structural details about the problem.

The matrix-transpose problem computes a linear function, whose $N\times N$ matrix $A$ on inputs of length $N=n^2$ is described as follows. 
For each $i,j\in[n]$, let $e_{ij}\in\F_2^{n\times n}$ denote the matrix whose  $(i,j)$\textsuperscript{th} entry is 1 and rest of the entries are zero. 
The matrix $A$ is an $N\times N$ matrix made up of $n^2$ blocks, with the $(i,j)$\textsuperscript{th} block equal to $e_{ji}$. 

Using a similar argument as in Observation \ref{obs:block_rig_func_game_val}, one can show that the value of the following game captures the block-rigidity of the matrix-transpose function. 
Fix integers $n\in \N$, $1\leq s<n$, and a collection $\mc S = (S_1,\dots,S_n)$, where each $S_i \subseteq[n]$ is of size $s$.
We define an $n$-player game $\mc G_{\mc S}$ as follows:
A verifier chooses a matrix $X\in \F_2^{n\times n}$, with each entry being chosen uniformly and independently.
For each $j\in[n]$, player $j$ is given the rows of the matrix indexed by $S_j$, and they answer $y_j  \in  \F_2^n$.
The verifier accepts if and only if for each $j\in[n]$, $y_j$ equals the $j$\textsuperscript{th} column of $X$.

\begin{conjecture}
	There exists a constant $c>0$ such that the function given by the matrix $A$ is a $(c n^2, c n)$-block-rigid function. Equivalently, for each collection $\mc S$ with set sizes as $cn$, the value of the game $\mc G_{\mc S}$ is at most $2^{-cn^2}$. 
\end{conjecture}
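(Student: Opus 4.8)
The plan is to reduce the conjecture to an upper bound on $\val(\mc G_{\mc S})$ and then attack that bound with a chain-rule ``reveal'' argument; I expect this to succeed for structured collections $\mc S$ but to break down exactly in the ``spread-out'' regime, where a genuinely new idea — most naturally, a multiplayer parallel repetition statement for the independent game $\mc G_{\mc S}$ — seems unavoidable.

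By the argument of Observation~\ref{obs:block_rig_func_game_val} (with rows and columns playing the role of blocks), it suffices to produce a constant $c>0$ such that $\val(\mc G_{\mc S})\le 2^{-cn^2}$ for every collection $\mc S=(S_1,\dots,S_n)$ with $|S_j|=cn$. In $\mc G_{\mc S}$ the verifier draws a uniform $X\in\F_2^{n\times n}$, player $j$ sees the rows $X|_{S_j\times[n]}$, and must output column $j$. The key elementary fact is that for each $j$ the bits $\{X_{ij}:i\notin S_j\}$ are independent of player $j$'s entire view, so an optimal player reproduces column $j$ on the rows of $S_j$ and then guesses the other $(1-c)n$ bits, each correct with conditional probability exactly $1/2$; this already gives $\val(\mc G_{\mc S})\le 2^{-(1-c)n}$. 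The real task is to amplify $(1-c)n$ to $\Omega(n^2)$ in the exponent by accounting jointly for the $n\cdot(1-c)n$ guessed bits — which is delicate because player $j$'s view contains entries $X_{ij'}$ ($i\in S_j$) that are guessed bits for other players $j'$, so the $n$ per-player success events are not obviously independent.

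The main tool I would use is the following. Fix an ordering $\pi$ of the $n^2$ entries of $X$ and imagine revealing them one at a time. Call a pair $(i,j)$ with $i\notin S_j$ \emph{free} if every entry $X_{i'j'}$ with $i'\in S_j$ precedes $X_{ij}$ in $\pi$; when $X_{ij}$ is revealed, player $j$'s answer is already determined while $X_{ij}$ is still uniform given the past, so each free pair contributes a factor $1/2$ and the chain rule yields $\val(\mc G_{\mc S})\le 2^{-N(\pi)}$, where $N(\pi)$ is the number of free pairs — a quantity depending only on $\mc S$ and $\pi$, not on the strategy. Maximizing over $\pi$ is equivalent to finding the largest $H'\subseteq\{(i,j):i\notin S_j\}$ inducing a DAG in the digraph $D$ with an edge $(i,j)\to(i',j')$ whenever $i\in S_{j'}$. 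For ``concentrated'' collections this suffices: if there is an ordering $\sigma$ of the rows for which the last element of $S_j$ in the order $\sigma$ lies in the first $(1-c)n$ positions for most $j$ — as when the $S_j$ are intervals, with $\sigma=\mathrm{id}$ — then revealing row by row in order $\sigma$ gives $N(\sigma)=\sum_j\bigl(n-\max_{i\in S_j}\sigma^{-1}(i)\bigr)=\Omega(n^2)$, and for intervals one even gets $\val(\mc G_{\mc S})\le 2^{-(1/2-O(c))n^2}$. A clean special case: if $T\subseteq[n]$ is disjoint from $S_j$ for many $j$, then all pairs in $\{(i,j):i\in T,\ S_j\cap T=\emptyset\}$ are pairwise non-adjacent in $D$ and hence simultaneously free, giving $\val(\mc G_{\mc S})\le 2^{-|T|\cdot|\{j:S_j\cap T=\emptyset\}|}$.

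The main obstacle is the ``spread-out'' regime. If $\mc S$ is such that every $\epsilon n$-subset of $[n]$ meets almost all of the $S_j$ — which holds with high probability when each $S_j$ is a uniformly random $cn$-subset — then \emph{every} ordering $\pi$ leaves only $O(n\log n)$ free pairs (for a random $\mc S$, the first half of any topological order of an induced DAG can involve only $O(\log n)$ distinct players, since $\omega(\log n)$ random $cn$-sets already cover $[n]$), so the reveal argument provably cannot reach $2^{-\Omega(n^2)}$. Here the correlations among the players' success events must be controlled globally, and I would do this via the parallel-repetition viewpoint of Section~\ref{sec:par_rep}: $\mc G_{\mc S}$ is an \emph{independent game} — fixing $X$, each player has a unique correct answer and the verifier accepts iff all players answer correctly — and a sufficiently strong parallel repetition bound for independent games, of the type targeted by Conjecture~\ref{conj:par_rep}, would give $\val(\mc G_{\mc S})\le 2^{-\Omega(n^2)}$ outright. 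An alternative I would try is a ``defect'' version of the reveal argument that amortizes the cost of the pairs lying on short cycles of $D$ (the feedback structure created by the cyclic sharing of views between players) rather than discarding them, which would need a new information-theoretic inequality bounding how much this sharing can help. Making either precise is the step I expect to be the crux — it is, in effect, a special case of the multiplayer parallel repetition problem that motivates the rest of the paper.
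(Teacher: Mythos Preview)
The statement you are attempting is a \emph{conjecture} in the paper, not a theorem: the paper offers no proof and explicitly lists it as a future direction in Section~\ref{sec:matrix_problems}. So there is nothing to compare your attempt against on the paper's side.

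That said, your analysis is accurate and honest about its own limitations. The reduction to bounding $\val(\mc G_{\mc S})$ is exactly the content of the ``similar argument as in Observation~\ref{obs:block_rig_func_game_val}'' that the paper alludes to, and your per-player bound $\val(\mc G_{\mc S})\le 2^{-(1-c)n}$ is correct and easy. Your chain-rule/reveal argument giving $\val(\mc G_{\mc S})\le 2^{-N(\pi)}$ is valid, and the observation that it handles concentrated collections (intervals, or collections admitting a large common transversal-free set) but provably fails for random spread-out collections --- because any topological order on the induced DAG has only $O(n\log n)$ free pairs --- is a genuine and useful diagnosis of where the difficulty lies.

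Where you end up is precisely where the paper ends up: the spread-out case is, as you say, an instance of the independent-game parallel repetition problem (Conjecture~\ref{conj:par_rep}), which the paper leaves open. Your ``defect'' amortization idea is a reasonable direction but is not developed into anything concrete. In short, you have not proved the conjecture --- and you correctly recognize this --- but neither has the paper; your write-up is a clear articulation of the obstacle rather than a proof, and should be read as such.
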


We note that the above game is of independent interest from a combinatorial point of view as well. 
Basically, it asks whether there exists a large family of $n\times n$ matrices, in which each column can be represented as some function of a small fraction of the rows.
The problem of whether the matrix $A$ is a block-rigid matrix is also interesting. 
This corresponds to the players in the above game using strategies which are linear functions.

\subsection{Matrix Product}

The matrix-product problem is described as follows: 

\begin{itemize}
	\item[-] \emph{Input:} Matrices $X, Y\in \F_2^{n\times n}$ as vectors of length $n^2$ bits, in row-major order, for some $n\in\N$.
	\item[-] \emph{Output:} The matrix $Z=XY$ in row-major order.
\end{itemize}

The block-rigidity of the matrix-product function is captured by the following game:
Fix integers $n\in \N$, $1\leq s<n$, and collections $\mc S = (S_1,\dots,S_n)$, $\mc T = (T_1,\dots,T_n)$ where each $S_i,T_i \subseteq[n]$ is of size $s$.
We define a $n$-player game $\mc G_{\mc S, \mc T}$ as follows:
A verifier chooses matrices $X,Y\in \F_2^{n\times n}$, with each entry being chosen uniformly and independently.
For each $j\in[n]$, player $j$ is given the rows of the matrices $X$ and $Y$ indexed by $S_j$ and $T_j$ respectively, and they answer $y_j  \in  \F_2^n$.
The verifier accepts if and only if for each $j\in[n]$, $y_j$ equals the $j$\textsuperscript{th} row of $XY$.

\begin{conjecture}
	There exists a constant $c>0$ such that for each $\mc S, \mc T$ with set sizes as $cn$, the value of the game $\mc G_{\mc S, \mc T}$ is at most $2^{-cn^2}$. 
\end{conjecture}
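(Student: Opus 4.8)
The plan is to pass to the game formulation and then recognize the matrix-product game as a disguised parallel repetition. By the analogue of Observation~\ref{obs:block_rig_func_game_val} indicated in the excerpt, it suffices to prove $\val(\mc G_{\mc S,\mc T}) \le 2^{-cn^2}$ for every $\mc S,\mc T$ with set sizes $cn$. First I would reveal the entire matrix $X$ to every player, which can only increase the value, obtaining a game $\mc G'_{\mc T}$ in which player $j$ sees $(X, Y|_{T_j})$ and must output the $j$-th row of $XY$. Conditioned on $X$, the columns $Y^{(1)},\dots,Y^{(n)}$ of $Y$ are i.i.d.\ uniform, the $k$-th coordinate of the $j$-th row of $XY$ is $\langle X_j,Y^{(k)}\rangle$, player $j$ sees $Y^{(k)}|_{T_j}$ in the $k$-th coordinate, and the verifier accepts iff every player is correct in every coordinate. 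Hence $\mc G'_{\mc T}$ conditioned on $X$ is exactly the $n$-fold parallel repetition $\mc G_X^{\otimes n}$ of the game $\mc G_X$ (in the sense of Section~\ref{sec:par_rep}) attached to the \emph{linear} function $v\mapsto Xv$ and the sets $\mc T$; in particular $\mc G_X$ is an independent game. This gives $\val(\mc G_{\mc S,\mc T}) \le \E_X\!\sqbrac{\val(\mc G_X^{\otimes n})}$.

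Next I would invoke a strong multiplayer parallel repetition theorem --- Conjecture~\ref{conj:par_rep}, even restricted to linear functions --- to get $\val(\mc G_X^{\otimes n}) \le \val(\mc G_X)^{c'n}$ for a universal constant $c'>0$, reducing the task to bounding $\E_X[\val(\mc G_X)^{c'n}]$. Splitting according to whether $\val(\mc G_X)$ is small, it then suffices to show that, for some constant $\gamma>0$, a uniformly random $X$ satisfies $\val(\mc G_X) \le 2^{-\gamma n}$ except with probability $2^{-\Omega(n^2)}$; indeed in that case $\E_X[\val(\mc G_X)^{c'n}] \le \Pr_X[\val(\mc G_X) > 2^{-\gamma n}] + 2^{-\gamma c' n^2} \le 2^{-\Omega(n^2)}$, and the conjecture follows after fixing the set-size constant $c$ small enough.

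For this last statement I would use Observation~\ref{obs:rig_func_game_val}: $\val(\mc G_X) > 2^{-\gamma n}$ is equivalent to the existence of $V\subseteq\F_2^n$ with $\abs{V}\ge 2^{(1-\gamma)n}$ and strategies $g_1,\dots,g_n$ with $g_j(v|_{T_j}) = \langle X_j,v\rangle$ for all $v\in V$. For a \emph{fixed} $V$, consistency of the $g_j$'s forces $\langle X_j, v-v'\rangle = 0$ whenever $v,v'\in V$ agree on $T_j$; since some $T_j$-fiber of $V$ has size at least $2^{(1-\gamma-c)n}$, these differences span a subspace of $\F_2^{[n]\setminus T_j}$ of dimension at least $(1-\gamma-c)n$, which confines the row $X_j$ to a subspace of dimension at most $(\gamma+c)n$; as the rows of $X$ are independent and uniform, this event has probability at most $2^{-(1-\gamma-c)n^2}$ for that fixed $V$ (and then the $g_j$'s are determined).

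The main obstacle is the step that a union bound cannot close: there are roughly $2^{2^{\Theta(n)}}$ candidate sets $V$ but only $2^{n^2}$ matrices, so summing the per-$V$ probability $2^{-\Omega(n^2)}$ over all $V$ is hopeless. This is precisely the gap between matrix rigidity and \emph{function} rigidity --- a quantitative version of Conjecture~\ref{conj:rig_mat_are_rig_func} for random matrices against a worst-case $\mc T$ --- and I expect it to be the crux; one would want to control all large $V$ simultaneously, for instance through the combinatorial structure of their $T_j$-restrictions or through an algebraic argument, exploiting the freedom to take $\gamma$ and $c$ arbitrarily small. A secondary, orthogonal obstacle is that the previous step requires a strong parallel repetition theorem for $k$-player independent games, which is open for $k\ge 3$; however, the games $\mc G_X$ are of the simple ``bit-picking'' independent type for which such a theorem is conjectured in this paper.
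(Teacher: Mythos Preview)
The statement you are attempting to prove is labeled a \emph{Conjecture} in the paper and appears in the ``Future Directions'' section; the paper provides no proof for it. There is therefore nothing in the paper to compare your argument against.

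As for the argument itself, your reduction is sound: revealing $X$ can only help, and conditioned on $X$ the game over the independent columns of $Y$ is exactly the $n$-fold parallel repetition of the game $\mc G_{\mc T}$ attached to the linear map $v\mapsto Xv$, in the sense of Section~\ref{sec:par_rep}. Your per-$V$ probability bound is also correct: for fixed $V$, the consistency constraints force each row $X_j$ into a subspace of dimension at most $(\gamma+c)n$, giving probability at most $2^{-(1-\gamma-c)n^2}$.

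However, you have not produced a proof, and you say so yourself. Both of the obstacles you name are genuine and, as far as the paper is concerned, open:
\begin{itemize}
\item The step $\val(\mc G_X^{\otimes n}) \le \val(\mc G_X)^{c'n}$ is precisely Conjecture~\ref{conj:par_rep} (restricted to linear $f$), which the paper states but does not prove.
\item The step ``a uniformly random $X$ satisfies $\val(\mc G_X)\le 2^{-\gamma n}$ except with probability $2^{-\Omega(n^2)}$'' asks that a random matrix be an $(\gamma n, cn)$-rigid \emph{function} with overwhelming probability. This is a quantitative strengthening of Conjecture~\ref{conj:rig_mat_are_rig_func} for random matrices, and your union-bound analysis shows exactly why it is not immediate: there are $2^{2^{\Theta(n)}}$ candidate sets $V$ against only $2^{n^2}$ matrices.
\end{itemize}
So what you have written is a clean reduction of one open conjecture to two other open conjectures from the same paper, not a proof. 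That is a reasonable structural observation, but it does not resolve the statement.
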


One may change the row-major order for some (or all) of the matrices to column-major order. It is easy to modify the above game in such a case.

\bibliographystyle{alpha}
\bibliography{ref}

\end{document}